\newcommand{\R}{\mathbb{R}}
\newcommand{\A}{\mathbb{A}}
\newcommand{\B}{\mathbb{B}}
\newcommand{\D}{\mathbb{D}}
\newcommand{\M}{\mathbb{M}}
\newcommand{\I}{\mathbb{I}}
\newcommand{\mP}{\mathbf{P}}
\DeclareMathOperator*{\E}{\mathbb{E}}
\DeclareMathOperator*{\argmax}{\mathrm{argmax}}
\newcommand{\smax}{\mathrm{smax}}
\newtheorem{theorem}{Theorem}
\newtheorem{lemma}[theorem]{Lemma}
\newtheorem{corollary}[theorem]{Corollary}
\newtheorem{claim}[theorem]{Claim}
\newcommand{\eps}{\epsilon}
\newcommand{\etal}{\emph{et al.}\xspace}
\newcommand{\OPT}{\vec{x}^*}
\renewcommand{\dagger}{+}
\begin{document}

\title{Submodular Maximization with Matroid\\ and Packing Constraints in Parallel}

\author{
Alina Ene\thanks{Department of Computer Science, Boston University, {\tt aene@bu.edu}.}
\and
Huy L. Nguy\~{\^{e}}n\thanks{College of Computer and Information Science, Northeastern University, {\tt hlnguyen@cs.princeton.edu}.} 
\and
Adrian Vladu\thanks{Department of Computer Science, Boston University, {\tt avladu@bu.edu}.}
}
\date{}
\maketitle

\thispagestyle{empty}

\begin{abstract}
We consider the problem of maximizing the multilinear extension of a submodular function subject a single matroid constraint or multiple packing constraints with a small number of adaptive rounds of evaluation queries.

We obtain the first algorithms with low adaptivity for submodular maximization with a matroid constraint. Our algorithms achieve a $1-1/e-\eps$ approximation for monotone functions and a $1/e-\eps$ approximation for non-monotone functions, which nearly matches the best guarantees known in the fully adaptive setting. The number of rounds of adaptivity is $O(\log^2{n}/\eps^3)$, which is an exponential speedup over the existing algorithms.

We obtain the first parallel algorithm for non-monotone submodular maximization subject to packing constraints. Our algorithm achieves a $1/e-\eps$ approximation using $O(\log(n/\eps) \log(1/\eps) \log(n+m)/ \eps^2)$ parallel rounds, which is again an exponential speedup in parallel time over the existing algorithms. For monotone functions, we obtain a $1-1/e-\eps$ approximation in  $O(\log(n/\eps)\log(m)/\eps^2)$ parallel rounds. The number of parallel rounds of our algorithm matches that of the state of the art algorithm for solving packing LPs with a linear objective~\cite{mahoney2016approximating}.

Our results apply more generally to the problem of maximizing a diminishing returns submodular (DR-submodular) function.
\end{abstract}

\section{Introduction}

A set function $f$ on a finite ground set $V$ is submodular if it satisfies the following \emph{diminishing returns} property: $f(A \cup \{v\}) - f(A) \geq f(B \cup \{v\}) - f(B)$ for all sets $A \subseteq B$ and all elements $v \in V \setminus B$. The general problem of optimizing a submodular function subject to constraints captures many problems of interest both in theory and in practice, including maximum coverage, social welfare maximization, influence maximization in social networks, sensor placement, maximum cut, minimum cut, and facility location. Submodular optimization problems have received considerable attention over the years, leading to the development of a rich theory and applications in a wide-range of areas such as machine learning, computer vision, data mining, and economics. At a high level, these developments have established that diminishing returns often implies tractability: submodular functions can be minimized in polynomial time and they can be approximately maximized subject to a wide range of constraints.

More recently, the diminishing returns property has been generalized and studied in continuous domains~\cite{bach2016submodular,bian2016guaranteed,bian2017continuous,soma2017non,bian2018optimal,niazadeh2018optimal}. Most of these works study the following \emph{continuous diminishing returns submodular} (DR-submodular) property: a differentiable function is DR-submodular if and only if the gradient is monotone decreasing (if $\vec{x} \leq \vec{y}$ coordinate-wise, $\nabla f(\vec{x}) \geq \nabla f(\vec{y})$), and a twice-differentiable function is DR-submodular if and only if all the entries of the Hessian are non-positive ($\frac{\partial^2 f(\vec{x})}{\partial x_i \partial x_j} \leq 0$ for all $i, j \in [n]$). DR-submodular optimization bridges continuous and discrete optimization. Indeed, the multilinear extension of a submodular set function is a DR-submodular function and DR-submodular maximization was studied in the context of submodular maximization starting with the influential work of Calinescu \etal \cite{Calinescu2011}. The multilinear relaxation framework is a very general and powerful approach for submodular maximization and it has led to the current best approximation algorithms for a wide variety of constraints including cardinality constraints, knapsack constraints, matroid constraints, etc. Recent work has shown that DR-submodular optimization problems have applications beyond submodular maximization \cite{bian2016guaranteed,bian2017continuous,soma2017non,bian2018optimal}.

The problem of maximizing a DR-submodular function subject to a convex constraint is a notable example of a \emph{non-convex} optimization problem that can be solved with \emph{provable} approximation guarantees. The continuous Greedy algorithm \cite{Vondrak2008} developed in the context of the multilinear relaxation framework  applies more generally to maximizing DR-submodular functions that are monotone increasing (if $\vec{x} \leq \vec{y}$ coordinate-wise then $f(\vec{x}) \leq f(\vec{y})$). Chekuri \etal \cite{ChekuriJV15} developed algorithms for both monotone and non-monotone DR-submodular maximization subject to packing constraints that are based on the continuous Greedy and multiplicative weights update framework.

A significant drawback of these algorithms is that they are inherently sequential and adaptive. Recent lines of work have focused on addressing these shortcomings and understanding the trade-offs between approximation guarantee, parallelization, and adaptivity. These efforts have led to the development of distributed algorithms for submodular maximization in parallel models of computation such as MapReduce~\cite{Kumar2013,Mirzasoleiman2013,MirrokniZ15,Barbosa2015,Mirzasoleiman2015distributed,Barbosa2016,epasto2017bicriteria}. These works parallelize the Greedy algorithm and its variants and achieve various tradeoffs between the approximation guarantee and the number of rounds of MapReduce computation and other resources, such as memory and communication. The algorithms developed in these works run sequential Greedy algorithms on each of the machines and thus their decisions remain highly adaptive. Starting with the work of Balkanski and Singer~\cite{BS18}, there have been very recent efforts to understand the tradeoff between approximation guarantee and adaptivity for submodular maximization~\cite{BS18,EN18,BRS18,FMZ18,CQ18,BBY18}. The \emph{adaptivity} of an algorithm is the number of sequential rounds of queries it makes to the evaluation oracle of the function, where in every round the algorithm is allowed to make polynomially-many parallel queries. Most of these works have focused on monotone submodular maximization with a cardinality constraint, leading to a nearly-optimal $1-1/e-\eps$ approximation and nearly-optimal $O(\log{n}/\eps^2)$ adaptivity. Chekuri and Quanrud \cite{CQ18} consider the more general setting of multiple packing constraints, which capture several constraints of interest, including cardinality, knapsack, partition and laminar matroids, matchings, and their intersection. They give an algorithm for monotone DR-submodular maximization with packing constraints that is based on the continuous Greedy and multiplicative weights update frameworks~\cite{ChekuriJV15}. The algorithm achieves a $1-1/e-\eps$ approximation using $O(\log(n/\eps) \log^2{m} / \eps^4)$ rounds of adaptivity, where $m$ is the number of packing constraints.

This line of works is related to parallel algorithms for optimizing linear objectives subject to constraints such as packing linear programs. This problem was originally studied in the pioneering work of Luby and Nisan~\cite{LN93} with $O(\log^2 n/\eps^4)$ rounds of parallel computation. This bound was the best known for decades before the recent improvement to $O(\log^2 n/\eps^3)$ by~\cite{AZO15} and then to the state of the art $O(\log^2 n/\eps^2)$ by~\cite{mahoney2016approximating}.

Despite the significant progress, there are several significant challenges that remain in submodular maximization with low adaptivity. First, a lot these algorithms are ad-hoc and significantly exploiting the simplicity of the cardinality constraint. Second, most of these algorithms can only handle the monotone objective function and it is not clear how to get close to $1/e$ approximation, even for a single cardinality constraint. Indeed, Chekuri and Quanrud \cite{CQ18} identify the non-monotone case as a significant open problem in this area.

\subsection{Our techniques and contributions}

In this work, we address both of the above challenges, and simultaneously (1) design a novel interface between algorithms for linear objective and algorithms for submodular objective, and (2) develop generic techniques for handling non-monotone objectives. Our new techniques lead to new algorithms with nearly optimal approximations for multiple packing constraints as well as a single matroid constraint for both monotone and non-monotone submodular objectives in poly-logarithmic number of rounds of adaptivity. Note that for the case of a matroid constraint, no algorithm better than Greedy was known and our algorithms achieve an \emph{exponential} improvement in adaptivity. Before our work, it was also not known how to get close to $1/e$ approximation for non-monotone objectives in a sublinear number of rounds, even for a single cardinality constraint.

We now describe at a high level some of the main difficulties in parallelizing the existing algorithms and our approach for overcoming them. In previous continuous Greedy algorithms for a matroid constraint, the algorithm repeatedly computes a maximum weight base with respect to the current gradient and adds it to the solution. The gradient is used as a proxy for the change in the objective value when the solution changes. A problem with this approach is that the gradient changes quickly when we update a lot of coordinates and, as a result, the algorithm can only take small steps and it needs a linear number of adaptive rounds. Another approach is to change one coordinate at a time and update the gradient every time, which leads to a faster algorithm but still a linear number of adaptive rounds (the number of coordinate updates could be linear).

In contrast, our algorithms use \emph{multiplicative updates} instead of additive updates and the \emph{gradient at (an upper bound of) the future point} instead of the current point. Using the gradient at the future point ensures that we never overestimate the gain (due to the diminishing returns of the objective), which was the problem that led to the small steps in the previous works. The multiplicative updates allow us to have a safe guess for (an upper bound of) the future point without underestimating the gain too much. These techniques lead to an algorithm for monotone objectives with nearly optimal approximation and poly-logarithmic number of rounds of adaptivity. For the non-monotone case, our algorithm naturally combines with the measured continuous Greedy algorithm to obtain a $1/e$ approximation. It should be emphasized that this combination is enabled by our above technique for estimating the gain correctly even for large steps and when there are negative gains. Previous works use the gradient at the current solution, which overestimates the gain due to diminishing return and it can be detrimental when the overestimation changes negative values to positive values. This issue leads to sophisticated analyses even for a monotone objective, let alone the non-monotone case. Indeed, \cite{CQ18} identified this issue as a major obstacle preventing their techniques from being applicable to non-monotone objectives.

For the packing constraints, the previous work~\cite{CQ18} uses the approach of Young~\cite{Young01} for packing LPs and the resulting proof is fairly complex because of the overestimation of the gain from the gradient. Their algorithm needs to find other ways to make progress (via filtering) when the overestimation is beyond the tolerable error. Instead, our technique above allows for accurate estimation of the gain even when there are negative gains, which allows us to use state of the art techniques for packing LPs~\cite{mahoney2016approximating} and retain much of the simplicity and elegance of their algorithm and proof for the linear objectives. We note that our analysis is substantially more involved than in the linear setting, since the linear approximation to the submodular function given by the gradient is changing over time.

Our algorithm and analysis for packing constraints are a significant departure from previous work such as \cite{CQ18}. The algorithm of \cite{CQ18} deals with the changing objective by dividing the execution of the algorithm into phases where, within each phase, the objective value increases by $\eps$ times the optimal value. This division makes the analysis easier. For instance, the Greedy algorithm would like to pick coordinates whose gain is proportional to the difference between the optimal value and the current solution value. Within each phase, up to an $1+\eps$ factor, this threshold is the same. The total saturation of the constraints also behaves similarly. The fact that, up to an $1+\eps$ approximation, all relevant quantities are constant is extremely useful in this context because one can adapt the analysis from the case of a linear objective. However, this partition can lead to a suboptimal number of iterations: there are $\Omega(1/\eps)$ phases and the number of iterations might be suboptimal by a $\Omega(1/\eps)$ factor. In contrast, our algorithm does not use phases and our analysis has a global argument for bounding the number of iterations. A global argument exists for the linear case, but here we need a much more general argument to handle the non-linear objective. The resulting argument is intricate and requires the division of the iterations into only two parts, instead of $\Omega(1/\eps)$ phases.

Our algorithms only rely on DR-submodularity and they further draw an important distinction between convex optimization and (non-convex) DR-submodular optimization: Nemirovski~\cite{nemirovski1994parallel} showed that there are unconstrained minimization problems with a non-smooth convex objective for which any parallel algorithm requires $\Omega((n / \log{n})^{1/3} \log(1/\eps))$ rounds of adaptivity to construct an $\eps$-optimal solution, whereas the adaptive complexity of DR-submodular maximization is \emph{exponentially smaller} in the dimension. 

\subsection{Our results}

Our contributions for a matroid constraint are the following. 

\begin{theorem}
\label{intro:matroid}
For every $\eps > 0$, there is an algorithm for maximizing a DR-submodular function $f: [0, 1]^n \rightarrow \R_+$ subject to the constraint $\vec{x} \in \mP$, where $\mP$ is a matroid polytope, with the following guarantees:
\begin{itemize}
\item The algorithm is deterministic if provided oracle access for evaluating $f$ and its gradient $\nabla f$;
\item The algorithm achieves an approximation guarantee of $1-1/e-\eps$ for monotone functions and $1/e - \eps$ for non-monotone functions;
\item The number of rounds of adaptivity and evaluations of $f$ and $\nabla f$ are $O\left(\frac{\log^2{n}}{\eps^3}\right)$;
\end{itemize}
The guarantee on the number of rounds of adaptivity is under the assumption that, for every vector $\vec{x}$, the entries of the gradient $\nabla f(\vec{x})$ are at most $\mathrm{poly}(n/\eps) f(\OPT)$, where $f(\OPT)$ is the optimal solution value. This assumption is satisfied when $f$ is the multilinear extension of a submodular function, since the gradient entries are upper bounded by the singleton values.
\end{theorem}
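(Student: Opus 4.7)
The plan is to adapt the continuous Greedy framework to the low-adaptivity setting by combining \emph{multiplicative updates} with a \emph{look-ahead gradient}. I maintain a fractional solution $\vec{x}$, initialized at $\vec{0}$, and iteratively grow it along a discretized continuous Greedy trajectory while keeping the iterate admissible for $\mP$. The algorithm proceeds in threshold phases: a gradient threshold $\tau$ starts at a value $\Theta(f(\OPT))$ that upper bounds the largest gradient entry and decays geometrically by a factor $1-\eps$ after each phase, until it falls below $\eps\cdot f(\OPT)/n$. Combined with the hypothesis that gradient entries are polynomially bounded by $f(\OPT)$, this yields $O(\log(n/\eps)/\eps)$ threshold values.

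Within one threshold phase I repeatedly solve a max-weight base problem on the matroid, where the weight of coordinate $i$ is its look-ahead gradient (the gradient of $f$ evaluated at an upper bound of the next iterate) if this weight is at least $\tau$, and zero otherwise. The chosen coordinates are then scaled multiplicatively by a factor $(1+\eps)$, capped so that $\vec{x}$ stays within $\mP$. Because the look-ahead gradient is evaluated at a point dominating the prospective iterate, DR-submodularity guarantees that we never overestimate the marginal gain; this is in contrast to previous approaches, whose reliance on the gradient at the current iterate forces small, additive steps. Each coordinate can be multiplicatively scaled $O(\log(n)/\eps)$ times before either saturating its matroid constraint or dropping below $\tau$, giving $O(\log(n)/\eps)$ rounds per phase and $O(\log^2(n/\eps)/\eps^2)$ rounds across all phases; an additional $1/\eps$ factor arises from the step-size discretization of the continuous Greedy trajectory, giving the claimed $O(\log^2 n/\eps^3)$ bound.

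For the monotone case, the analysis reduces to showing that each infinitesimal slice of trajectory of length $\delta$ adds at least $(1-O(\eps))\delta(f(\OPT)-f(\vec{x}))$ to the objective. The max-weight base computed against the look-ahead gradient achieves this because a base supporting $\OPT$ would give $\langle \nabla f(\vec{x}), \OPT\rangle \geq f(\vec{x}\vee \OPT) - f(\vec{x}) \geq f(\OPT)-f(\vec{x})$ by DR-submodularity and monotonicity, and the look-ahead can only understate this quantity. Integrating along the trajectory yields $1-1/e-O(\eps)$. For the non-monotone case, I plug the same threshold engine into the measured continuous Greedy variant, scaling the multiplicative update by $(1-x_i)$ so that the iterate never approaches the boundary too quickly; the progress lemma then yields $1/e-O(\eps)$. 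The key robustness of the look-ahead technique---that it remains valid even when marginals can be negative, since we only ever \emph{underestimate} gains---is what enables this clean extension. Determinism follows because both max-weight base computation on a matroid and all gradient-driven updates are deterministic.

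The main obstacle I anticipate is the simultaneous control of two delicate interactions: (i) ensuring that multiplicative updates do not push $\vec{x}$ out of $\mP$ when many active coordinates participate in overlapping tight matroid constraints, and (ii) keeping the look-ahead vector close enough to the true next iterate that $\nabla f$ at the two points differs by at most a $(1\pm O(\eps))$ factor, so the progress bound loses only $O(\eps)$ in the ratio. Handling (i) likely requires resolving the max-weight base problem whenever an increment would saturate a matroid rank constraint, and amortizing the number of such resolves against the threshold decay and the multiplicative growth per coordinate. This amortization is the technical crux of the round-complexity bound.
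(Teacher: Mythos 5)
Your high-level plan is a close match to the paper's: multiplicative updates, a look-ahead gradient evaluated at $(1+\eps)\vec{x}$ so that DR-submodularity prevents overestimating gain, threshold bucketing with a geometrically decaying cutoff, the product bound $O(1/\eps)\cdot O(\log n/\eps)\cdot O(\log(n/\eps)/\eps)$ on rounds, and the measured-continuous-greedy modification for the non-monotone case. All of these are exactly the ingredients the paper uses, and your intuition about why the look-ahead matters (never overestimating gain, robust to negative marginals) is also the correct one.

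However, there is a genuine gap in the core of the analysis, and you have in fact identified it yourself without resolving it. The step that does not go through is your progress bound, where you write ``a base supporting $\OPT$ would give $\langle \nabla f(\vec{x}), \OPT\rangle \geq f(\vec{x}\vee\OPT)-f(\vec{x})$.'' As iterations proceed, some coordinates of $\OPT$ enter the tight set $T(\vec{x})$ of the current iterate and can no longer be increased; their look-ahead weight may be high yet the algorithm gets zero progress from them. You therefore cannot charge the algorithm's gain against the full $\langle \nabla f(\vec{x}), \OPT\rangle$. You correctly flag the difficulty of ``overlapping tight matroid constraints'' and propose ``amortizing the number of such resolves,'' but this is precisely the technical crux, and merely restating it is not a proof. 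The paper resolves it with a dedicated exchange lemma (Lemma~\ref{lem:exchange}) on polymatroids: given $a\leq b$ with $a+c\in\mP$, $b\in\mP$, one can shrink $c$ to $d$ with $b+d\in\mP$ and $\|c-d\|_1\leq\|b-a\|_1$. This lemma is used to define a decreasing sequence $\vec{o}^{(t)}$ of surrogates for $\OPT$ that stays feasible alongside the evolving iterate and whose support avoids $T(\vec{x}^{(t)})$ (properties $(P_1)$--$(P_4)$); the algorithm's per-iteration gain is then charged against the decrease $\vec{o}^{(t)}-\vec{o}^{(t+1)}$ rather than against $\OPT$ directly, and $(P_3)$ ties that decrease back to the algorithm's own update $\|\vec{x}^{(t+1)}-\vec{x}^{(t)}\|_1$. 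Without this exchange-based surrogate (or an equivalent mechanism to move $\OPT$'s mass off tight coordinates), the progress argument breaks once the tight set becomes nontrivial, which happens after $O(\log n/\eps)$ iterations at the latest.

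A secondary, smaller mismatch: the paper does not literally solve a max-weight-base problem each round; it instead picks the top $(1+\eps)$-bucket of gradient values among non-tight coordinates and increases all of those coordinates simultaneously up to the polymatroid cap. Describing this as a max-weight base computation obscures a subtlety: because a $(1+\eps)$ bucket contains elements of nearly equal weight, the bound $\langle \vec{y}, \nabla g\rangle \geq \|\vec{y}\|_1 v_2 \geq (1-\eps)\|\vec{y}\|_1 v_1$ holds coordinatewise, and together with $(P_3)$ this is what lets the argument replace the max-weight-base guarantee. If you tried to run a genuine greedy max-weight base each round you would lose the clean $O(\log n/\eps)$ bound on iterations within a phase, since the set of updated coordinates would change arbitrarily from round to round.
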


Our algorithm is the first low-adaptivity algorithm for submodular maximization with a matroid constraint, and it achieves an exponential speedup in the number of adaptive rounds over the existing algorithms with only an arbitrarily small loss in the approximation guarantee. We note that the algorithm is not a parallel algorithm in the $\mathbf{NC}$ sense, since we are working with a general polymatroid constraint and checking feasibility involves minimizing the so-called border function of the polymatroid, which is a general submodular function minimization problem (see e.g., Chapter~5 in \cite{frank2011connections}). 

Our contributions for packing constraints are the following.

\begin{theorem}
\label{intro:monotone}
For every $\eps > 0$, there is an algorithm for maximizing a monotone DR-submodular function $f: [0, 1]^n \rightarrow \R_+$ subject to multiple packing constraints $\A \vec{x} \leq \vec{1}$, where $\A \in \R^{m \times n}_+$ with the following guarantees:
\begin{itemize}
\item The algorithm is deterministic if provided oracle access for evaluating $f$ and its gradient $\nabla f$;
\item The algorithm achieves an approximation guarantee of $1-1/e-\eps$;
\item The number of rounds of adaptivity and evaluations of $f$ and $\nabla f$ are $O\left(\frac{\log(n/\eps) \log(m)}{\eps^2} \right)$;
\end{itemize}
\end{theorem}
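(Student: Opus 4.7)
The plan is to combine a continuous-Greedy ascent with the multiplicative-weights-update (MWU) packing-LP solver of~\cite{mahoney2016approximating}, substituting for the fixed linear objective of that algorithm the instantaneous gradient $\nabla f$, evaluated at a carefully chosen \emph{upper bound} on the next iterate rather than at the current one. Starting from $\vec{x}_0 = \vec{0}$ and MWU weights $w_j = 1$ on the $m$ packing constraints, each iteration (i) picks a direction $\vec{v}_t \in [0,1]^n$ that approximately maximizes $\langle \nabla f(\vec{y}_t), \vec{v}\rangle$ subject to the weighted packing constraint $(\vec{w}^{(t)})^\top \A \vec{v} \le \sum_j w_j^{(t)}$; (ii) takes a small multiplicative primal step $\vec{x}_{t+1} \gets \vec{x}_t + \eta_t \vec{v}_t$ with $\eta_t = \Theta(\eps)$; and (iii) updates $w_j \gets w_j \exp(\eps (\A\vec{v}_t)_j)$ in the Mahoney--Young style. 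The linearization point $\vec{y}_t$ is chosen as an a-priori upper bound on $\vec{x}_{t+1}$, obtained by adding the maximum possible step ahead of time.

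The correctness argument hinges on a \emph{future-gradient} lemma, which is what lets us piggyback on the linear-objective MWU analysis in spite of the nonlinearity of $f$. If $\vec{x}_t \le \vec{x}_{t+1} \le \vec{y}_t$ coordinate-wise, then DR-submodularity (monotonically decreasing gradient) together with the fundamental theorem of calculus gives
\begin{equation*}
f(\vec{x}_{t+1}) - f(\vec{x}_t) \;=\; \int_0^1 \bigl\langle \nabla f\bigl(\vec{x}_t + s(\vec{x}_{t+1}-\vec{x}_t)\bigr),\, \vec{x}_{t+1}-\vec{x}_t\bigr\rangle\, ds \;\ge\; \bigl\langle \nabla f(\vec{y}_t),\, \vec{x}_{t+1}-\vec{x}_t\bigr\rangle.
\end{equation*}
Combined with the standard monotone bound $\langle \nabla f(\vec{y}_t), \OPT\rangle \ge f(\OPT) - f(\vec{y}_t)$, this converts each primal step into a guaranteed fraction of the remaining gap $f(\OPT) - f(\vec{x}_t)$, and iterating in the multiplicative regime yields the classical $f(\vec{x}_T) \ge (1-1/e-\eps) f(\OPT)$ bound once $\vec{x}_T$ has grown to fill the scaled feasible region. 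Because the lemma \emph{never} overestimates the gain, there is no residual slack to handle via filtering steps as in~\cite{CQ18}.

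The delicate piece, and the step I expect to be the main obstacle, is the global round-counting argument. A continuous-Greedy run with multiplicative step size $\Theta(\eps)$ naturally takes $O(\log(n/\eps)/\eps)$ outer steps to drive $\vec{x}$ close to the feasible scale, and each outer step should solve its (moving) packing LP in $O(\log m/\eps)$ MWU rounds, giving the claimed $O(\log(n/\eps)\log m / \eps^2)$ bound. To avoid the $\Omega(1/\eps)$-phase decomposition of~\cite{CQ18}, I would collapse everything into a single loop and use a two-term potential: a primal part $\Phi_P$ tracking how close each $x_i$ is to its packing-feasible ceiling (for instance $\sum_i \log(1/(1-x_i))$, truncated using the assumption that gradient entries are $\mathrm{poly}(n/\eps) f(\OPT)$) together with the standard MWU dual potential $\Phi_D = \log \sum_j w_j$. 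At every iteration not already past the stopping threshold, I would argue that either $\Phi_P$ gains $\Omega(\eps)$ or $\Phi_D$ gains $\Omega(\eps)$, with an accompanying objective gain paid for by the future-gradient lemma. Since $\Phi_P = O(\log(n/\eps))$ at termination and $\Phi_D = O(\log m / \eps)$ when MWU's feasibility condition is met, this yields the total round count without phases. The subtlety is that, unlike in the linear setting, $\nabla f(\vec{y}_t)$ drifts across iterations, so the approximate-LP-solution quality must be tracked uniformly against a moving objective; this is the ``substantially more involved'' analysis that the introduction alludes to.
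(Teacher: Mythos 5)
Your core intuition is correct and matches the paper's: evaluate the gradient at an upper bound on the next iterate (the paper uses $(1+\eta)\vec{x}$ and your future-gradient inequality is exactly the one the paper proves via concavity along nonnegative directions), pair this with a softmax/MWU potential $\smax_\eta(\A\vec{x})$, and charge the softmax increase against the function gain via the ratio $\lambda = M - (1+\eta)f(\vec{x})$; integrating this over the run gives the $1-1/e-O(\eps)$ guarantee. That part of your sketch lines up with Lemmas~\ref{lem:increase-rate-mon} and~\ref{lem:infnorm}.

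However, there are two genuine gaps. First, the update you write is additive, not multiplicative. You say ``small multiplicative primal step'' but then write $\vec{x}_{t+1} \gets \vec{x}_t + \eta_t \vec{v}_t$ where $\vec{v}_t$ solves a weighted packing LP in the direction of $\nabla f$. That is a continuous-Greedy step; it does not make coordinates grow by $(1+\Theta(\eta))$ factors, so it cannot yield a polylogarithmic round count. The paper's step is explicitly $\vec{d} = \eta\,\vec{x}\circ\vec{m}$ with $\vec{m}_i = (1 - \lambda(\A^\top\nabla\smax_\eta(\A\vec{x}))_i/\vec{c}_i)\vee 0$, i.e.\ a rescaling of the current iterate, and the algorithm is a \emph{single} loop. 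Your two-level ``outer Greedy, inner MWU rounds'' structure is a different algorithm. Relatedly, you initialize at $\vec{x}_0 = \vec{0}$, which is incompatible with multiplicative growth; the paper starts from $\vec{x}_i = \eps/(n\|\A_{:i}\|_\infty)$ precisely so that every coordinate is strictly positive.

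Second, the round-counting argument you propose does not match the paper's and is not established. Your either/or claim that every iteration gains $\Omega(\eps)$ in $\Phi_P = \sum_i \log(1/(1-x_i))$ or in the dual potential $\Phi_D$ is not proved, and if it were true it would give an \emph{additive} bound $O(\log(n/\eps)/\eps + \log m/\eps^2)$, which is strictly better than the product $O(\log(n/\eps)\log m / \eps^2)$ claimed in the theorem --- a red flag. The paper's actual argument (Lemmas~\ref{lem:coord-iter-bound-mon} and~\ref{lem:large-step-iterations-mon}) is structurally different: it first bounds, for each coordinate $i$ individually, the cumulative multiplier $\sum_j \vec{m}^{(j)}_i = O(\log(n/\eps)/\eta)$ (because the coordinate grows multiplicatively between fixed endpoints), then shows via an averaging argument weighted by $\vec{x}^*$ that in each iteration $\langle \vec{1} - \lambda\,\mathbb{D}(\vec{c})^{\dagger}\A^\top\nabla\smax_\eta(\A\vec{x}), \vec{x}^*\rangle \geq \eps$, and finally splits the iterations into only two parts based on the monotone-decreasing scalar $\langle \vec{c}^{(j)}, \vec{x}^*\rangle$ (to keep the rescaling factor $\alpha$ bounded), yielding a single coordinate with $\sum_j \vec{m}^{(j)}_i = \Omega(\eps|T|)$. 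This averaging-over-$\OPT$ mechanism, not a primal/dual potential dichotomy, is what closes the loop. You flagged this as ``the step I expect to be the main obstacle''; it is, and the decomposition you sketch does not resolve it. Also, your appeal to the assumption that gradient entries are $\mathrm{poly}(n/\eps)f(\OPT)$ is unnecessary here --- that assumption appears only in the matroid result, not in the packing one, where the needed bounds on $x_i$ follow from the constraint normalization.
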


Our packing algorithms are $\mathbf{NC}$ algorithms and the number of parallel rounds matches the currently best parallel algorithm for linear packing~\cite{mahoney2016approximating}. Along with~\cite{AZO15}, the result of \cite{mahoney2016approximating} was the first improvement in decades for solving packing LPs in parallel since the original work of Luby and Nisan~\cite{LN93}. As shown in Figure~\ref{fig:comparison}, our algorithm is nearly identical to the linear packing algorithm of \cite{mahoney2016approximating}, thus suggesting that our algorithm's performance improves if the objective has additional structure. 

\begin{theorem}
\label{intro:non-monotone}
For every $\eps > 0$, there is an algorithm for maximizing a general (non-monotone) DR-submodular function $f: [0, 1]^n \rightarrow \R_+$ subject to multiple packing constraints $\A \vec{x} \leq \vec{1}$, where $\A \in \R^{m \times n}_+$ with the following guarantees:
\begin{itemize}
\item The algorithm is deterministic if provided oracle access for evaluating $f$ and its gradient $\nabla f$;
\item The algorithm achieves an approximation guarantee of $1/e-\eps$;
\item The number of rounds of adaptivity and evaluations of $f$ and $\nabla f$ are $O\left(\frac{\log(n/\eps) \log(1/\eps) \log(m + n)}{\eps^2} \right)$;
\end{itemize}
\end{theorem}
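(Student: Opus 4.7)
The plan is to lift the monotone packing algorithm of Theorem \ref{intro:monotone} to the non-monotone regime by marrying it with the measured continuous Greedy framework of Feldman--Naor--Schwartz. Concretely, I would replace the additive update $\vec{x}^{t+1} = \vec{x}^t + \delta^t \vec{y}^t$ with the measured update that acts multiplicatively on the residual, $(1 - x_i^{t+1}) = (1 - x_i^t)(1 - \delta^t y_i^t)$, while preserving the two key ingredients from the monotone case: (i) evaluating $\nabla f$ at an upper bound of the future iterate rather than the current one, and (ii) selecting the direction $\vec{y}^t$ against multiplicative weights over the packing constraints in the spirit of \cite{mahoney2016approximating}. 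Crucially, ingredient (i) still yields an under-estimate of the actual gain even when coordinates of $\nabla f$ are negative, so the non-monotone variant does not suffer from the sign-flip issue that \cite{CQ18} identified as a barrier.

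For the approximation analysis, I would use the standard non-monotone DR-submodular inequality $\nabla f(\vec{x})^\top \vec{z} \ge f(\vec{x} \vee \vec{z}) - f(\vec{x})$ applied at $\vec{z} = \OPT$, together with the observation that the measured shrinking keeps $\vec{x}^t$ dominated by the vector whose $i$-th coordinate is $1 - \prod_{s \le t}(1 - \delta^s y_i^s)$, so $x_i^t \le 1 - e^{-\sum_s \delta^s y_i^s}$. Combined with the packing feasibility of the chosen $\vec{y}^t$'s against $\A$, this yields $f(\OPT \vee \vec{x}^t) \ge (1 - \max_i x_i^t) f(\OPT) \ge e^{-1} f(\OPT)$ in the limit where $\sum_s \delta^s y_i^s \le 1$, and a Gronwall-style integration of the per-round gain delivers $f(\vec{x}^T) \ge (1/e - O(\eps)) f(\OPT)$, assuming the algorithm terminates before any coordinate is pushed past the budget $\sum_s \delta^s y_i^s = 1$.

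Packing feasibility is enforced exactly as in Theorem \ref{intro:monotone}: maintain weights $w_j^t$ on the $j$-th packing constraint, updated multiplicatively by $(1 + \eps \, \vec{a}_j^\top \delta^t \vec{y}^t)$, and choose $\vec{y}^t$ from the polytope $\{\vec{y} \in [0,1]^n : \vec{w}^{t\top} \A \vec{y} \le \vec{w}^{t\top} \vec{1}\}$ to maximize the linearized objective $\langle \nabla f(\vec{z}^t), \vec{y}\rangle$ where $\vec{z}^t$ is the anticipated future point. Because the measured update $(1-x_i^{t+1}) = (1-x_i^t)(1-\delta^t y_i^t)$ only ever shrinks $\vec{x}^t$'s feasible slack more aggressively than the monotone additive rule, the MWU-based width analysis of \cite{mahoney2016approximating} carries through with only cosmetic modification, giving an $(1 + O(\eps))$-approximately feasible solution that can be scaled down.

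The main obstacle, and the source of the extra $\log(1/\eps)$ factor compared to the monotone case, is the global iteration-counting argument. In the monotone setting of Theorem \ref{intro:monotone} one can track a single potential combining the objective gap and the total constraint saturation; in the non-monotone case the slack $(1/e)f(\OPT) - f(\vec{x}^t)$ does not decrease monotonically across a fixed linearization, because the measured shrinking factors $(1 - x_i^t)$ vary widely across coordinates and the ``scale'' of useful gradient entries decays geometrically as $\vec{x}^t$ grows. I expect to handle this by partitioning the execution into $O(\log(1/\eps))$ geometric scales of the residual $\max_i (1 - x_i^t)^{-1}$ and running the monotone-style global argument inside each scale, with a careful hand-off that reuses the MWU weights across scales. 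This scale-based argument, rather than the $\Omega(1/\eps)$ phases of \cite{CQ18}, is what keeps the round count at $O(\log(n/\eps)\log(1/\eps)\log(m+n)/\eps^2)$ instead of incurring an extra $1/\eps$.
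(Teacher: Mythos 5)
Your high-level plan matches the paper's: measured continuous Greedy with the gradient evaluated at an anticipated future point, softmax/MWU over the packing constraints, and a two-level iteration count. But there are two concrete gaps.

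First, your epoch partition by ``geometric scales of the residual $\max_i (1 - x_i^t)^{-1}$'' is the wrong quantity and would not give $O(\log(1/\eps))$ epochs in general: the algorithm can terminate because other packing rows saturate while every coordinate of $\vec{x}$ remains far from $1$, in which case the residual spans only a constant range. The paper instead partitions by the current target gap $\lambda = M(e^{-t}-2\eps) - f(\vec{x})$, which is monotonically decreasing and provably ranges over $[\eps M, M]$ by the while-loop termination condition, so a dyadic partition of $\lambda$ yields exactly $O(\log(1/\eps))$ epochs; within each epoch the renormalized vector $\vec{y} = \vec{c}/\lambda$ varies by at most a constant factor, which is what makes the two-part $T_1/T_2$ averaging argument (inherited from the monotone case) go through. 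There is no ``hand-off that reuses MWU weights across scales''; the averaging is done independently inside each epoch and the bounds are simply added.

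Second, you omit the bookkeeping that makes the measured update compatible with the softmax potential and with the $1/e$ bound. The paper augments $\A$ with the rows $\vec{x} \le (1-\eps)\vec{1}$ (hence $\log(m+n)$ in the round count) and maintains, alongside the measured iterate $\vec{x}$, an unmeasured accumulator $\vec{z}$ updated by $\vec{z} \gets \vec{z} + \vec{d}$, with time $t = \smax_\eta(\A\vec{z})$. The augmentation forces $z_i \le t$ for every $i$, which is what drives the invariant $\|\vec{x}\|_\infty \le (1+\eps)(1-e^{-t})$ (Lemma~\ref{lem:x-norm}); feasibility of $\vec{x}$ then follows from $\vec{x} \le \vec{z}$ and the bound $\smax_\eta(\A\vec{z}) \le 1-2\eps$. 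Your claim that the measured update shrinks feasible slack ``more aggressively'' than the additive rule is also backwards (it is strictly smaller, hence less aggressive), which is precisely why the paper needs the auxiliary $\vec{z}$ to run the softmax potential argument rather than arguing ``the MWU width analysis carries through with cosmetic modification.''
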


The approximation guarantee of our algorithm nearly matches the best approximation known for general submodular maximization in the sequential setting, which is $0.385 \approx 1/e + 0.0171$ \cite{buchbinder2016constrained}. Prior to our work, the best result for non-monotone submodular maximization is a $1/(2e)$-approximation for a cardinality constraint~\cite{BBY18}. No previous result was known even for a single packing constraint.

\begin{figure}[t]
\begin{minipage}{0.52\textwidth}
\begin{algorithmic}[1]
\Procedure{LinearPacking}{}
\State $\eta \gets \frac{\eps}{2\ln{m}}$
\State $\vec{x}_i \gets \frac{\eps}{n \|\A_{:i}\|_{\infty}} \forall i \in [n]$
\While{$f(\vec{x}) \leq (1 - O(\eps)) M$}
\State $\vec{c}_i \gets \nabla_i f(\vec{x})$
\State $\vec{m}_i \gets \max\left\{\left(1 - M \cdot \frac{(\A^\top \nabla \smax_{\eta}(\A\vec{x}))_i}{\vec{c}_i} \right), 0 \right\}$
\State $\vec{d} \gets \eta \vec{x} \circ \vec{m}$
\State $\vec{x} \gets \vec{x} + \vec{d}$
\EndWhile
\EndProcedure
\end{algorithmic}
\end{minipage}
\begin{minipage}{0.5\linewidth}
\begin{algorithmic}[1]
\Procedure{SubmodPacking}{}
\State $\eta \gets \frac{\eps}{2(2+\ln{m})}$ \label{line:initial-x-mon}
\State $\vec{x}_i \gets \frac{\eps}{n \|\A_{:i}\|_{\infty}} \quad \forall i \in [n]$
\While{$f(\vec{x}) \le (1 - \exp(-1+O(\eps))) M$}
  \State $\lambda \gets M - (1+\eta)f(\vec{x})$
  \State $\vec{c}_i \gets  \nabla_i f((1+\eta)\vec{x})$
  \State $\vec{m}_i \gets \max\left\{\left(1-\lambda \cdot  \frac{(\A^\top\nabla \smax_{\eta}(\A \vec{x}))_i}{\vec{c}_i}\right), 0 \right\}$
    \State $\vec{d} \gets \eta \vec{x} \circ \vec{m}$
    \State $\vec{x} \gets \vec{x} + \vec{d}$
\EndWhile
\EndProcedure
\end{algorithmic}
\end{minipage}
\caption{The algorithm on the left is the algorithm of Mahoney \etal \cite{mahoney2016approximating} for maximizing a linear function $f(\vec{x}) = \left<\vec{c}, \vec{x}\right>$. The algorithm on the right is our algorithm for monotone DR-submodular maximization. In both algorithms, $M$ is an approximate optimal solution value: $M \le f(\OPT) \le (1+\eps)M$.}
\label{fig:comparison}
\end{figure}

\section{Preliminaries}

Let $f: [0, 1]^n \rightarrow \R_+$ be a non-negative function. The function is \emph{diminishing returns submodular} (DR-submodular) if $\forall \vec{x} \leq \vec{y} \in [0, 1]^n$ (where $\leq$ is coordinate-wise), $\forall i \in [n]$, $\forall \delta \in [0, 1]$ such that $\vec{x} + \delta \vec{1}_i$ and $\vec{y} + \delta \vec{1}_i$ are still in $[0, 1]^n$, it holds
  \[f(\vec{x} + \delta \vec{1}_i) - f(\vec{x}) \geq f(\vec{y} + \delta \vec{1}_i) - f(\vec{y}),\]
where $\vec{1}_i$ is the $i$-th basis vector, i.e., the vector whose $i$-th entry is $1$ and all other entries are $0$.

If $f$ is differentiable, $f$ is DR-submodular if and only if $\nabla f(\vec{x}) \geq \nabla f(\vec{y})$ for all $\vec{x} \leq \vec{y} \in [0, 1]^n$. If $f$ is twice-differentiable, $f$ is DR-submodular if and only if all the entries of the Hessian are \emph{non-positive}, i.e., $\frac{\partial^2 f}{\partial x_i \partial x_j}(\vec{x}) \leq 0$ for all $i, j \in [n]$. 

For simplicity, throughout the paper, we assume that $f$ is differentiable. We assume that we are given black-box access to an oracle for evaluating $f$ and its gradient $\nabla f$. We extend the function $f$ to $\R^n_+$ as follows: $f(\vec{x}) = f(\vec{x} \wedge \vec{1})$, where $(\vec{x} \wedge \vec{1})_i = \min\{x_i, 1\}$.

An example of a DR-submodular function is the multilinear extension of a submodular function $g$. The multilinear extension is defined as
  \[ G(\vec{x}) = \E[g(R(\vec{x}))] = \sum_{S \subseteq V} g(S) \prod_{i \in S} \vec{x}_i \prod_{i \in V \setminus S} (1 - \vec{x}_i),\]
where $R(\vec{x})$ is a random subset of $V$ where each $i \in V$ is included independently at random with probability $\vec{x}_i$.

We now define the two problems that we consider.

\medskip
{\bf DR-submodular maximization with a polymatroid constraint.}
We consider the problem of maximizing a DR-submodular function subject to a polymatroid constraint: $\max f(\vec{x})$ subject to $\vec{x} \in \mP$, where $\mP = \{\vec{x} \colon \vec{x}(S) \leq r(S) \;\; \forall S \subseteq V, \vec{x} \geq 0\}$ and $r: 2^V \rightarrow \R_+$ is monotone, submodular, and normalized $(r(\emptyset) = 0)$. We use the notation $\vec{x}(S)$ as shorthand for $\sum_{i \in S} \vec{x}_i$, i.e., we interpret $\vec{x}$ as a modular function. When $r$ is the rank function of a matroid, $\mP$ is the matroid polytope. We refer the reader to Chapter~5 in \cite{frank2011connections} for more background on matroids and polymatroids.

This problem generalizes the problem of maximizing the multilinear extension of a submodular function subject to a matroid constraint. Rounding algorithms such as pipage rounding and swap rounding allow us to round without any loss a fractional solution in the matroid polytope, and thus our results imply low-adaptive algorithms for submodular maximization with a matroid constraint.

\medskip
{\bf DR-submodular maximization with packing constraints.} We consider the problem of maximizing a DR-submodular function subject to packing constraints $\A \vec{x} \leq \vec{1}$, where $\A \in \R^{m \times n}_+$. The problem generalizes the packing problem with a linear objective and the problem of maximizing the multilinear extension of a submodular set function subject to packing constraints.

Since we can afford an $\eps$ additive loss in the approximation, we may assume that every non-zero entry $A_{i, j}$ satisfies $\frac{\eps}{n} \leq A_{i, j} \leq \frac{n}{\eps}$. Moreover, we may assume that the optimal solution $\OPT$ satisfies $\A \OPT \leq (1 - \eps) \vec{1}$.

\medskip
{\bf Basic notation.} We use e.g. $\vec{x} = (\vec{x}_1, \dots, \vec{x}_n)$ to denote a vector in $\R^n$. We use e.g. $\A$ to denote a matrix in $\R^{m \times n}$. For a vector $\vec{a} \in \R^n$, we let $\mathbb{D}(\vec{a})$ be the $n \times n$ diagonal matrix with diagonal entries $a_i$, and $\mathbb{D}(\vec{a})^{\dagger}$ be the pseudoinverse of $\mathbb{D}(\vec{a})$, i.e., the diagonal matrix with entries $1/a_i$ if $a_i \neq 0$ and $0$ otherwise. 

We use the following vector operations: $\vec{x} \vee \vec{y}$ is the vector whose $i$-th coordinate is $\max\{x_i, y_i\}$; $\vec{x} \wedge \vec{y}$ is the vector whose $i$-th coordinate is $\min\{x_i, y_i\}$; $\vec{x} \circ \vec{y}$ is the vector whose $i$-th coordinate is $x_i \cdot y_i$. We write $\vec{x} \leq \vec{y}$ to denote that $\vec{x}_i \leq \vec{y}_i$ for all $i \in [n]$. Let $\vec{0}$ (resp. $\vec{1}$) be the $n$-dimensional all-zeros (resp. all-ones) vector. Let $\vec{1}_S \in \{0, 1\}^V$ denote the indicator vector of $S \subseteq V$, i.e., the vector that has a $1$ in entry $i$ if and only if $i \in S$.

We let $\I$ be the identity matrix. For two matrices $\A$ and $\B$, we write $\A \preceq \B$ to denote that $\B - \A \succeq 0$, i.e., $\B - \A$ is positive semidefinite.

We will use the following result that was shown in previous work~\cite{ChekuriJV15}.

\begin{lemma}[\cite{ChekuriJV15}, Lemma~7]
\label{lem:x-or-opt}
Let $f: [0, 1]^n \rightarrow \R_+$ be a DR-submodular function. For all $\vec{x}^* \in [0, 1]^n$ and $\vec{x} \in [0, 1]^n$, $f(\vec{x}^* \vee \vec{x}) \geq (1 - \|\vec{x}\|_{\infty}) f(\vec{x}^*)$.
\end{lemma}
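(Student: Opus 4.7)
Let $\vec{y} = \vec{x}^* \vee \vec{x}$, $\vec{z} = \vec{y} - \vec{x}^* = (\vec{x} - \vec{x}^*)^+$, and $\alpha = \|\vec{x}\|_\infty$. The cases $\alpha = 0$ and $\vec{z} = \vec{0}$ both force $\vec{y} = \vec{x}^*$ and the inequality is trivial, so assume $\alpha \in (0,1]$ and $\vec{z} \ne \vec{0}$. The plan is to interpolate from $\vec{x}^*$ in the coordinate-wise nonnegative direction $\vec{z}$ and exploit the fact that DR-submodularity makes $f$ concave along every nonnegative direction. Concretely, I would define $\psi(t) = f(\vec{x}^* + t\vec{z})$ on the maximal interval $[0, T^*]$ on which the argument stays in $[0,1]^n$, and record $\psi(0) = f(\vec{x}^*)$, $\psi(1) = f(\vec{y})$, and $\psi$ concave.

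The key technical step is the extrapolation bound $T^* \geq 1/\alpha$. Since $z_i = 0$ whenever $x_i \leq x^*_i$, the only binding constraints come from coordinates with $x_i > x^*_i$, giving $T^* = \min_i (1 - x^*_i)/(x_i - x^*_i)$ over such $i$. Each required inequality $(1 - x^*_i)/(x_i - x^*_i) \geq 1/\alpha$ rearranges to $(\alpha - x_i) + x^*_i(1 - \alpha) \geq 0$, which holds because $\alpha \geq x_i$ by the definition of the $\ell_\infty$ norm and $\alpha \leq 1$.

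With $T^* \geq 1/\alpha$ in hand, the conclusion follows by applying concavity of $\psi$ to the convex combination $1 = (1 - \alpha)\cdot 0 + \alpha \cdot (1/\alpha)$:
\[
  f(\vec{y}) \;=\; \psi(1) \;\geq\; (1 - \alpha)\,\psi(0) + \alpha\, \psi(1/\alpha) \;\geq\; (1 - \alpha)\, f(\vec{x}^*),
\]
where the last step uses $\psi(1/\alpha) \geq 0$ since $f$ is nonnegative. The main conceptual obstacle is identifying the correct extrapolation: one has to look \emph{past} the endpoint $\vec{y}$ (at $t = 1$) and use a point far enough along the ray to make the concavity estimate tight; the right target distance turns out to be exactly $1/\alpha$, and the inequality $T^* \geq 1/\alpha$ is the only nontrivial computation in the argument.
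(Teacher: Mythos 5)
Your proof is correct, and it is the standard argument for this fact: the paper itself does not reprove the lemma but cites it from Chekuri, Jayram, and Vondr\'ak, whose proof likewise uses concavity of $t \mapsto f(\vec{x}^* + t\,\vec{z})$ along the nonnegative direction $\vec{z} = (\vec{x}^*\vee\vec{x}) - \vec{x}^*$ together with the observation that the ray stays in $[0,1]^n$ out to $t = 1/\|\vec{x}\|_\infty$. The only small remark is that the concavity step is stated via $\psi'$, which presumes differentiability; this is consistent with the paper's standing assumption that $f$ is differentiable, and in any case the same concavity follows directly from the finite-difference form of DR-submodularity.
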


\medskip
{\bf The softmax function.}
Let $\eta \in \R_+$ and $\smax_{\eta}: \R^m_+ \rightarrow \R_+$ be the function $\smax_{\eta}(\vec{z}) = \eta \ln\left(\sum_{j = 1}^m e^{\frac{1}{\eta} z_j} \right)$. Note that
$\|\vec{z}\|_{\infty} \leq \smax_{\eta}(\vec{z}) \leq \eta \ln{m} + \|\vec{z}\|_{\infty}$. We use $\nabla \smax_{\eta}(\vec{z})$ to denote the gradient of $\smax_{\eta}$, i.e., $\nabla_j \smax_{\eta}(\vec{z}) = \frac{\partial \smax_{\eta}(\vec{z})}{\partial z_j} = \frac{e^{\frac{1}{\eta} z_j}}{\sum_{\ell = 1}^m e^{\frac{1}{\eta} z_{\ell}}}$.

We will use the following results that quantify the change in softmax due to an update. Similar results have been proved in the previous work of Mahoney \etal~\cite{mahoney2016approximating}. We include a proof in Appendix~\ref{app:omitted-proofs} for completeness.

\begin{lemma}\label{lem:smaxinc}
Let $\vec{x}, \vec{d} \in \R_+^m$, and $\A \in \R_+^{m \times n}$. If $\frac{1}{\eta} \|\A \vec{d} \|_\infty \leq 1/2$, then
\begin{align*}
\smax_\eta\left(\A(\vec{x} + \vec{d})\right) \leq \smax_\eta\left(\A \vec{x} \right) 
+ \left\langle  \A^\top \nabla\smax_\eta(\A\vec{x}), \vec{d} +  \|\A \vec{x}\|_\infty \cdot 1/\eta \cdot \D(\vec{x})^{\dagger} \cdot (\vec{d} \circ \vec{d}) \right\rangle
\end{align*}
\end{lemma}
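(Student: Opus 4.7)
The plan is to reduce the bound to two simple one-variable inequalities for the exponential and the logarithm, and then invoke Cauchy--Schwarz to relocate the quadratic remainder onto the $x$-weighted form that appears on the right-hand side.

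First I would introduce the shorthand $\vec{z} = \A\vec{x}$, $\vec{u} = \A \vec{d}$, and $\vec{p} = \nabla \smax_\eta(\vec{z})$, so that $p_\ell = e^{z_\ell/\eta}/Z$ with $Z = \sum_\ell e^{z_\ell/\eta}$ is a probability vector on $[m]$. A direct computation then gives the clean identity
\begin{equation*}
\smax_\eta(\vec{z}+\vec{u}) - \smax_\eta(\vec{z}) \;=\; \eta \ln\!\Bigl(\sum_{\ell=1}^m p_\ell\, e^{u_\ell/\eta}\Bigr).
\end{equation*}
Since $\vec{d}, \A \geq 0$ we have $\vec{u} \geq 0$, and by hypothesis $|u_\ell|/\eta \le \|\A\vec{d}\|_\infty/\eta \le 1/2$. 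I would then apply the elementary bound $e^{y} \le 1+y+y^2$ valid for $|y|\le 1$, followed by $\ln(1+t)\le t$, to obtain
\begin{equation*}
\smax_\eta(\vec{z}+\vec{u}) - \smax_\eta(\vec{z}) \;\le\; \sum_\ell p_\ell\, u_\ell \;+\; \frac{1}{\eta}\sum_\ell p_\ell\, u_\ell^2 \;=\; \langle \A^\top \vec{p},\, \vec{d}\rangle + \frac{1}{\eta}\sum_\ell p_\ell\, u_\ell^2.
\end{equation*}
The first term is already the linear piece of the target expression, so the remaining task is to show $\frac{1}{\eta}\sum_\ell p_\ell u_\ell^2 \le \frac{\|\A\vec{x}\|_\infty}{\eta}\,\langle \A^\top \vec{p},\, \D(\vec{x})^{\dagger}(\vec{d}\circ\vec{d})\rangle$.

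For this final step I would apply Cauchy--Schwarz coordinate by coordinate: for each $\ell$,
\begin{equation*}
u_\ell^2 \;=\; \Bigl(\sum_i \sqrt{A_{\ell i} x_i}\cdot \sqrt{A_{\ell i} d_i^2/x_i}\Bigr)^{\!2} \;\le\; \Bigl(\sum_i A_{\ell i} x_i\Bigr)\Bigl(\sum_i A_{\ell i}\, d_i^2/x_i\Bigr) \;=\; (\A\vec{x})_\ell \cdot \bigl(\A\, \D(\vec{x})^{\dagger}(\vec{d}\circ \vec{d})\bigr)_\ell,
\end{equation*}
interpreting the terms with $x_i=0$ via the pseudoinverse (which is consistent with the algorithm, where $d_i=0$ whenever $x_i=0$). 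Bounding $(\A\vec{x})_\ell \le \|\A\vec{x}\|_\infty$ and summing against $p_\ell$ produces exactly $\|\A\vec{x}\|_\infty \cdot \langle \A^\top \vec{p},\, \D(\vec{x})^{\dagger}(\vec{d}\circ\vec{d})\rangle$, completing the proof.

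I do not expect a significant obstacle: the quadratic approximation $e^y \le 1+y+y^2$ is exactly calibrated for the $\|\A\vec{d}\|_\infty/\eta \le 1/2$ regime, and the weighted Cauchy--Schwarz is the natural way to convert $(\A\vec{d})_\ell^2$ into the $x$-normalized second moment $\sum_i A_{\ell i} d_i^2/x_i$. The only minor subtlety is ensuring nonnegativity of $\vec{u}$ so that the intermediate applications of $\ln(1+t)\le t$ and the Cauchy--Schwarz weights $A_{\ell i}x_i$ are legal; both follow from $\A,\vec{x},\vec{d}\ge 0$.
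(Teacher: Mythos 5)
Your proof is correct, but the first half takes a genuinely different route from the paper's. The paper writes the increment as a second-order Taylor expansion with a double-integral remainder, then controls the remainder using two structural facts about softmax: the Hessian bound $\nabla^2\smax_\eta(\vec{z}) \preceq \frac{1}{\eta}\D(\nabla\smax_\eta(\vec{z}))$ and the multiplicative shift bound $\nabla_j\smax_\eta(\vec{z}+\vec{u}) \leq e^{\|\vec{u}\|_\infty/\eta}\nabla_j\smax_\eta(\vec{z})$, which together introduce a factor $\tfrac{1}{2\eta}e^{\|\A\vec{d}\|_\infty/\eta}$ that the hypothesis $\|\A\vec{d}\|_\infty/\eta\le 1/2$ brings down below $1/\eta$. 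You instead write the increment exactly as $\eta\ln\bigl(\sum_\ell p_\ell e^{u_\ell/\eta}\bigr)$ with $\vec{p}$ the softmax probability vector, and dispense with the Hessian entirely by applying the scalar inequalities $e^y\le 1+y+y^2$ (on $[0,1]$) and $\ln(1+t)\le t$ directly. This is more elementary — no integral remainder, no Hessian computation, no gradient-shift estimate — and it arrives at the same linear-plus-quadratic bound $\langle\A^\top\vec{p},\vec{d}\rangle+\frac{1}{\eta}\sum_\ell p_\ell u_\ell^2$ that the paper obtains. From there both proofs finish identically with the same per-row weighted Cauchy--Schwarz, $(\A\vec{d})_\ell^2 \le (\A\vec{x})_\ell\cdot(\A\D(\vec{x})^{\dagger}(\vec{d}\circ\vec{d}))_\ell$, followed by $\|\A\vec{x}\|_\infty$. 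Your caveat about coordinates with $x_i=0$ is the right thing to flag; the lemma as literally stated needs $d_i=0$ wherever $x_i=0$ for the Cauchy--Schwarz step to hold (the paper leaves this implicit too, and it is always satisfied by the algorithm's choice $\vec{d}=\eta\vec{x}\circ\vec{m}$).
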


This immediately yields a very useful corollary.

\begin{corollary}\label{cor:ineq}
Suppose that  $\|A \vec{x}\|_\infty \leq 1$, under the same conditions from Lemma~\ref{lem:smaxinc}. Letting $\vec{d} = \eta \M \vec{x}$, where $\M$ is a diagonal matrix such that $\M\preceq \I$, one has that
\begin{align*}
\smax_\eta\left(\A(\vec{x} + \vec{d})\right) \leq \smax_\eta\left(\A \vec{x} \right) 
+ \eta  \left\langle  \A^\top \nabla\smax_\eta(\A\vec{x}),  \M \vec{x} +  \M^2 \vec{x}   \right\rangle
\end{align*}
Furthermore, given $\lambda \in \R_+, \vec{c} \in \R_+^n$, and letting $$\M_{ii} = \left(1- \lambda\cdot\frac{ \left(\A^\top \nabla\smax_\eta(\A\vec{x})\right)_i}{\vec{c}_i }\right) \vee 0$$ one has that
\begin{align*}
\frac{\smax_\eta\left( \A (\vec{x} + \vec{d}) \right)  - \smax_\eta\left(\A \vec{x} \right) }{ \langle \vec{c}, \vec{d} \rangle } \leq \frac{1}{\lambda}\,.
\end{align*}
\end{corollary}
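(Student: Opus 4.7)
The plan is to deduce both assertions directly from Lemma~\ref{lem:smaxinc} by substituting $\vec{d} = \eta \M \vec{x}$ and simplifying using that $\M$ is diagonal. Let me write $\vec{g} := \A^\top \nabla \smax_\eta(\A \vec{x})$ for brevity; note $\vec{g} \ge 0$ since $\A \ge 0$ and $\nabla \smax_\eta \ge 0$.

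For the first bound, I would compute the quadratic correction in Lemma~\ref{lem:smaxinc}. Since $\M$ is diagonal, $\vec{d} \circ \vec{d} = \eta^2 \M^2 (\vec{x} \circ \vec{x})$, and hence $\D(\vec{x})^\dagger (\vec{d} \circ \vec{d}) = \eta^2 \M^2 \vec{x}$ (on coordinates where $\vec{x}_i = 0$ both sides vanish, so the pseudoinverse is harmless). Substituting and using $\|\A\vec{x}\|_\infty \le 1$ together with $\vec{g} \ge 0$ and $\M^2 \vec{x} \ge 0$ (the latter because $\M_{ii} \ge 0$ by the max with $0$) lets me drop the $\|\A\vec{x}\|_\infty$ factor, yielding
\[
\smax_\eta(\A(\vec{x}+\vec{d})) \le \smax_\eta(\A\vec{x}) + \eta\bigl\langle \vec{g},\, \M\vec{x} + \M^2\vec{x}\bigr\rangle.
\]

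For the second bound, I would analyze the sum coordinate by coordinate. On coordinates where $\M_{ii}=0$ both the left-hand increment and the lower bound $\vec c_i \M_{ii}$ contribute zero. On coordinates where $\M_{ii}>0$, the definition gives exactly $\M_{ii} = 1 - \lambda \vec{g}_i / \vec{c}_i$, i.e.\ $\lambda \vec{g}_i = \vec{c}_i(1-\M_{ii})$. Multiplying by $\M_{ii}+\M_{ii}^2 = \M_{ii}(1+\M_{ii})$, I get
\[
\vec{g}_i(\M_{ii}+\M_{ii}^2) \;=\; \frac{\vec{c}_i \M_{ii}(1-\M_{ii})(1+\M_{ii})}{\lambda} \;=\; \frac{\vec{c}_i \M_{ii}(1-\M_{ii}^2)}{\lambda} \;\le\; \frac{\vec{c}_i \M_{ii}}{\lambda},
\]
since $\M_{ii} \in [0,1]$. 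Summing over $i$ and multiplying by $\eta$, the first-part bound becomes $\eta \langle \vec{c}, \M \vec{x}\rangle/\lambda = \langle \vec{c}, \vec{d}\rangle/\lambda$, and dividing by $\langle \vec{c}, \vec{d}\rangle$ yields the claim.

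The only mild obstacle is bookkeeping around the pseudoinverse and the max-with-zero, both of which are resolved by checking the two cases $\vec{x}_i=0$ and $\M_{ii}=0$ separately; the remainder is straightforward algebra once the substitution $\vec{d} = \eta \M \vec{x}$ is made.
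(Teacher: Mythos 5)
Your proof is correct and takes essentially the same approach as the paper: substitute $\vec{d}=\eta\M\vec{x}$ into Lemma~\ref{lem:smaxinc}, simplify $\D(\vec{x})^\dagger(\vec{d}\circ\vec{d})$ to $\eta^2\M^2\vec{x}$, drop the $\|\A\vec{x}\|_\infty\le 1$ factor, and then verify the second inequality coordinatewise by noting that $\lambda(\A^\top\nabla\smax_\eta(\A\vec{x}))_i/\vec{c}_i = 1-\M_{ii}$ on the support of $\vec{d}$ and using $(1-\M_{ii})(1+\M_{ii})=1-\M_{ii}^2\le 1$. The only difference is that the paper additionally spells out the verification that $\frac{1}{\eta}\|\A\vec{d}\|_\infty\le 1/2$ from $0\preceq\M\preceq\I$ and $\|\A\vec{x}\|_\infty\le 1$, which you omit; this is fine since the corollary's hypotheses already assume the conditions of Lemma~\ref{lem:smaxinc}.
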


\section{Monotone maximization with a matroid constraint}
\label{sec:matroid-monotone}

In this section, we consider the problem of maximizing a monotone DR-submodular function subject to a polymatroid constraint: $\max f(\vec{x})$ subject to $\vec{x} \in \mP$, where $\mP = \{\vec{x} \colon \vec{x}(S) \leq r(S) \;\; \forall S \subseteq V, \vec{x} \geq 0\}$ and $r: 2^V \rightarrow \R_+$ is monotone, submodular, and normalized $(r(\emptyset) = 0)$. For $\alpha \in [0, 1]$, we use $\alpha \mP$ to denote the set $\alpha \mP = \{\alpha \vec{x} \colon \vec{x} \in \mP\}$. We use $\OPT$ to denote an optimal solution to $\max_{\vec{x} \in \mP} f(\vec{x})$.

Our algorithm is shown in Algorithm~\ref{alg:matroid-monotone}. The algorithm requires an $(1 + \eps)$ approximation to the optimum value, more precisely, a value $M$ such that $M \leq f(\OPT) \leq (1 + \eps) M$. An $n$-approximation to $f(\OPT)$ is $M_0 = \max_{i \in [n]} f(\vec{1}_i)$. Given this value, we can try $2\ln n / \eps$ guesses for $M$: $M_0, (1+\eps)M_0, (1+\eps)^2 M_0, \ldots$ in parallel and return the best solution from all the guesses.

In this section, we assume that, for every vector $\vec{x}$, the entries of the gradient $\nabla f(\vec{x})$ are at most $D M$, where $D = \mathrm{poly}(n/\eps)$. This assumption is satisfied when $f$ is the multilinear extension of a submodular function, since $\nabla_i f(\vec{x}) \leq f(\vec{1}_i) \leq f(\OPT) \leq (1 + \eps) M$. 

Our algorithm and analysis for \emph{non-monotone} maximization subject to a polymatroid constraint is an extension of the monotone case, and it is given in Appendix~\ref{sec:matroid-non-monotone}. 

\medskip
{\bf High level overview of the approach.}
The starting point of our algorithm is the continuous Greedy approach for submodular maximization. Algorithms based on continuous Greedy are iterative algorithms that increase the solution over time. In each iteration, the algorithms take the linear approximation given by the gradient at the current solution, and find a base of the matroid that maximizes this linear approximation. The optimum linear-weight base is given by the Greedy algorithm that considers the elements in decreasing order according to the weights and adds the current element if it is feasible to do so. Given this base $\vec{b}$, the algorithms perform the update $\vec{x} \gets \vec{x} + \eta \vec{b}$, where $\eta$ is an appropriately chosen step size.

As mentioned in the introduction, there are two important points to note about the above iterative schemes: (1) since the gradient changes very quickly, the step size $\eta$ needs to be very small to ensure a good approximation guarantee, and (2) the updates increase the coordinates of the solution by small amounts and we need polynomially many iterations to converge.

We overcome the above difficulties as follows. By choosing our update vector very carefully, we ensure that the coordinates of the solution are increasing \emph{multiplicatively}, and the algorithm converges in only a \emph{poly-logarithmic} number of iterations. The idea of using multiplicative updates is reminiscent of the work of Luby and Nisan for solving LPs in parallel, but it cannot be implemented directly in the submodular setting, since the linear approximation given by the gradient is changing too quickly. Our key idea here is that, instead of using the gradient at the current solution, we use the gradient at $(1 + \eps) \vec{x}$, which is an upper bound on the solution \emph{after} the multiplicative update. This strikes the right balance between how large the step size is and how much we are underestimating the gain.

We now briefly discuss the algorithm and in particular how to construct the update vectors. In order to obtain the nearly-optimal $1-1/e-\eps$ approximation, the algorithm builds the solution over $1/\eps$ epochs (iterations of the outer for loop), and each epoch decreases by an $\eps$ factor the distance between the optimal value and the current solution value. (The reader may find it helpful to first consider the variant of our algorithm with a single epoch, which leads to a $1/2 - \eps$ approximation.) In a given epoch, the algorithm iteratively updates the solution as follows. We first compute the gradient at the future point (line~\ref{line:gradient}). To ensure that we are updating the most valuable coordinates, we bucket the gradient values of the coordinates that can be increased into logarithmically many buckets, and we update each bucket in turn as shown on lines~\ref{line:start-for}--\ref{line:end-for}. 

A key difficulty in the analysis is to show that the above updates increase the solution very fast while at the same time the function value gain is proportional to the optimal solution. To this end, we use the structure of the polymatroid constraint to construct an evolving solution based on $\OPT$ and our current solution (see Lemma~\ref{lem:exchange} and the solutions $\vec{o}^{(t)}$ defined below). We use a subtle charging argument to relate the solution gain after each update to this evolving solution and to relate it to the optimum solution. We also use the structure of the tight sets of the polymatroid to show that the solution is increasing very fast and the algorithm terminates in a poly-logarithmic number of iterations (see Lemmas~\ref{lem:tight-set} and \ref{lem:matroid-iterations}). 

\begin{algorithm}[t]
\caption{Algorithm for monotone maximization with a polymatroid constraint.}
\begin{algorithmic}[1]
\State $M$ is an approximate optimal solution value: $M \le f(\OPT) \le (1+\eps)M$, where $\OPT \in \argmax_{\vec{x} \in \mP} f(\vec{x})$
\State $\vec{z} \gets 0$
\For{$j\gets 1$ to $1/\eps$}
\State $\vec{x}^{(0)}\gets \frac{\eps^2}{n D}\vec{1}$
\State $t\gets 0$
\State Let $g(\vec{x}) = f(\vec{x}+\vec{z})$
\While{$g(\vec{x}^{(t)}) - g(\vec{x}^{(0)}) \le \eps((1- 10\eps)M - g(\vec{x}^{(0)}))$}
  \State $\vec{c}_i \gets  \nabla_i g((1+\eps)\vec{x}^{(t)})$ \label{line:gradient}
   \State Let $T(\vec{x})$ for $\vec{x}\in \frac{\eps}{1+\eps} \mP$ be the \emph{maximal} set $S$ such that $\vec{x}(S) =\frac{\eps}{1+\eps} r(S)$
  \State Let $v_1 = \max_{i\not\in T(\vec{x}^{(t)})} \vec{c}_i$ and $v_2$ be the maximum power of $1+\eps$ such that $v_2 \le v_1$
  \State $\vec{y}\gets 0$
  \For{$i$ from $1$ to $n$} \label{line:start-for}
  \If{$\vec{c}_i\ge v_2$}
  \State Let $\vec{y}_i$ be the maximum value such that $\vec{y}_i \le \eps \vec{x}^{(t)}_i$ and $(1+\eps)(\vec{x}^{(t)}+\vec{y})\in \eps \mP$
  \EndIf 
  \EndFor \label{line:end-for}
  \State $\vec{x}^{(t+1)} \gets \vec{x}^{(t)}+\vec{y}$
  \State $t\gets t+1$
\EndWhile
\State $\vec{z}\gets \vec{z}+\vec{x}^{(t)}$
\EndFor
\State \Return $\vec{z}$
\end{algorithmic}
\label{alg:matroid-monotone}
\end{algorithm}

{\bf Analysis of the approximation guarantee.}
We will use the following lemma in the analysis of the approximation guarantee of Algorithm~\ref{alg:matroid-monotone}. We drop the vector notation for notational simplicity. 

\begin{lemma}
\label{lem:exchange}
Consider three vectors $a, b, c$ such that $a + c \in \mP$, $b \in \mP$, and $a \le b$. There exists a vector $d$ such that $0 \le d \le c$, $b + d \in \mP$, and $\|c - d\|_1 \le \|b - a\|_1$.
\end{lemma}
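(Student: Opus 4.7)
The plan is to pick $d$ to be a coordinate-wise maximal vector satisfying $0\le d\le c$ and $b+d\in\mP$. Such a maximal $d$ certainly exists (for example, take any vertex of the polytope $\{d:0\le d\le c,\,b+d\in\mP\}$, which is nonempty because $d=0$ is feasible). The goal is then to identify a single set $S^*\subseteq V$ which witnesses both that $\|c-d\|_1$ is small and that $\|b-a\|_1$ is large.

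First I would extract the structural consequence of maximality. Let $I=\{i:d_i<c_i\}$ be the set of coordinates where $d$ has not yet been pushed to $c$. For every $i\in I$, we can increase $d_i$ without violating the upper bound $d\le c$, so by maximality there must be some polymatroid constraint blocking this increase: there exists $S_i\ni i$ with $(b+d)(S_i)=r(S_i)$. The standard polymatroid uncrossing argument then applies: if $S,T$ are both tight, submodularity of $r$ together with modularity of $b+d$ gives
\[
(b+d)(S)+(b+d)(T)=(b+d)(S\cup T)+(b+d)(S\cap T)\le r(S\cup T)+r(S\cap T)\le r(S)+r(T),
\]
so equality holds throughout and both $S\cup T$ and $S\cap T$ are tight. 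In particular, $S^*:=\bigcup_{i\in I}S_i$ is tight and contains $I$.

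Once $S^*$ is in hand, the rest is a short accounting. Using $c\ge d$ and $I\subseteq S^*$,
\[
\|c-d\|_1=\sum_{i\in I}(c_i-d_i)\le\sum_{i\in S^*}(c_i-d_i)=c(S^*)-d(S^*).
\]
The hypothesis $a+c\in\mP$ gives $c(S^*)\le r(S^*)-a(S^*)$, and tightness of $S^*$ gives $d(S^*)=r(S^*)-b(S^*)$, so subtracting yields $c(S^*)-d(S^*)\le b(S^*)-a(S^*)$. Finally $b\ge a$ makes every term in $b(S^*)-a(S^*)=\sum_{i\in S^*}(b_i-a_i)$ nonnegative, hence $\le\sum_i(b_i-a_i)=\|b-a\|_1$, completing the proof.

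The main obstacle is conceptual rather than computational: one has to recognize that the right object to compare against is a single tight set containing all the ``slack'' coordinates where $d_i<c_i$, and that such a set exists thanks to the submodular uncrossing of tight sets. After that, the inequality $\|c-d\|_1\le b(S^*)-a(S^*)\le\|b-a\|_1$ is essentially automatic, and in particular does not require any exchange-style swapping of coordinates or constructive algorithm beyond ``take $d$ maximal''.
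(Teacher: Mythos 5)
Your argument is correct, and it takes a genuinely different route from the paper's. The paper proves the lemma \emph{constructively}: it initializes $\hat b = a$, $\hat d = c$ and repeatedly picks a coordinate $i$ with $\hat b_i < b_i$, increases it until blocked by a tight set $T$ of $\hat b + \hat d$, finds $j \in T$ with $\hat d_j > 0$, and performs a paired increase on $\hat b_i$ / decrease on $\hat d_j$; the bound $\|c-d\|_1 \le \|b-a\|_1$ then falls out of the invariant that the total increase of $\hat b$ dominates the total decrease of $\hat d$, and the paper also has to argue that this exchange procedure terminates. Your proof instead fixes a coordinate-wise maximal feasible $d$ up front and exhibits a single tight set $S^*$ (the union of blocking tight sets, tight by uncrossing) that witnesses the bound in one chain of inequalities: $\|c-d\|_1 \le c(S^*)-d(S^*) \le b(S^*)-a(S^*) \le \|b-a\|_1$. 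This is shorter, avoids the termination argument entirely, and makes the role of the tight-set lattice more transparent; the paper's version is explicitly algorithmic, but that extra structure is never used. Both proofs rest on the same underlying fact, namely that tight sets of a polymatroid are closed under union and intersection.

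One small slip in the write-up: the existence of a maximal $d$ is not justified by ``take any vertex of the polytope'' --- the origin $d = 0$ is a vertex of $\{d : 0 \le d \le c,\ b + d \in \mP\}$ and is typically not maximal. The existence is still immediate: take any $d \in \argmax\{\sum_i d_i : 0 \le d \le c,\ b+d \in \mP\}$, which exists by compactness and is coordinate-wise maximal, since any feasible $d'$ with $d' \ge d$ and $d' \ne d$ would strictly increase the objective. With that one-line fix, the proof is complete.
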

\begin{proof}
We let $e_i$ denote the $i$-th basis vector, i.e., the $n$-dimensional vector whose $i$-th entry is $1$ and all other entries are $0$. For a vector $x \in \mP$, we say that a set $S \subseteq V$ is \emph{$x$-tight} if $x(S) = r(S)$. The submodularity of $r$ implies that, if $S$ and $T$ are $x$-tight then $S \cup T$ and $S \cap T$ are also $x$-tight. Thus, for every element $u \in V$, there is a unique minimal $x$-tight set that contains $u$.

Let $\hat{b} = a$ and $\hat{d} = c$. We will iteratively increase $\hat{b}$ and decrease $\hat{d}$ until $\hat{b}$ becomes equal to $b$; at that point, the vector $\hat{d}$ will be the desired vector $d$. We will maintain the following invariants: $\hat{b} + \hat{d} \in \mP$, $\hat{d} \geq 0$, $\hat{b}$ can only increase, $\hat{d}$ can only decrease, and the total amount by which the coordinates of $\hat{b}$ increase is at least the total amount by which the coordinates of $\hat{d}$ decrease.

The update procedure is as follows. Let $i$ be a coordinate such that $\hat{b}_i < b_i$. Let $\delta \geq 0$ be the maximum amount such that $\hat{b} + \delta e_i + \hat{d} \in \mP$. We increase $\hat{b}_i$ by $\min\{\delta, b_i - \hat{b}_i\}$. If $\hat{b}_i$ reaches $b_i$, we are done with this coordinate and we can move on to the next coordinate that needs to be increased. Otherwise, there is a $(\hat{b} + \hat{d})$-tight set that contains $i$. Let $T$ be the minimal $(\hat{b} + \hat{d})$-tight set that contains $i$. Since $b \in \mP$, $T$ contains a coordinate $j$ for which $\hat{d}_j > 0$: since $b \geq \hat{b}$ and $b_i > \hat{b}_i$, we have $b(T) > \hat{b}(T)$; since $b$ is feasible and $T$ is $(\hat{b} + \hat{d})$-tight, we have $b(T) \leq r(T) = \hat{b}(T) + \hat{d}(T)$. Let $j \in T$ be such that $\hat{d}_j > 0$. Let $\gamma> 0$ be the maximum amount such that $\hat{b}+\hat{d}+\gamma e_i - \gamma e_j \in \mP$. Let $\delta = \min\{b_i - \hat{b}_i, \gamma, \hat{d}_j\}$. We update $\hat{b}$ and $\hat{d}$ as follows: we increase coordinate $i$ in $\hat{b}$ by $\delta$, and we decrease coordinate $j$ in $\hat{d}$ by $\delta$. Note that this update maintains the desired invariants. We repeat this procedure until $\hat{b}_i$ becomes equal to $b_i$. Note that after each step where we increase $\hat{b}$ and decrease $\hat{d}$, either 1) $\delta=b_i-\hat{b}_i$, or 2) the minimal tight set of $\hat{b}+\hat{d}$ containing $i$ shrinks (the case $\delta = \gamma$) or 3) one coordinate of $\hat{d}$ becomes $0$ (the case $\delta=\hat{d}_j$) so the procedure finishes in a finite number of steps.

When the update procedure terminates, we have $\hat{b} = b$ and we let $d = \hat{d}$. It follows from the invariants above that $d$ has the desired properties.
\end{proof}

We now show that the algorithm achieves a $1-1/e-\eps$ approximation guarantee. We consider each iteration of the algorithm and we analyze the increase in value when updating $\vec{x}^{(t)}$ to $\vec{x}^{(t + 1)}$. Using Lemma~\ref{lem:exchange}, we show that we can define a sequence of vectors $\vec{o}^{(t)}$ based on $\vec{x}^{(t)}$ and the optimal solution that allows us to relate the gain of the algorithm to the optimum value. To this end, consider iteration $j$ of the outer for loop. We define a vector $\vec{o}^{(t)}$ for each iteration $t$ of the while loop as follows. Let $\vec{x}^{(-1)} = 0$ and $\vec{o}^{(-1)} = \vec{z} \vee \OPT - \vec{z}$; note that $\vec{o}^{(-1)} \in \mP$. Suppose we have already defined a vector $\vec{o}^{(t)}$ such that $\vec{x}^{(t)} + \frac{\eps}{1 + \eps} \vec{o}^{(t)} \in \frac{\eps}{1 + \eps} \mP$. We define $\vec{o}^{(t + 1)}$ to be the vector $d$ guaranteed by Lemma~\ref{lem:exchange} for $a = \frac{1 + \eps}{\eps} \vec{x}^{(t)}$, $b = \frac{1 + \eps}{\eps} \vec{x}^{(t + 1)}$, $c = \vec{o}^{(t)}$. By Lemma~\ref{lem:exchange}, the vector $\vec{o}^{(t + 1)}$ has the following properties:
\begin{itemize}
\item[$(P_1)$] $\vec{x}^{(t + 1)} + \frac{\eps}{1+\eps} \vec{o}^{(t + 1)} \in \frac{\eps}{1 + \eps} \mP$
\item[$(P_2)$] $0 \leq \vec{o}^{(t + 1)} \leq \vec{o}^{(t)}$
\item[$(P_3)$] $\frac{\eps}{1+\eps} \|\vec{o}^{(t)} - \vec{o}^{(t + 1)}\|_1 \leq \| \vec{x}^{(t + 1)} - \vec{x}^{(t)}\|_1$
\item[$(P_4)$] $\mathrm{support}(o^{(t)}) \subseteq V \setminus T(\vec{x}^{(t)})$ by $(P_1)$, where the support is the set of non-zero coordinates 
\end{itemize}
We now use these properties to relate the algorithm's gain to that of $\vec{z} \vee \OPT - \vec{z}$. We start with the following observations. Recall that we are considering a fixed iteration $j$ of the outer for loop, and $t$ indexes the iterations of the while loop in the current iteration $j$.

\begin{lemma}
\label{lem:tight-set}
We have
\begin{itemize}
\item[(a)] For every $\vec{x} \in \frac{\eps}{1+\eps}\mP$, there is a unique maximal set $S$ satisfying $\vec{x}(S) = \frac{\eps}{1 + \eps} r(S)$.
\item[(b)] For every $t$, we have $T(\vec{x}^{(t)}) \subseteq T(\vec{x}^{(t + 1)})$.
\item[(c)] The values $v_1$ and $v_2$ are non-increasing over time.
\end{itemize}
\end{lemma}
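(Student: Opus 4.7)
The three parts are structurally quite different, so I will tackle them in order, each relying on the previous.

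For part (a), the plan is the standard tight-set closure argument: if $S$ and $S'$ are both tight for $\vec{x}$ at scale $\frac{\eps}{1+\eps}$, i.e. $\vec{x}(S)=\frac{\eps}{1+\eps}r(S)$ and $\vec{x}(S')=\frac{\eps}{1+\eps}r(S')$, then submodularity of $r$ gives $r(S\cup S')+r(S\cap S')\le r(S)+r(S')$, while modularity of $\vec{x}$ gives $\vec{x}(S\cup S')+\vec{x}(S\cap S')=\vec{x}(S)+\vec{x}(S')$. Combining with the polymatroid feasibility bounds $\vec{x}(S\cup S')\le\frac{\eps}{1+\eps}r(S\cup S')$ and $\vec{x}(S\cap S')\le\frac{\eps}{1+\eps}r(S\cap S')$, every inequality must be an equality, so $S\cup S'$ is tight too. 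The maximal tight set is therefore the union of all tight sets, and is unique.

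For part (b), the key observation is that the algorithm assigns $\vec{y}_i=0$ whenever $i\in T(\vec{x}^{(t)})$. Indeed, for $i\in T(\vec{x}^{(t)})$ we already have $\vec{x}^{(t)}(T(\vec{x}^{(t)}))=\frac{\eps}{1+\eps}r(T(\vec{x}^{(t)}))$, so increasing the $i$-th coordinate by any positive amount would make $\vec{x}^{(t)}+\vec{y}$ violate the scaled polymatroid constraint on $T(\vec{x}^{(t)})$, contradicting the requirement $(1+\eps)(\vec{x}^{(t)}+\vec{y})\in\eps\mP$. Hence $\vec{x}^{(t+1)}$ agrees with $\vec{x}^{(t)}$ on $T(\vec{x}^{(t)})$, so $\vec{x}^{(t+1)}(T(\vec{x}^{(t)}))=\frac{\eps}{1+\eps}r(T(\vec{x}^{(t)}))$, i.e., $T(\vec{x}^{(t)})$ is tight for $\vec{x}^{(t+1)}$ as well. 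By part (a), it is contained in the unique maximal tight set $T(\vec{x}^{(t+1)})$.

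For part (c), I combine DR-submodularity with (b). Since the update vector $\vec{y}$ is nonnegative, we have $\vec{x}^{(t+1)}\ge\vec{x}^{(t)}$, and therefore $(1+\eps)\vec{x}^{(t+1)}\ge(1+\eps)\vec{x}^{(t)}$. DR-submodularity of $g$ (which inherits from $f$ since $g$ is just $f$ shifted by a constant vector $\vec{z}$) then gives $\nabla g((1+\eps)\vec{x}^{(t+1)})\le \nabla g((1+\eps)\vec{x}^{(t)})$ coordinate-wise, so each $\vec{c}_i$ is non-increasing in $t$. By part (b), the set $V\setminus T(\vec{x}^{(t)})$ over which $v_1$ is maximized can only shrink. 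Both effects drive $v_1$ down, so $v_1$ is non-increasing in $t$; since $v_2$ is the largest power of $1+\eps$ not exceeding $v_1$, it is non-increasing as well.

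The only mildly delicate point is the claim in part (b) that the algorithm does not touch coordinates inside $T(\vec{x}^{(t)})$; this has to be extracted from the definition of $\vec{y}_i$ rather than from an explicit restriction, but it is immediate once one notes that the tight-set constraint is already saturated. Everything else is routine, and the three parts together set up the structural facts needed later to bound the number of iterations via Lemma~\ref{lem:matroid-iterations}.
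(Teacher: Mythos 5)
Your proof is correct and follows essentially the same approach as the paper: a closure-under-union argument for (a), persistence of tightness for (b), and DR-submodularity combined with (b) for (c). The one small local variation is in (b), where you argue that the algorithm leaves coordinates inside $T(\vec{x}^{(t)})$ unchanged (hence $T(\vec{x}^{(t)})$ stays tight), whereas the paper observes more directly that $\vec{x}^{(t+1)} \geq \vec{x}^{(t)}$ together with feasibility $\vec{x}^{(t+1)} \in \frac{\eps}{1+\eps}\mP$ forces $\vec{x}^{(t+1)}(T(\vec{x}^{(t)})) = \frac{\eps}{1+\eps} r(T(\vec{x}^{(t)}))$ without reference to the update rule; both yield the same conclusion via (a).
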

\begin{proof}
\begin{itemize}
\item[(a)] Since $r$ is submodular and $\vec{x}$ is modular, the set $\{S \subseteq V \colon \vec{x}(S) = \frac{\eps}{1 + \eps} r(S)\}$ is closed under intersection and union, and thus it has a unique maximal set.
\item[(b)] Since $x^{(t + 1)} \geq x^{(t)}$, $T(\vec{x}^{(t)})$ remains tight with respect to $\vec{x}^{(t + 1)}$, i.e., $\vec{x}^{(t + 1)}(T(\vec{x}^{(t)})) = \frac{\eps}{1+\eps} r(T(\vec{x}^{(t)}))$. Thus the maximality and uniqueness of $T(\vec{x}^{(t + 1)})$ imply that $T(\vec{x}^{(t)}) \subseteq T(\vec{x}^{(t + 1)})$.
\item[(c)] Since $g$ is DR-submodular and $\vec{x}^{(t)} \leq \vec{x}^{(t + 1)}$, we have $\nabla g((1 + \eps) \vec{x}^{(t)}) \geq \nabla g((1 + \eps) \vec{x}^{(t + 1)})$. Additionally, $T(\vec{x}^{(t)}) \subseteq T(\vec{x}^{(t + 1)})$. Thus $v_1$ is non-increasing and therefore $v_2$ is non-increasing. 
\end{itemize}
\end{proof}

We note that the gradient is non-negative when the function is monotone. We state the lemmas with $\nabla g(\cdot) \vee \vec{0}$ so that they apply to both monotone and non-monotone functions, as we will reuse them for non-monotone maximization.

\begin{lemma}
\label{lem:iteration-gain-monotone}
We have
$$g(\vec{x}^{(t+1)}) - g(\vec{x}^{(t)}) \ge  \frac{\eps (1 - \eps)}{1+\eps} \left< \nabla g((1 + \eps) \vec{x}^{(t)}) \vee \vec{0}, \vec{o}^{(t)} - \vec{o}^{(t + 1)} \right>$$
\end{lemma}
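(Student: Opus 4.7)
The plan is to lower-bound the gain $g(\vec{x}^{(t+1)}) - g(\vec{x}^{(t)})$ by $\langle \vec{c}, \vec{y}\rangle$, where $\vec{c} := \nabla g((1+\eps)\vec{x}^{(t)})$ and $\vec{y} := \vec{x}^{(t+1)} - \vec{x}^{(t)}$, and then exploit the bucketing of the update, which forces $\vec{c}$ to be essentially constant (within a factor $1+\eps$) on both $\mathrm{support}(\vec{y})$ and $\mathrm{support}(\vec{o}^{(t)} - \vec{o}^{(t+1)})$. Combining this with the $L_1$ bound $(P_3)$ yields the claim; using the gradient at the \emph{future} point $(1+\eps)\vec{x}^{(t)}$ rather than at the current point is exactly what makes the first step work for a multiplicative update.

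First I would write $g(\vec{x}^{(t+1)}) - g(\vec{x}^{(t)}) = \int_0^1 \langle \nabla g(\vec{x}^{(t)}+s\vec{y}), \vec{y}\rangle \, ds$. The update enforces $0 \le \vec{y}_i \le \eps\, \vec{x}^{(t)}_i$, so $\vec{x}^{(t)} + s\vec{y} \le (1+\eps)\vec{x}^{(t)}$ for every $s \in [0,1]$, and DR-submodularity gives $\nabla g(\vec{x}^{(t)}+s\vec{y}) \ge \vec{c}$ coordinate-wise. Since $\vec{y}\ge\vec{0}$, this yields $g(\vec{x}^{(t+1)}) - g(\vec{x}^{(t)}) \ge \langle \vec{c}, \vec{y}\rangle = \langle \vec{c}\vee\vec{0}, \vec{y}\rangle$, where the last equality holds because $\vec{c}_i \ge v_2 \ge 0$ on $\mathrm{support}(\vec{y})$.

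Next I would pin down the two supports. On $\mathrm{support}(\vec{y})$ the \textbf{if}-test enforces $\vec{c}_i \ge v_2$; moreover, the constraint $(1+\eps)\vec{x}^{(t)}(T(\vec{x}^{(t)})) = \eps\, r(T(\vec{x}^{(t)}))$ is already tight, so no coordinate of $T(\vec{x}^{(t)})$ can be increased, giving $\mathrm{support}(\vec{y}) \subseteq V\setminus T(\vec{x}^{(t)})$ and hence $\vec{c}_i \le v_1 \le (1+\eps)v_2$ there by the definition of $v_1$. For the other support, $(P_4)$ applied at time $t$ gives $\mathrm{support}(\vec{o}^{(t)}) \subseteq V\setminus T(\vec{x}^{(t)})$, while $(P_4)$ at time $t+1$ combined with $T(\vec{x}^{(t)}) \subseteq T(\vec{x}^{(t+1)})$ from Lemma~\ref{lem:tight-set}(b) gives $\mathrm{support}(\vec{o}^{(t+1)}) \subseteq V\setminus T(\vec{x}^{(t)})$, so $(\vec{c}\vee\vec{0})_i \le v_1 \le (1+\eps)v_2$ on $\mathrm{support}(\vec{o}^{(t)} - \vec{o}^{(t+1)})$ as well.

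Finally I would chain the estimates, using $(P_3)$ in the form $\|\vec{o}^{(t)} - \vec{o}^{(t+1)}\|_1 \le \frac{1+\eps}{\eps}\|\vec{y}\|_1$:
\begin{align*}
\langle \vec{c}\vee\vec{0}, \vec{o}^{(t)} - \vec{o}^{(t+1)}\rangle
&\le (1+\eps)\,v_2\,\|\vec{o}^{(t)} - \vec{o}^{(t+1)}\|_1 \\
&\le \frac{(1+\eps)^2}{\eps}\, v_2\,\|\vec{y}\|_1
\ \le\ \frac{(1+\eps)^2}{\eps}\,\langle \vec{c}\vee\vec{0}, \vec{y}\rangle,
\end{align*}
where the last inequality uses $(\vec{c}\vee\vec{0})_i \ge v_2$ on $\mathrm{support}(\vec{y})$ and $\vec{y} \ge \vec{0}$. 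Rearranging and using the elementary $\frac{1}{(1+\eps)^2} \ge \frac{1-\eps}{1+\eps}$ gives the stated $\frac{\eps(1-\eps)}{1+\eps}$ factor. The main obstacle is the support-inclusion step: aligning both $\vec{y}$ and $\vec{o}^{(t)} - \vec{o}^{(t+1)}$ inside $V\setminus T(\vec{x}^{(t)})$ is what allows comparing $\vec{c}$ on the two supports within a $(1+\eps)$ factor, and it relies crucially on the monotonicity of tight sets from Lemma~\ref{lem:tight-set}(b) together with $(P_4)$; everything else is a direct combination of DR-submodularity, the bucketing by powers of $1+\eps$, and the $L_1$ bound $(P_3)$.
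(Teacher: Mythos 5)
Your proof is correct and takes essentially the same route as the paper: both lower-bound the gain by $\langle \nabla g((1+\eps)\vec{x}^{(t)}), \vec{y}\rangle$ via concavity together with $\vec{x}^{(t+1)} \le (1+\eps)\vec{x}^{(t)}$, bound that below by $v_2\|\vec{y}\|_1$, use $(P_4)$ and $(P_2)$ to bound $\langle \nabla g((1+\eps)\vec{x}^{(t)})\vee\vec{0},\, \vec{o}^{(t)}-\vec{o}^{(t+1)}\rangle$ by $v_1\|\vec{o}^{(t)}-\vec{o}^{(t+1)}\|_1$, and close the loop with $v_1 \le (1+\eps)v_2$ and $(P_3)$. (Minor note: you only need $(P_4)$ at time $t$ together with $(P_2)$, since $\mathrm{support}(\vec{o}^{(t)}-\vec{o}^{(t+1)})\subseteq \mathrm{support}(\vec{o}^{(t)})$; the extra argument via $T(\vec{x}^{(t)})\subseteq T(\vec{x}^{(t+1)})$ is harmless but unnecessary.)
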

\begin{proof}
We have
\begin{align*}
g(\vec{x}^{(t+1)}) - g(\vec{x}^{(t)})
&\overset{(1)}{\ge} \langle \vec{y}, \nabla g(\vec{x}^{(t+1)})\rangle\\
&\overset{(2)}{\ge} \langle \vec{y}, \nabla g((1+\eps)\vec{x}^{(t)})\rangle\\
&\overset{(3)}{\ge} \|\vec{y}\|_1 v_2\\
&\overset{(4)}{\ge} \|\vec{y}\|_1 v_1 (1-\eps)\\
&\overset{(5)}{\ge} (1 - \eps)  \frac{\|\vec{y}\|_1}{\|\vec{o}^{(t)} - \vec{o}^{(t+1)}\|_1} \left< \nabla g((1 + \eps) \vec{x}^{(t)}) \vee \vec{0}, \vec{o}^{(t)} - \vec{o}^{(t + 1)} \right>\\
&\overset{(6)}{\ge} (1 - \eps)\frac{\eps}{1+\eps} \left< \nabla g((1 + \eps) \vec{x}^{(t)}) \vee \vec{0}, \vec{o}^{(t)} - \vec{o}^{(t + 1)} \right>
\end{align*}
(1) by concavity along non-negative directions. (2) is due to $\vec{x}^{(t + 1)} \leq (1 + \eps) \vec{x}^{(t)}$ and gradient monotonicity. (3) and (4) are due to the choice of $\vec{y}$, $v_2$, and $v_1$. (6) is due to property $(P_3)$.

We can show (5) as follows. By property $(P_4)$, we have $\mathrm{support}(\vec{o}^{(t)}) \subseteq V\setminus T(\vec{x}^{(t)})$. Thus we have $\nabla_i g((1 + \eps) \vec{x}^{(t)}) \leq v_1$ for all $i \in \mathrm{support}(\vec{o}^{(t)})$. By property $(P_2)$, we have $\vec{o}^{(t + 1)} \leq \vec{o}^{(t)}$. Thus
\[
\left< \nabla g((1 + \eps) \vec{x}^{(t)}), \vec{o}^{(t)} - \vec{o}^{(t + 1)} \right>
\leq v_1 \|\vec{o}^{(t)} - \vec{o}^{(t + 1)}\|_1
\]
\end{proof}

By repeatedly applying Lemma~\ref{lem:iteration-gain-monotone}, we obtain the following lemma.

\begin{lemma}
\label{lem:while-loop-gain-monotone}
We have
\begin{align*}
g(\vec{x}^{(t + 1)})-g(\vec{x}^{(0)})
&\geq  \frac{\eps (1 - \eps)}{1+\eps} \left< \nabla g((1 + \eps) \vec{x}^{(t)}) \vee \vec{0}, \vec{o}^{(0)} - \vec{o}^{(t + 1)} \right>\\
&\geq \frac{\eps (1 - \eps)}{1+\eps} \left(g(\vec{o}^{(0)} - \vec{o}^{(t + 1)} + (1 + \eps) \vec{x}^{(t)}) -  g((1+\eps)\vec{x}^{(t)})\right)
\end{align*}
\end{lemma}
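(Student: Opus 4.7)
The plan is to telescope Lemma~\ref{lem:iteration-gain-monotone} over iterations $s = 0, 1, \dots, t$, then convert the resulting inner product into a function difference via concavity along non-negative directions. Summing Lemma~\ref{lem:iteration-gain-monotone} gives
\[
g(\vec{x}^{(t+1)}) - g(\vec{x}^{(0)}) \;\geq\; \frac{\eps(1-\eps)}{1+\eps}\sum_{s=0}^{t}\bigl\langle \nabla g((1+\eps)\vec{x}^{(s)}) \vee \vec{0},\; \vec{o}^{(s)} - \vec{o}^{(s+1)}\bigr\rangle,
\]
and I would like to telescope the right-hand side using $\sum_{s=0}^{t}(\vec{o}^{(s)} - \vec{o}^{(s+1)}) = \vec{o}^{(0)} - \vec{o}^{(t+1)}$. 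The obstruction is that the gradient factor depends on $s$, so a direct telescoping of the inner product does not work.

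To fix this, I would replace each gradient $\nabla g((1+\eps)\vec{x}^{(s)})$ by the smaller gradient $\nabla g((1+\eps)\vec{x}^{(t)})$. Since $\vec{x}^{(s)} \leq \vec{x}^{(t)}$ for every $s \leq t$, DR-submodularity (monotonicity of the gradient) gives $\nabla g((1+\eps)\vec{x}^{(s)}) \geq \nabla g((1+\eps)\vec{x}^{(t)})$ coordinate-wise, and this inequality is preserved after applying the coordinate-wise $\vee\,\vec{0}$ operation. Combined with property $(P_2)$, which ensures $\vec{o}^{(s)} - \vec{o}^{(s+1)} \geq \vec{0}$, each summand in the telescoped sum can be lower bounded by $\langle \nabla g((1+\eps)\vec{x}^{(t)}) \vee \vec{0},\, \vec{o}^{(s)} - \vec{o}^{(s+1)}\rangle$. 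Now the gradient factor is constant in $s$, so the sum telescopes to $\langle \nabla g((1+\eps)\vec{x}^{(t)}) \vee \vec{0},\, \vec{o}^{(0)} - \vec{o}^{(t+1)}\rangle$, giving the first inequality.

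For the second inequality, I would use concavity of $g$ along the non-negative direction $\vec{o}^{(0)} - \vec{o}^{(t+1)} \geq \vec{0}$ (a consequence of DR-submodularity, since DR-submodular functions are concave along non-negative directions). This yields
\[
g\bigl(\vec{o}^{(0)} - \vec{o}^{(t+1)} + (1+\eps)\vec{x}^{(t)}\bigr) - g\bigl((1+\eps)\vec{x}^{(t)}\bigr) \;\leq\; \bigl\langle \nabla g((1+\eps)\vec{x}^{(t)}),\; \vec{o}^{(0)} - \vec{o}^{(t+1)}\bigr\rangle.
\]
Finally, since $\vec{o}^{(0)} - \vec{o}^{(t+1)} \geq \vec{0}$, replacing $\nabla g((1+\eps)\vec{x}^{(t)})$ by its coordinate-wise maximum with $\vec{0}$ only increases the inner product, giving the claimed lower bound on $g(\vec{o}^{(0)} - \vec{o}^{(t+1)} + (1+\eps)\vec{x}^{(t)}) - g((1+\eps)\vec{x}^{(t)})$.

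The main obstacle is the first step: recognizing that the different gradients appearing in the telescoping sum can all be uniformly lower bounded by the final gradient, and verifying that this lower bound is compatible with both the $\vee\,\vec{0}$ operation and the non-negativity of the telescoping increments $\vec{o}^{(s)} - \vec{o}^{(s+1)}$. Once this is done, the rest is a straightforward telescoping and one application of concavity, and the statement holds for both monotone and non-monotone $g$ (the $\vee\,\vec{0}$ is vacuous in the monotone case but essential for reuse in the non-monotone analysis).
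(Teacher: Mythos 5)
Your proof is correct and takes essentially the same route as the paper's: sum Lemma~\ref{lem:iteration-gain-monotone}, uniformly replace each gradient by the coordinate-wise smaller gradient at $(1+\eps)\vec{x}^{(t)}$ (valid by DR-submodularity together with $\vec{o}^{(s)} - \vec{o}^{(s+1)} \geq \vec{0}$ from $(P_2)$) so the inner products telescope, and then apply concavity along the non-negative direction $\vec{o}^{(0)} - \vec{o}^{(t+1)}$ to obtain the second inequality.
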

\begin{proof}
By Lemma~\ref{lem:iteration-gain-monotone} and DR-submodularity, we have
\begin{align*}
g(\vec{x}^{(t + 1)}) - g(\vec{x}^{(0)})
&\geq \frac{\eps (1 - \eps)}{1+\eps} \sum_{j = 0}^{t} \left< \nabla g((1 + \eps) \vec{x}^{(j)}) \vee \vec{0}, \vec{o}^{(j)} - \vec{o}^{(j + 1)} \right>\\
&\geq \frac{\eps (1 - \eps)}{1+\eps} \sum_{j = 0}^{t} \left< \nabla g((1 + \eps) \vec{x}^{(t)}) \vee \vec{0}, \vec{o}^{(j)} - \vec{o}^{(j + 1)} \right>\\
&=  \frac{\eps (1 - \eps)}{1+\eps} \left< \nabla g((1 + \eps) \vec{x}^{(t)}) \vee \vec{0}, \vec{o}^{(0)} - \vec{o}^{(t + 1)} \right>\\
&\geq \frac{\eps (1 - \eps)}{1+\eps} \left(g(\vec{o}^{(0)} - \vec{o}^{(t + 1)} + (1 + \eps) \vec{x}^{(t)}) -  g((1+\eps)\vec{x}^{(t)})\right)
\end{align*}
\end{proof}

Lemma~\ref{lem:while-loop-gain-monotone} implies that every iteration of the while loop increases at least one coordinate, and thus the while loop eventually terminates.

\begin{lemma}
In every iteration $t$, we have $T(\vec{x}^{(t)})\ne V$, i.e., some coordinate increases in each iteration.
\end{lemma}
\begin{proof}
Suppose that $\vec{x}^{(t)}(V) = \frac{\eps}{1+\eps} r(V)$. By properties $(P_1)$ and $(P_2)$, we have $\vec{o}^{(t + 1)} = 0$. By Lemma~\ref{lem:while-loop-gain-monotone} and monotonicity, we have
\begin{align*}
g(\vec{x}^{(t + 1)})-g(\vec{x}^{(0)})
&\geq \frac{\eps (1 - \eps)}{1+\eps} \left(g(\vec{o}^{(0)} + (1 + \eps) \vec{x}^{(t)}) -  g((1+\eps)\vec{x}^{(t)})\right)\\
&\geq  \frac{\eps (1 - \eps)}{1+\eps} (g(\vec{x}^{(0)} + \vec{o}^{(0)}) - g((1 + \eps) \vec{x}^{(t)}))\\
&\geq \frac{\eps (1 - \eps)}{1+\eps} \left(f(\vec{z} \vee \OPT) - 2 D \|\vec{x}^{(0)}\|_1 - g((1 + \eps) \vec{x}^{(t)}) \right)
\end{align*}
In the last inequality, we used the fact that $\| \vec{x}^{(0)} + \vec{o}^{(0)} - \vec{o}^{(-1)}\|_1 \leq 2 \|\vec{x}^{(0)}\|_1$ (by Lemma~\ref{lem:exchange}). By observing $g((1+\eps)\vec{x}^{(t)})\le (1+\eps)g(\vec{x}^{(t)})$ and adding $\eps(1-\eps)(g(\vec{x}^{(t)}) - g(\vec{x}^{(0)}))$ to both sides, we obtain:
$$(1+\eps(1-\eps)) (g(\vec{x}^{(t + 1)})-g(\vec{x}^{(0)})) \geq \frac{\eps (1 - \eps)}{1+\eps} \left(f(\vec{z} \vee \OPT) - 2 D \|\vec{x}^{(0)}\|_1 - (1+\eps)g(\vec{x}^{(0)}) \right)$$
By monotonicity, $f(\vec{z} \vee \OPT) \geq f(\OPT)$, and we also have $2D\|\vec{x}^{(0)}\|_1\le 2\eps^2M$; $\eps g(\vec{x}^{(0)}) \le 2\eps M$. Thus the gain is large enough for the while loop to terminate.
\end{proof}

Thus the algorithm terminates. Finally, we show that the solution returned is a $1-1/e-O(\eps)$ approximation.

\begin{lemma}
The solution $\vec{z}$ returned by Algorithm~\ref{alg:matroid-monotone} is feasible and it satisfies $f(\vec{z}) \geq (1 - 1/e - O(\eps)) M \geq (1 - 1/e - O(\eps)) f(\OPT)$.
\end{lemma}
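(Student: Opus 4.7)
The plan is to decouple the argument into (i) feasibility, which is essentially syntactic from the update rule, and (ii) approximation, which follows from unrolling a geometric recurrence across the $1/\eps$ epochs.

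For feasibility, I would observe that in every iteration of the inner while loop the update $\vec{y}$ is chosen so that $(1+\eps)(\vec{x}^{(t)}+\vec{y}) \in \eps\mP$; hence the terminal iterate $\vec{x}^{(t)}$ of each epoch lies in $\tfrac{\eps}{1+\eps}\mP$. Because $\mP$ is defined by the linear inequalities $\vec{x}(S)\le r(S)$ together with nonnegativity, the sum of $k$ vectors in $\alpha\mP$ lies in $k\alpha\mP$. Applying this with $k=1/\eps$ and $\alpha=\eps/(1+\eps)$ gives $\vec{z}\in \tfrac{1}{1+\eps}\mP\subseteq\mP$.

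For approximation, let $f_j$ denote $f(\vec{z})$ at the end of epoch $j$, so $f_0 = 0$. Consider one epoch, starting from $\vec{z}_{\mathrm{old}}$ and ending with $\vec{z}_{\mathrm{new}} = \vec{z}_{\mathrm{old}}+\vec{x}^{(t)}$, where $t$ is the first index at which the while-loop condition fails. Since $g(\vec{x}) = f(\vec{x}+\vec{z}_{\mathrm{old}})$, the failure of the condition rearranges to
\[
f(\vec{z}_{\mathrm{new}}) \;=\; g(\vec{x}^{(t)}) \;\ge\; (1-\eps)\,g(\vec{x}^{(0)}) + \eps(1-10\eps)M.
\]
Because $\vec{x}^{(0)} = \tfrac{\eps^2}{nD}\vec{1}$ has $\|\vec{x}^{(0)}\|_1 = \eps^2/D$ and the gradient entries are assumed to be at most $DM$, we obtain $|g(\vec{x}^{(0)}) - f(\vec{z}_{\mathrm{old}})| \le \eps^2 M$, so
\[
f_{j+1} \;\ge\; (1-\eps)\, f_j + \eps(1-10\eps)M - O(\eps^2)\,M.
\]
Setting $\Delta_j := (1-10\eps)M - f_j$ converts this into $\Delta_{j+1} \le (1-\eps)\Delta_j + O(\eps^2)M$. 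Unrolling $1/\eps$ times and summing the geometric series yields $\Delta_{1/\eps} \le (1-\eps)^{1/\eps}\Delta_0 + O(\eps)M \le M/e + O(\eps)M$, whence $f(\vec{z}) \ge (1-1/e-O(\eps))M \ge (1-1/e-O(\eps))f(\OPT)$, using $M \ge f(\OPT)/(1+\eps)$.

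The main point requiring care is that the exit inequality is actually available in every epoch, i.e., that the while loop terminates. This is guaranteed by the preceding coordinate-growth lemma together with the iteration count of Lemma~\ref{lem:matroid-iterations}. Given termination, the approximation calculation is essentially routine; the conceptual content is just that the per-epoch $O(\eps^2)M$ slack from resetting $\vec{x}^{(0)}$ is absorbed by the geometric $(1-\eps)$ damping over $1/\eps$ epochs.
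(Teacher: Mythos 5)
Your argument matches the paper's: feasibility comes from summing the per-epoch contributions (you observe $\vec{x}^{(t)} \in \tfrac{\eps}{1+\eps}\mP$, which is marginally tighter than the paper's $\vec{x}^{(t)}\in\eps\mP$, but both yield $\vec{z}\in\mP$), and the approximation comes from the same geometric recurrence on $\Delta_j = (1-10\eps)M - f(\vec{z}^{(j)})$ driven by the negation of the while-loop condition. The only cosmetic difference is that you explicitly bound $|g(\vec{x}^{(0)}) - f(\vec{z}_{\mathrm{old}})|\le\eps^2 M$ and carry an $O(\eps^2)M$ slack through the unrolling, whereas the paper uses monotonicity ($g(\vec{x}^{(0)}) \geq f(\vec{z}^{(j)})$) to drop that term; both are valid and lead to the same conclusion.
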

\begin{proof}
For each iteration $j$ of the outer for loop, let $\vec{z}^{(j)}$ be the solution $\vec{z}$ at the beginning of the iteration. Consider an iteration $j$. In each iteration $t$ of the while loop, we have $\vec{x}^{(t)} \in \eps \mP$, and thus $\vec{z}^{(j + 1)} - \vec{z}^{(j)} \in \eps \mP$. Since there are $1/\eps$ iterations, the final solution $\vec{z}$ is in $\mP$.

We now analyze the approximation guarantee. For each iteration $j$, the terminating condition of the while loop guarantees that
  \[ f(\vec{z}^{(j + 1)}) - f(\vec{z}^{(j)}) \geq \eps ((1 - 10 \eps) M - f(\vec{z}^{(j)}))\]
By rearranging, we obtain
\[ (1 - 10\eps) M - f(\vec{z}^{(j + 1)}) \leq (1 - \eps) ((1 - 10 \eps) M - f(\vec{z}^{(j)})) \] 
Thus, by induction,
\[ (1 - 10\eps) M - f(\vec{z}^{(1/\eps)}) \leq (1 - \eps)^{1/\eps} (1 - 10 \eps) M,\]
and thus we obtain a $1 - 1/e - O(\eps)$ approximation.
\end{proof}

{\bf Analysis of the number of iterations.}
We now upper bound the total number of iterations of Algorithm~\ref{alg:matroid-monotone}, and thus the number of rounds of adaptivity.

\begin{lemma}
\label{lem:matroid-iterations}
The total number of iterations and rounds of adaptivity is $O(\log^2{n}/\eps^3)$.
\end{lemma}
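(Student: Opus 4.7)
The plan is to show that each of the $1/\eps$ outer for-loop iterations contributes at most $O(\log^2 n / \eps^2)$ while-loop iterations, which combined gives the claimed bound of $O(\log^2 n / \eps^3)$. Fix one outer iteration and count the while-loop iterations within it.

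I would partition these iterations into \emph{phases} indexed by the value of the threshold $v_2$: by Lemma~\ref{lem:tight-set}(c), $v_2$ is non-increasing over iterations, and it is always a power of $(1+\eps)$. First, I would bound the number of phases. The initial $v_2$ satisfies $v_2 \le \|\nabla g\|_\infty \le DM$ using the bounded-gradient assumption. On the other hand, once $v_2$ drops below roughly $\eps M / n$, the maximum possible per-iteration gain (bounded by $v_1 \cdot \|\vec{y}\|_1 = O(v_2 \cdot \eps \cdot \vec{x}^{(t)}(V))$) can no longer support the remaining progress needed to reach the $\eps((1-10\eps)M - g(\vec{x}^{(0)}))$ termination threshold, which by Lemmas~\ref{lem:iteration-gain-monotone} and \ref{lem:while-loop-gain-monotone} forces the loop to exit. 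Hence $v_2$ lives in a range of multiplicative ratio $\mathrm{poly}(n/\eps)$ (using $D = \mathrm{poly}(n/\eps)$), so the number of phases is at most $O(\log(n/\eps)/\eps) = O(\log n /\eps)$.

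Next I would bound the number of iterations per phase by $O(\log n /\eps)$. Within a phase, $v_2$ is fixed, and the only things changing are the tight set $T(\vec{x}^{(t)})$ and the coordinate values $\vec{x}^{(t)}_i$. By the greedy construction of $\vec{y}$, each active coordinate (one with $i\not\in T(\vec{x}^{(t)})$ and $\vec{c}_i \ge v_2$) either (i) grows multiplicatively by exactly factor $(1+\eps)$, or (ii) is absorbed into $T(\vec{x}^{(t+1)})$ because the greedy update was constrained at some set containing $i$ (by Lemma~\ref{lem:tight-set}(a), this set is uniquely and consistently determined). Since each coordinate starts at $\vec{x}^{(0)}_i = \eps^2/(nD)$ and its value is bounded along each coordinate by the polymatroid rank (at most $1$ for a matroid), a single coordinate can undergo only $O(\log_{1+\eps}(nD/\eps^2)) = O(\log(n/\eps)/\eps) = O(\log n /\eps)$ multiplicative growth events across the entire outer iteration, which bounds the number of purely-growth iterations.

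The main obstacle is the precise amortization for the \emph{tight-set-growth} iterations within a phase. Naively the tight set can grow $n$ times per outer iteration, which is too coarse. The saving must come from the interaction between the multiplicative updates and the tight-set lattice (Lemma~\ref{lem:tight-set} parts (a) and (b)): whenever $T$ grows to absorb a set $S$, the coordinates in $S$ had to have grown multiplicatively to push $\vec{x}^{(t+1)}(S)$ up to $\tfrac{\eps}{1+\eps}r(S)$, and moreover once the highest-gradient coordinates are absorbed, $v_1$ is forced down, which in turn reduces $v_2$ and ends the phase. The goal is to charge each tight-set-growth iteration either to one of the $O(\log n/\eps)$ multiplicative-growth events of a specific coordinate within the phase, or to the single phase-ending drop in $v_2$, so that the total number of iterations per phase stays at $O(\log n /\eps)$. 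Multiplying the two bounds yields $O(\log n /\eps)\cdot O(\log n /\eps) = O(\log^2 n / \eps^2)$ inner iterations per outer iteration, and summing over the $1/\eps$ outer iterations produces the claimed total of $O(\log^2 n / \eps^3)$ rounds of adaptivity.
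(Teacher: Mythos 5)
Your high-level structure — $1/\eps$ outer iterations, partition the while-loop of each into phases by the value of $v_2$, bound the phase count via the $\mathrm{poly}(n/\eps)$ range of $v_2$, bound iterations per phase via multiplicative growth — is exactly the paper's. The phase-count argument (using the gradient upper bound $D M$ for the initial $v_2$, and a gain argument à la Lemma~\ref{lem:while-loop-gain-monotone} showing the loop terminates once $v_2 \lesssim \eps^2 M/n$) also matches.

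However, the per-phase bound is where you flag a gap, and it is a genuine one: your accounting splits iterations into ``purely-growth'' and ``tight-set-growth,'' and you are correct that the latter defies a naive count. The paper avoids this entirely with a simpler observation you did not make. Fix the coordinate $i$ that is increased in the \emph{last} iteration of the phase. Since $T(\vec{x}^{(t)})$ only grows and $\vec{c}$ only decreases, the set $\{i : i \notin T(\vec{x}^{(t)}),\ \vec{c}_i \geq v_2\}$ only shrinks within a phase; hence this specific $i$ was active in \emph{every} iteration of the phase. Moreover, in every iteration $t$ except the last, $\vec{y}_i = \eps\vec{x}^{(t)}_i$: if $\vec{y}_i < \eps\vec{x}^{(t)}_i$, the polymatroid constraint was tight on some set containing $i$, so $i \in T(\vec{x}^{(t+1)})$, contradicting that $i$ is still active in the last iteration. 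Thus $i$ is multiplied by $(1+\eps)$ in every iteration of the phase but the last, so the phase has at most $O(\log_{1+\eps}(nD/\eps^2)) = O(\log(n/\eps)/\eps)$ iterations. There is no separate category of ``tight-set-growth'' iterations to amortize — from the vantage point of coordinate $i$, every iteration (but the last) is a full multiplicative step, whatever happens to $T$ or to other coordinates. Inserting this observation makes your proof go through and recovers exactly the paper's bound.
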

\begin{proof}
Consider an iteration $j$ of the outer for loop. Recall that the values $v_1$ and $v_2$ are non-increasing over time, the solutions $\vec{x}^{(t)}$ are non-decreasing, the gradient values $\vec{c}$ are non-increasing (by DR-submodularity), and the sets $T(\vec{x}^{(t)})$ can only gain coordinates (by Lemma~\ref{lem:tight-set}).

Let us now divide the iterations of the while loop into phases, where a phase is comprised of the iterations with the same value $v_2$.

\begin{claim}
There are $O(\log{n}/\eps)$ iterations in a phase.
\end{claim}
\begin{proof}
Over the iterations of a phase, the set $\{i: i\not\in T(\vec{x}^{(t)})\text{ and }\vec{c}_i\ge v_2\}$ cannot gain new coordinates. Additionally, each iteration of a phase increases at least one coordinate. Thus the coordinate $i$ that is increased in the last iteration of the phase is increased in all of the iterations of the phase. Each iteration of the phase, except possibly the last iteration, increases coordinate $i$ by a multiplicative $(1 + \eps)$ factor (if we have $\vec{y}_i < \eps \vec{x}_i^{(t)}$ in some iteration $t$, after the update we cannot increase coordinate $i$ anymore and $i \in T(\vec{x}^{(t + 1)})$). We can only increase a coordinate $O(\log n/\eps)$ times before the solution goes out of $\mP$. Thus the phase has $O(\log{n}/\eps)$ iterations.
\end{proof}

\begin{claim}
The number of phases is $O(\log(n/\eps)/\eps)$.
\end{claim}
\begin{proof}
As noted earlier, the value $v_2$ is non-increasing over time. Our assumption on the gradient entries guarantees that $v_2 \leq \mathrm{poly}(n/\eps) M$. We now show that $v_2 \geq \mathrm{poly}(\eps/n) M$, since otherwise the terminating condition of the while loop is satisfied. Suppose that $v_2 \leq \frac{\eps^2}{n} M$. Since the support of $\vec{o}^{(t + 1)}$ is contained in $V \setminus T(\vec{x}^{(t)})$ (by properties $P_2$ and $P_4$), we have
\begin{align*}
\left< \nabla g((1 + \eps) \vec{x}^{(t)}), \vec{o}^{(t + 1)} \right>
\leq (1 + \eps) v_2 n
\leq (1 + \eps) \eps^2  M 
\end{align*}
By DR-submodularity and monotonicity, we have
\begin{align*}
\left< \nabla g((1 + \eps) \vec{x}^{(t)}), \vec{o}^{(0)} \right>
\geq g((1 + \eps) \vec{x}^{(t)} + \vec{o}^{(0)}) - g((1 + \eps) \vec{x}^{(t)})
\geq g(\vec{x}^{(0)} + \vec{o}^{(0)}) - g((1 + \eps) \vec{x}^{(t)})
\end{align*}
By Lemma~\ref{lem:while-loop-gain-monotone} and the above inequalities,
\begin{align*}
g(\vec{x}^{(t + 1)})-g(\vec{x}^{(0)})
&\geq  \frac{\eps (1 - \eps)}{1+\eps} \left< \nabla g((1 + \eps) \vec{x}^{(t)}) \vee \vec{0}, \vec{o}^{(0)} - \vec{o}^{(t + 1)} \right>\\
&\geq \frac{\eps (1 - \eps)}{1+\eps} \left(g(\vec{x}^{(0)} + \vec{o}^{(0)}) - g((1 + \eps) \vec{x}^{(t)})
 - (1 + \eps) \eps^2 M \right)\\
&\geq \frac{\eps (1 - \eps)}{1+\eps} \left(f(\vec{z} \vee \OPT) - 2 D \|\vec{x}^{(0)}\|_1 - g((1 + \eps) \vec{x}^{(t)})
 - (1 + \eps) \eps^2 M \right)
\end{align*}
In the last inequality, we used the fact that $\| \vec{x}^{(0)} + \vec{o}^{(0)} - \vec{o}^{(-1)}\|_1 \leq 2 \|\vec{x}^{(0)}\|_1$ (by Lemma~\ref{lem:exchange}).

By monotonicity, $f(\vec{z} \vee \OPT) \geq f(\OPT)$, and thus the gain is large enough for the while loop to terminate.

To summarize, we have $\mathrm{poly}(\eps/n) M \leq v_2 \leq \mathrm{poly}(n/\eps) M$, and thus there are $O(\log(n/\eps)/\eps)$ different values of $v_2$.
\end{proof}
Therefore the total number of iterations is $O(1/\eps) \cdot O(\log{n}/\eps) \cdot O(\log(n/\eps)/\eps) = O(\log^2{n} /\eps^3)$.
\end{proof}

\section{Monotone maximization with packing constraints}

Our algorithm for monotone DR-submodular maximization subject to packing constraints is shown in Algorithm~\ref{alg:monotone}. Similarly to the algorithm for a matroid constraint, the algorithm requires an $(1 + \eps)$ approximation to the optimum value. We obtain the value $M$ by guessing as before. If in some iteration of the algorithm the update vector $\vec{d}$ on line~\ref{line:update-vector-mon} is equal to $0$, the guessed value is too high and the algorithm can terminate.

\medskip
{\bf High level overview of the approach.}
Our algorithm is based on the Lagrangian-relaxation approach developed in the context of solving packing and covering LPs~\cite{mahoney2016approximating,Young01,LN93}, and in particular the algorithm of \cite{mahoney2016approximating} that achieves the currently best parallel running time. Analogously to \cite{mahoney2016approximating}, our algorithm replaces the hard packing constraints $\A \vec{x} \leq \vec{1}$ (equivalently, $\|\A\vec{x}\|_{\infty} \leq 1$) with the constraint $\smax(\A \vec{x}) \leq 1$, which is a smooth convex approximation to the original constraint. We can think of the $\smax(\A \vec{x})$ as a potential that measures how much progress the algorithm is making towards satisfying the constraints. The overall approach is to start with a small solution $\vec{x}$ and to iteratively increase it over time, while ensuring the objective value $f(\vec{x})$ increases sufficiently; here time is tracking the softmax potential: $t = \smax(\A \vec{x})$ and $t$ is increasing from $0$ to $1$. When the objective function is linear, the approach of \cite{mahoney2016approximating} as well as  previous works is the following. In each iteration, the algorithm of \cite{mahoney2016approximating} picks a subset of the coordinates to update based on the gradient of the softmax function, and it updates the selected coordinates in such a way that the increase in softmax is not too large.

A natural strategy for extending this approach to the submodular setting is to ``linearize'' the function: compute the gradient at the current solution and use the linear approximation to the function given by the gradient. As before, a key difficulty with this approach is that the gradient is changing very fast and we cannot make large updates. To overcome this difficulty, we use the same strategy as in the matroid case and compute the gradient at a future point. This allows us to make large, \emph{multiplicative} updates to the solution and to converge in a small number of iterations. The resulting algorithm is nearly identical to the linear algorithm (see Figure~\ref{fig:comparison}).

While the algorithm is nearly identical to the linear case, our analysis is substantially more involved due to the fact that the linear approximation is changing over time and it is a significant departure from previous works such as \cite{CQ18}. Previous algorithms, such as the linear packing algorithm of \cite{Young01} and the submodular packing of \cite{CQ18}, are divided into phases where, within each phase, the objective value increases by $\eps$ times the optimal value. This division makes the analysis easier. For instance, the Greedy algorithm would like to pick coordinates whose gain is proportional to the difference between the optimal value and the current solution value. Within each phase, up to an $1+\eps$ factor, this threshold is the same. The total saturation of the constraints also behaves similarly. The fact that, up to an $1+\eps$ approximation, all relevant quantities are constant is extremely useful in this context because one can adapt the analysis from the case of a linear objective. However, this partition can lead to a suboptimal number of iterations: there are $\Omega(1/\eps)$ phases and the number of iterations might be suboptimal by a $\Omega(1/\eps)$ factor. Instead, we remove the phases and develop a global argument on the number of iterations. A global argument exists for the linear case~\cite{mahoney2016approximating} but here we need a more general argument with varying selection thresholds over the iterations and varying contributions from different coordinates over the iterations. Intuitively, a coordinate is important if its marginal gain on the current solution is high. We can show that, on aggregate, the coordinates of the optimal solution are important. However, over the iterations, different coordinates might be important at different times. In contrast, in the linear case, the relative importance is exactly the same over the iterations. Thus, in the linear case, we know that the algorithm keeps increasing, say, the most important coordinate in the optimal solution and that coordinate cannot exceed $1$ so the algorithm finishes quickly. In our case, this is not clear because different coordinates are important at different times so the algorithm might increase different coordinates in different iterations and can prolong the process. Nonetheless, because the solution always increases and by the diminishing return property, we know that the importance of all coordinates decreases over time. We use this property to relate the contribution from different iterations and effectively argue that the algorithm cannot keep selecting different coordinates at different times. The precise argument is intricate and requires the division of the iterations into only two parts, instead of $\Omega(1/\eps)$ phases (see Lemma~\ref{lem:large-step-iterations-mon}).
 
\begin{algorithm}[ht]
\caption{Algorithm for $\max_{\vec{x} \in [0, 1]^n \colon \A \vec{x} \leq (1-\eps)\vec{1}} f(\vec{x})$, where $f$ is a non-negative monotone DR-submodular function and $\A \in \R^{m \times n}_+$. }
\label{alg:monotone}
\begin{algorithmic}[1]
\State $\eta \gets \frac{\eps}{2(2+\ln{m})}$ \label{line:initial-x-mon}
\State $M$ is an approximate optimal solution value: $M \le f(\OPT) \le (1+\eps)M$, where $\OPT \in \argmax_{\vec{x} \colon \A \vec{x} \leq (1 - \eps)\vec{1}} f(\vec{x})$
\State $\vec{x}_i \gets \frac{\eps}{n \|\A_{:i}\|_{\infty}} \quad \forall i \in [n]$ \label{line:initial-x}
\While{$f(\vec{x}) \le (1 - \exp(-1+10\eps)) M$}
  \State $\lambda \gets M - (1+\eta)f(\vec{x})$
  \State $\vec{c}_i \gets  \nabla_i f((1+\eta)\vec{x}) \quad \forall i \in [n]$
  \State $\vec{m}_i \gets \left(1-\lambda \cdot  \frac{(\A^\top\nabla \smax_{\eta}(\A \vec{x}))_i}{\vec{c}_i}\right)\vee 0$ for all $i$ with $\vec{c}_i\ne 0$, and $\vec{m}_i = 0$ if $\vec{c}_i=0$\label{line:scaling-mon}
    \State $\vec{d} \gets \eta \vec{x} \circ \vec{m}$ \label{line:update-vector-mon}
    \State $\vec{x} \gets \vec{x} + \vec{d}$
\EndWhile
\end{algorithmic}
\end{algorithm}

The following lemma shows that every iteration makes progress at the right rate. More precisely, we show that the ratio between the change in the value of $f$ and the change in $\smax_\eta$ is at least equal to the current distance to $\OPT$ in function value.

\begin{lemma}
\label{lem:increase-rate-mon}
We have
  \[ \frac{f(\vec{x} + \vec{d}) - f(\vec{x})} {\smax_{\eta}(\A(\vec{x} + \vec{d})) - \smax_{\eta}(\A\vec{x})} \geq \lambda.\]
\end{lemma}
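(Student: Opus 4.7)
The plan is to prove the inequality by first upper-bounding the softmax increase (the denominator) by a linear quantity using Corollary~\ref{cor:ineq}, then lower-bounding the function increase (the numerator) by the same linear quantity using DR-submodularity. These two linear bounds differ by exactly the factor $\lambda$, which gives the claim.

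First I would check that the update $\vec{d} = \eta \vec{x} \circ \vec{m}$ fits the form required by Corollary~\ref{cor:ineq}: writing $\M = \D(\vec{m})$, we have $\vec{d} = \eta \M \vec{x}$, and by construction $0 \le \vec{m}_i \le 1$, so $\M \preceq \I$. The algorithm also maintains the invariant $\|\A\vec{x}\|_\infty \le 1$ (using $\smax_\eta$ as the potential and the terminating condition on $f$; the initialization on line~\ref{line:initial-x} ensures this holds at the start with room to spare), and the step size $\eta$ together with the scaling $\vec{m}_i \le 1$ ensures $\|\A\vec{d}\|_\infty / \eta \le 1/2$. Granting these preconditions, the second part of Corollary~\ref{cor:ineq} applied with the same $\lambda$ and $\vec{c}$ that appear in the algorithm yields
\begin{equation*}
\smax_\eta(\A(\vec{x}+\vec{d})) - \smax_\eta(\A\vec{x}) \;\le\; \frac{\langle \vec{c}, \vec{d}\rangle}{\lambda}.
\end{equation*}

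Next I would lower-bound the function increase. Since $\vec{m} \le \vec{1}$ and $\vec{x} \ge 0$, we have $\vec{d} \ge 0$ and $\vec{x} + \vec{d} \le (1+\eta)\vec{x}$. For every $t \in [0,1]$ the point $\vec{x} + t\vec{d}$ also lies coordinate-wise below $(1+\eta)\vec{x}$, so by DR-submodularity (gradient monotonicity),
\begin{equation*}
\nabla f(\vec{x} + t\vec{d}) \;\ge\; \nabla f((1+\eta)\vec{x}) \;=\; \vec{c}.
\end{equation*}
Pairing with the non-negative vector $\vec{d}$ and integrating along the segment gives
\begin{equation*}
f(\vec{x}+\vec{d}) - f(\vec{x}) \;=\; \int_0^1 \langle \nabla f(\vec{x}+t\vec{d}),\,\vec{d}\rangle\, dt \;\ge\; \langle \vec{c}, \vec{d}\rangle.
\end{equation*}

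Combining the two bounds yields the desired ratio inequality, since $\lambda > 0$ (enforced implicitly by the while-loop terminating condition $f(\vec{x}) \le (1-\exp(-1+10\eps))M$). The main subtlety is not a calculation but the conceptual point that motivates the whole algorithm: we must evaluate the gradient at the \emph{future} point $(1+\eta)\vec{x}$ rather than at $\vec{x}$, because only then does gradient monotonicity give us a clean lower bound $\nabla f(\vec{y}) \ge \vec{c}$ for every intermediate point $\vec{y} = \vec{x} + t\vec{d}$. Using $\nabla f(\vec{x})$ instead would \emph{overestimate} $\vec{c}$ and break the inequality $f(\vec{x}+\vec{d}) - f(\vec{x}) \ge \langle \vec{c}, \vec{d}\rangle$; verifying the preconditions of Corollary~\ref{cor:ineq} along the execution of the algorithm is the only other thing that requires care, and it follows from the choice of $\eta$ and the initial scaling of $\vec{x}$.
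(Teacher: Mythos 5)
Your proof is correct and follows essentially the same two-step structure as the paper's: bound the softmax increase above by $\langle\vec{c},\vec{d}\rangle/\lambda$ via Corollary~\ref{cor:ineq}, then bound the objective increase below by $\langle\vec{c},\vec{d}\rangle$ via gradient monotonicity (the paper phrases the latter as concavity along $\vec{d}$ plus $\nabla f((1+\eta)\vec{x})\le\nabla f(\vec{x}+\vec{d})$, whereas you integrate the gradient bound along the segment — the two are equivalent). Your explicit check of the preconditions of Corollary~\ref{cor:ineq} and the remark about evaluating the gradient at the future point are both consistent with, and a bit more detailed than, what the paper records in this lemma.
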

\begin{proof}

Using Corollary~\ref{cor:ineq} we bound
\begin{align*}
\smax_\eta(\A(\vec{x}+\vec{d}))- \smax_\eta(\A\vec{x}) 
&\leq \frac{1}{\lambda} \cdot \langle \nabla f(\vec{x}+\eta\vec{x}), \vec{d} \rangle 
\overset{(1)}{\leq} \frac{1}{\lambda}(f(\vec{x}+\vec{d}) - f(\vec{x}))\,,
\end{align*}
where (1) is due to concavity along the direction of $\vec{d}$ and the fact that $\nabla f(\vec{x}+\eta \vec{x}) \leq \nabla f(\vec{x}+\vec{d})$, since $\vec{d}\leq \eta \vec{x}$.
\end{proof}

Next we show that every iteration is well defined, in the sense that it performs a nonzero update on the vector $\vec{x}$.

\begin{lemma}\label{lem:nonzero-update-mon}
In every iteration we have $\vec{d} \neq \vec{0}$.
\end{lemma}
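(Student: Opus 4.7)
The plan is to proceed by contradiction. Suppose $\vec{d}=\vec{0}$; since $\vec{d}=\eta\vec{x}\circ\vec{m}$ and every coordinate of $\vec{x}$ is strictly positive (by the initialization on line~\ref{line:initial-x} and the fact that every update is coordinate-wise non-negative), this forces $\vec{m}_i=0$ for every $i\in[n]$. By line~\ref{line:scaling-mon}, this means $\vec{c}_i \le \lambda\,(\A^\top\nabla\smax_\eta(\A\vec{x}))_i$ for every $i$ with $\vec{c}_i\neq 0$, while the coordinates with $\vec{c}_i=0$ contribute $0$ on both sides once multiplied by $\OPT_i\ge 0$. Taking the $\OPT$-weighted sum gives
\[
\langle \vec{c},\OPT\rangle \;\le\; \lambda\,\langle \A^\top\nabla\smax_\eta(\A\vec{x}),\OPT\rangle\,.
\]

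For the right-hand side, I would use that $\nabla\smax_\eta(\A\vec{x})$ is a probability vector (non-negative entries summing to one) together with the feasibility assumption $\A\OPT\le (1-\eps)\vec{1}$, yielding
\[
\langle \A^\top\nabla\smax_\eta(\A\vec{x}),\OPT\rangle \;=\; \langle \nabla\smax_\eta(\A\vec{x}),\A\OPT\rangle \;\le\; \|\A\OPT\|_\infty \;\le\; 1-\eps\,.
\]
For the left-hand side, I would combine concavity of $f$ along non-negative directions with monotonicity. Concavity from $(1+\eta)\vec{x}$ to $(1+\eta)\vec{x}\vee\OPT$, together with $\nabla f\ge\vec{0}$ (monotone case) and the fact that $(1+\eta)\vec{x}\vee\OPT-(1+\eta)\vec{x}\le \OPT$ coordinate-wise, gives
\[
\langle \vec{c},\OPT\rangle \;=\; \langle \nabla f((1+\eta)\vec{x}),\OPT\rangle \;\ge\; f((1+\eta)\vec{x}\vee\OPT)-f((1+\eta)\vec{x})\,.
\]
Monotonicity yields $f((1+\eta)\vec{x}\vee\OPT)\ge f(\OPT)\ge M$, and concavity of $t\mapsto f(t\vec{x})$ combined with the standard first-order bound $\langle\nabla f(\vec{x}),\vec{x}\rangle\le f(\vec{x})-f(\vec{0})\le f(\vec{x})$ yields $f((1+\eta)\vec{x})\le (1+\eta)f(\vec{x})$; hence $\langle\vec{c},\OPT\rangle\ge M-(1+\eta)f(\vec{x})=\lambda$.

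Chaining these two bounds produces $\lambda\le (1-\eps)\lambda$, a contradiction provided $\lambda>0$. To confirm $\lambda>0$ throughout the while loop, I would use the loop guard $f(\vec{x})\le (1-\exp(-1+10\eps))M$ together with the choice $\eta=\eps/(2(2+\ln m))$, which makes $(1+\eta)(1-\exp(-1+10\eps))<1$ for small enough $\eps$ and thus $(1+\eta)f(\vec{x})<M$. The main step that requires a little care is the scalar inequality $f((1+\eta)\vec{x})\le (1+\eta)f(\vec{x})$, which is where non-negativity of $f$ together with DR-submodularity (concavity of $f$ along rays from the origin) is used in a non-routine way; everything else reduces to the standard first-order DR-submodular inequalities and the probability-simplex structure of $\nabla\smax_\eta$.
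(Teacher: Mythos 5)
Your proof follows the paper's argument exactly: from $\vec{d}=\vec{0}$ you extract $\vec{c}_i\le\lambda(\A^\top\nabla\smax_\eta(\A\vec{x}))_i$ for all $i$, pair against $\OPT$, upper bound the result by $(1-\eps)\lambda$ using that $\nabla\smax_\eta(\A\vec{x})$ lies in the probability simplex together with $\|\A\OPT\|_\infty\le 1-\eps$, and lower bound it by $\lambda$ via concavity along non-negative directions (giving $f((1+\eta)\vec{x})\le(1+\eta)f(\vec{x})$) and monotonicity. The only differences are that you explicitly treat the $\vec{c}_i=0$ coordinates and explicitly verify $\lambda>0$ from the loop guard, both of which the paper leaves implicit; your proof is correct and essentially identical in route.
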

\begin{proof}
Suppose for contradiction that there is an iteration where $\vec{d} = \vec{0}$. Then for all coordinates $i \in [n]$,
$$\frac{\nabla_i f((1+\eta)\vec{x}))}{\left(\A^\top \nabla\smax_\eta(\A \vec{x}))\right)_i} < \lambda\,.$$
Therefore 
\begin{align*}
f(\vec{x}^* \vee (1+\eta)\vec{x}) - f((1+\eta)\vec{x}) 
&\leq \langle  \nabla f((1+\eta)\vec{x}), \vec{x}^* \vee (1+\eta)\vec{x} - (1+\eta)\vec{x} \rangle
\\
&\overset{(1)}{\leq}
\langle \nabla f((1+\eta) \vec{x}), \vec{x}^*\rangle
\\
&<
\lambda \cdot \langle \A^\top \nabla\smax_\eta(\A\vec{x}), \vec{x}^*\rangle 
\\
&\leq \lambda \cdot \|\nabla\smax_\eta(\A\vec{x})\|_1 \|\A\vec{x}^*\|_\infty \\
&\overset{(2)}{\leq} \lambda(1-\epsilon)\,.
\end{align*}
In (1) we used the fact that $(a\vee b)-b\leq a$, for $a,b\geq 0$, and in (2) we used $\|\nabla\smax_\eta(\A\vec{x})\|_1 \leq 1$, and $\|\A\vec{x}^*\|_\infty \leq 1-\epsilon$.

However, from monotonicity we have $f(\vec{x}^*)\leq f(\vec{x}^* \vee (1+\eta)\vec{x})$, and
from concavity along nonnegative directions, we get that $f((1+\eta)\vec{x}) \leq (1+\eta)f(\vec{x})$. Therefore we get that
\begin{align*}
f(\vec{x}^*\vee (1+\eta)\vec{x})- f((1+\eta)\vec{x}) \geq M - (1+\eta)f(\vec{x}) = \lambda
\end{align*}

This yields a contradiction.
\end{proof}

By the specification of the algorithm, the final solution is a good approximation. We show that it satisfies the packing constraints.

For the remainder of the analysis, we use $j$ to index the iterations of the algorithm and we let $\vec{x}^{(j)}$ and $\vec{x}^{(j + 1)}$ be the vector $\vec{x}$ at the beginning and end of iteration $j$, respectively. We let $\lambda^{(j)}, \vec{c}^{(j)}, \vec{m}^{(j)}, \vec{d}^{(j)}$ be the variables defined in iteration $j$. 

\begin{lemma}\label{lem:infnorm}
The solution $\vec{x}$ returned by the algorithm satisfies $\|\A\vec{x}\|_\infty \leq \smax_\eta(\A\vec{x})\leq 1-2\epsilon$.
\end{lemma}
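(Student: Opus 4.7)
}
The first inequality $\|\A\vec{x}\|_\infty \leq \smax_\eta(\A\vec{x})$ is immediate from the properties of softmax recorded in the Preliminaries. The whole content is the second inequality, so the plan is to bound the total growth of $\smax_\eta(\A\vec{x})$ from the initial point to the final point by a telescoping sum of per-iteration increments coming from Lemma~\ref{lem:increase-rate-mon}.

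The first step is to bound the initial value. By the initialization $\vec{x}^{(0)}_i = \eps/(n\|\A_{:i}\|_\infty)$, one has $(\A\vec{x}^{(0)})_j = \sum_i A_{ji}\vec{x}^{(0)}_i \leq \eps$, so $\|\A\vec{x}^{(0)}\|_\infty \leq \eps$, and hence $\smax_\eta(\A\vec{x}^{(0)}) \leq \eta\ln m + \eps \leq 3\eps/2$ using $\eta = \eps/(2(2+\ln m))$. The second step is to convert the per-iteration bound from Lemma~\ref{lem:increase-rate-mon} into a telescoping form. Writing $\lambda^{(j)} = M - (1+\eta)f(\vec{x}^{(j)})$, the recursion $\lambda^{(j+1)} = \lambda^{(j)} - (1+\eta)(f(\vec{x}^{(j+1)}) - f(\vec{x}^{(j)}))$ gives
\begin{align*}
\frac{f(\vec{x}^{(j+1)})-f(\vec{x}^{(j)})}{\lambda^{(j)}} = \frac{1}{1+\eta}\left(1 - \frac{\lambda^{(j+1)}}{\lambda^{(j)}}\right) \leq \frac{1}{1+\eta}\ln\frac{\lambda^{(j)}}{\lambda^{(j+1)}},
\end{align*}
using $1-t \leq \ln(1/t)$ for $t > 0$. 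Summing these bounds over all iterations $j = 0,\ldots,J-1$ telescopes to $\frac{1}{1+\eta}\ln(\lambda^{(0)}/\lambda^{(J)})$, and combining with Lemma~\ref{lem:increase-rate-mon} yields
\begin{align*}
\smax_\eta(\A\vec{x}^{(J)}) \leq \smax_\eta(\A\vec{x}^{(0)}) + \frac{1}{1+\eta}\ln\frac{M - (1+\eta)f(\vec{x}^{(0)})}{M - (1+\eta)f(\vec{x}^{(J)})} \leq \frac{3\eps}{2} + \frac{1}{1+\eta}\ln\frac{M}{M - (1+\eta)f(\vec{x}^{(J)})}.
\end{align*}

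The third step is to lower bound $\lambda^{(J)} = M - (1+\eta)f(\vec{x}^{(J)})$. The loop exits only after an iteration in which $f$ crosses the termination threshold, so $f(\vec{x}^{(J-1)}) \leq (1 - e^{-1+10\eps})M$. Since $\vec{x}^{(J)} \leq (1+\eta)\vec{x}^{(J-1)}$ componentwise and $f$ is monotone and concave along the ray through $\vec{0}$, we have $f(\vec{x}^{(J)}) \leq (1+\eta)f(\vec{x}^{(J-1)})$, and therefore
\begin{align*}
M - (1+\eta)f(\vec{x}^{(J)}) \geq M\bigl(1 - (1+\eta)^2(1-e^{-1+10\eps})\bigr) \geq M\bigl(e^{-1+10\eps} - 2\eta - \eta^2\bigr).
\end{align*}
A short calculation using $\eta \leq \eps/4$ shows that this is at least $e^{-1+c\eps}M$ for some constant $c \geq 8$, so $\ln(M/\lambda^{(J)}) \leq 1 - c\eps$. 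Putting everything together gives $\smax_\eta(\A\vec{x}^{(J)}) \leq 3\eps/2 + (1-c\eps)/(1+\eta) \leq 1 - 2\eps$ for $\eps$ small enough.

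The main obstacles are essentially bookkeeping. The first is the telescoping identity: one must notice the algebraic rewriting of $(f^{(j+1)}-f^{(j)})/\lambda^{(j)}$ in terms of $\lambda^{(j+1)}/\lambda^{(j)}$ so that $\ln$ makes the bound telescope; without this trick the sum looks like a Riemann sum for $\int df/(M-(1+\eta)f)$ and needs an integral comparison. The second is the constant-chasing at the end, where one must verify that the multiplicative slack $(1+\eta)^2$ introduced by the last iteration, together with the $e^{10\eps}$ slack built into the termination threshold, leaves enough room to land strictly below $1 - 2\eps$. A small technical point is that Lemma~\ref{lem:increase-rate-mon} implicitly requires $\|\A\vec{x}^{(j)}\|_\infty \leq 1$ (via Corollary~\ref{cor:ineq}) at each iteration; this is handled by induction, since the same bound that produces $\smax_\eta(\A\vec{x}^{(J)}) \leq 1-2\eps$ applied up to any intermediate $j$ gives $\|\A\vec{x}^{(j)}\|_\infty \leq \smax_\eta(\A\vec{x}^{(j)}) \leq 1-2\eps \leq 1$, maintaining the invariant.
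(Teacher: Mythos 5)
Your proof is correct and follows essentially the same route as the paper: bound the initial softmax, use Lemma~\ref{lem:increase-rate-mon} per iteration, telescope into $\frac{1}{1+\eta}\ln(\lambda^{(0)}/\lambda^{(J)})$, and lower-bound the final $\lambda$ from the termination condition. The only cosmetic differences are that the paper establishes the telescoping by an integral comparison whereas you use the discrete inequality $1-t\le \ln(1/t)$ (equivalent content), and the paper first bounds $\smax_\eta$ at the start of the last iteration and then adds the last-step increase, whereas you absorb that step into the bound on $\lambda^{(J)}$ via $f(\vec{x}^{(J)})\le(1+\eta)f(\vec{x}^{(J-1)})$.
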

\begin{proof}
We show that the algorithm maintains the invariant that $\smax_{\eta}(\A \vec{x}) \leq 1-\eps$. Since $\vec{x}^{(1)}$ is the initial vector defined on line~\ref{line:initial-x-mon} of Algorithm~\ref{alg:monotone}, we have $\smax_{\eta}(\A \vec{x}^{(1)}) \leq \eta \ln{m} + \|\A\vec{x}^{(1)}\|_{\infty} \leq 2\eps$. By Lemma~\ref{lem:increase-rate-mon}, in every iteration $j$,
\[
\smax_\eta(\A\vec{x}^{(j+1)}) - \smax_\eta(\A\vec{x}^{(j)}) \leq \frac{1}{M - (1+\eta)f(\vec{x}^{(j)})} \cdot \left( f(\vec{x}^{(j+1)}) - f(\vec{x}^{(j)}) \right)\,.
\]
Let $T$ be the final iteration. Summing up 
 we get that
\begin{align*}
\smax_\eta(\A\vec{x}^{(T)}) &\leq 2\epsilon + \sum_{j=1}^{T-1} \frac{f(\vec{x}^{(j+1)})-f(\vec{x}^{(j)})}{M - (1+\eta)f(\vec{x}^{(j)})}\,.
\end{align*}

Define $g(\alpha) = f(\vec{x}^{(j)}) + \alpha (f(\vec{x}^{(j+1)}) - f(\vec{x}^{(j)}))$. Because $f(\vec{x}^{(j+1)}) \ge f(\vec{x}^{(j)})$, the function $g$ is non-decreasing. We have
$$\int_0^1 \frac{g'(\alpha)}{M-(1+\eta)g(0)}d\alpha \le \int_0^1 \frac{g'(\alpha)}{M-(1+\eta)g(\alpha)} d\alpha = \frac{1}{1+\eta}\ln\left(\frac{M - (1+\eta)g(0)}{M-(1+\eta)g(1)}\right)$$
Thus,
\begin{align*}
\smax_\eta(\A\vec{x}^{(T)}) \leq 2\epsilon + \frac{1}{1+\eta}\cdot \ln\left(  \frac{  M-(1+\eta)f(\vec{x}^{(1)}) }{ M - (1+\eta)f(\vec{x}^{(T)})} \right)\,.
\end{align*}
Using $f(\vec{x}^{(1)})\geq 0$ and $M-(1+\eta)f(\vec{x}^{(T)}) \geq M - (1+\eta)M(1-1/\exp(1-10\epsilon)) \ge M(\exp(10\eps-1) - \eta)$, due to the termination condition,
we obtain 
\begin{align*}
\smax_\eta(\A\vec{x}^{(T)})&\leq 2\epsilon + \ln\left( \frac{1}{\exp(10\epsilon-1)-\eta} \right)\\
&\le 2\eps + \ln\left(\frac{1}{(1+2\eps)\exp(8\eps-1) - \eps/4}\right)
\le 2\epsilon + 1-8\epsilon = 1-6\epsilon\,.
\end{align*}
Since the final iteration increases the softmax by at most $\eps$, the lemma follows.
\end{proof}

Finally, we analyze the number of iterations performed before the algorithm terminates. We do this in two steps. First, we relate the value of a single coordinate to the update steps that have increased it. In the second step, we show that there must be a coordinate that has been updated sufficiently, so that it must have increased a lot after only a small number of iterations. 

Formally, we first bound the total increment on each coordinate.

\begin{lemma}\label{lem:coord-iter-bound-mon}
Consider coordinate $i$. If the final value of $\vec{x}_i$ is at most $n/\eps$ then $\sum_j \vec{m}^{(j)}_i = O(\log(n/\eps)/\eta)$.
\end{lemma}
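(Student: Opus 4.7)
The plan is to exploit the fact that the update rule $\vec{d} = \eta \vec{x} \circ \vec{m}$ is purely multiplicative coordinate-wise, so that $\vec{x}_i^{(j+1)} = \vec{x}_i^{(j)} \cdot (1 + \eta \vec{m}_i^{(j)})$. Iterating across all iterations $j = 1, \dots, T$ yields
\[
\vec{x}_i^{(T+1)} = \vec{x}_i^{(1)} \cdot \prod_{j=1}^T (1 + \eta \vec{m}_i^{(j)}),
\]
and taking logarithms gives $\sum_j \ln(1 + \eta \vec{m}_i^{(j)}) = \ln(\vec{x}_i^{(T+1)} / \vec{x}_i^{(1)})$. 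So bounding $\sum_j \vec{m}_i^{(j)}$ reduces to controlling this log-ratio together with a lower bound on $\ln(1+\eta \vec{m}_i^{(j)})$ in terms of $\eta \vec{m}_i^{(j)}$.

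First, I would note that by construction $\vec{m}_i^{(j)} \in [0,1]$, and since $\eta \leq \frac{\eps}{2(2+\ln m)} \leq 1$, we have $\eta \vec{m}_i^{(j)} \in [0,1]$. On this range the elementary inequality $\ln(1+x) \geq x/2$ holds (equality at $x = 0$, and the derivative comparison $1/(1+x) \geq 1/2$ holds for $x \leq 1$). Applying this termwise yields
\[
\frac{\eta}{2} \sum_{j=1}^T \vec{m}_i^{(j)} \;\leq\; \sum_{j=1}^T \ln(1 + \eta \vec{m}_i^{(j)}) \;=\; \ln\!\left( \frac{\vec{x}_i^{(T+1)}}{\vec{x}_i^{(1)}} \right).
\]

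Next I would bound the ratio inside the logarithm. For the numerator, the hypothesis of the lemma directly gives $\vec{x}_i^{(T+1)} \leq n/\eps$. For the denominator, the initialization on line~\ref{line:initial-x} sets $\vec{x}_i^{(1)} = \eps / (n \|\A_{:i}\|_\infty)$, and under the preprocessing convention stated in the Preliminaries every nonzero entry of $\A$ satisfies $\|\A_{:i}\|_\infty \leq n/\eps$, giving $\vec{x}_i^{(1)} \geq \eps^2 / n^2$. (If column $i$ of $\A$ is entirely zero, the coordinate is unconstrained and the argument applies even more trivially, since $\vec{x}_i$ stays bounded as required.) Combining,
\[
\frac{\vec{x}_i^{(T+1)}}{\vec{x}_i^{(1)}} \;\leq\; \frac{n/\eps}{\eps^2/n^2} \;=\; \frac{n^3}{\eps^3},
\]
so $\sum_j \vec{m}_i^{(j)} \leq \frac{2}{\eta} \ln(n^3/\eps^3) = O(\log(n/\eps)/\eta)$, as desired.

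There is no real obstacle here; the argument is essentially the standard multiplicative-weights telescoping trick adapted to this context. The only detail to be careful about is verifying that $\eta \vec{m}_i^{(j)} \leq 1$ so that the logarithm estimate applies, and confirming that the initial value of $\vec{x}_i$ is not too small, which is where the assumption on the range of entries of $\A$ from the Preliminaries enters.
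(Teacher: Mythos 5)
Your proof is correct and follows essentially the same route as the paper: both arguments exploit the multiplicative update $\vec{x}_i^{(j+1)} = \vec{x}_i^{(j)}(1+\eta\vec{m}_i^{(j)})$, apply the elementary estimate $1+z \geq e^{z/2}$ for $z \leq 1$ (which you phrase equivalently as $\ln(1+x) \geq x/2$), and compare the initial value $\geq \eps^2/n^2$ against the assumed final value $\leq n/\eps$ to conclude $\sum_j \vec{m}_i^{(j)} = O(\log(n/\eps)/\eta)$.
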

\begin{proof}
For every iteration $j$, we have
\begin{align*}
\vec{x}^{(j+1)}_i = \vec{x}^{(j)}_i + \vec{d}^{(j)}_i = \vec{x}^{(j)}_i (1+\eta\vec{m}^{(j)}_i) \geq \vec{x}^{(j)}_i \exp(\eta \vec{m}^{(j)}_i /2)\,,
\end{align*}
where we used $1+z \ge \exp(z/2)$ for all $z\le 1$. Therefore, letting $T$ be the last iteration,
\begin{align*}
\vec{x}^{(T+1)}_i \geq \vec{x}^{(1)}_i \cdot \exp\left(\sum_{j=1}^T \eta \vec{m}^{(j)}_i / 2\right)\,.
\end{align*}
Since the initial value of $\vec{x}_i$ is at least $\eps^2/n^2$, and the final value of $\vec{x}_i$ is at most $n/\eps$,
\begin{align*}
n^3/\epsilon^3 \geq \vec{x}^{(T+1)}_i / \vec{x}^{(1)}_i \geq \exp\left(\frac{\eta}{2} \sum_{j=1}^T \vec{m}^{(j)}_i\right)\,,
\end{align*}
which implies $\sum_{j=1}^T \vec{m}^{(j)}_i = O(\ln(n/\epsilon)/\eta)$.
\end{proof}

Finally, we bound the total number of iterations of the algorithm.
\begin{lemma}
\label{lem:large-step-iterations-mon}
 The number of iterations run by Algorithm~\ref{alg:monotone} is at most $O\left(\frac{\log(n/\eps)}{\eps \eta}\right) = O\left(\frac{\log(n/\eps)\log(m)}{\eps^2}\right)$.
\end{lemma}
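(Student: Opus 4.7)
The plan is to combine a per-iteration ``greedy-type'' lower bound with the per-coordinate bound from Lemma~\ref{lem:coord-iter-bound-mon}. First, I would establish that for every iteration $j$,
\[
\sum_i \vec{x}_i^* \vec{c}_i^{(j)} \vec{m}_i^{(j)} \;\geq\; \eps\, \lambda^{(j)}.
\]
This is a quantitative strengthening of the argument in Lemma~\ref{lem:nonzero-update-mon}: from the definition of $\vec{m}_i^{(j)}$, one has $\vec{c}_i^{(j)} \vec{m}_i^{(j)} \geq \vec{c}_i^{(j)} - \lambda^{(j)}\bigl(\A^\top \nabla \smax_\eta(\A \vec{x}^{(j)})\bigr)_i$ for every $i$. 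Taking inner product with $\vec{x}^*$ and using $\langle \vec{x}^*, \vec{c}^{(j)}\rangle \geq \lambda^{(j)}$ (from concavity and monotonicity, exactly as in the proof of Lemma~\ref{lem:nonzero-update-mon}) together with $\langle \A\vec{x}^*, \nabla \smax_\eta(\A\vec{x}^{(j)})\rangle \leq \|\A\vec{x}^*\|_\infty \|\nabla \smax_\eta\|_1 \leq 1-\eps$ yields the claim. Since the while loop has not yet terminated, $\lambda^{(j)} = M - (1+\eta)f(\vec{x}^{(j)}) = \Omega(M)$, so summing over the $T$ iterations gives $\sum_{j,i} \vec{x}_i^* \vec{c}_i^{(j)} \vec{m}_i^{(j)} \geq \Omega(\eps M T)$.

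Second, I would upper bound the same double sum by $O(ML)$ where $L = O(\log(n/\eps)/\eta)$ is the per-coordinate bound from Lemma~\ref{lem:coord-iter-bound-mon}; combined with the first step this forces $T \leq O(L/\eps) = O(\log(n/\eps)/(\eps\eta))$, matching the statement after substituting $\eta = \eps/(2(2+\ln m))$. To establish the upper bound, I decompose each pair $(j,i)$ into \emph{saturated} (where $\vec{x}_i^{(j)} \geq \vec{x}_i^*$) and \emph{undersaturated} (where $\vec{x}_i^{(j)} < \vec{x}_i^*$). The saturated contribution is easy: $\vec{x}_i^* \leq \vec{x}_i^{(j)}$ allows me to bound it by the total algorithmic progress,
\[
\sum_{(j,i)\text{ sat.}} \vec{x}_i^* \vec{c}_i^{(j)} \vec{m}_i^{(j)} \;\leq\; \frac{1}{\eta}\sum_j \langle \vec{c}^{(j)}, \vec{d}^{(j)}\rangle \;\leq\; \frac{f(\OPT)}{\eta} \;=\; O(M/\eta) \;\leq\; O(ML),
\]
by concavity of $f$ along each $\vec{d}^{(j)}$ (as in Lemma~\ref{lem:increase-rate-mon}) and telescoping.

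The undersaturated contribution is the main obstacle. For each fixed coordinate $i$, the proof of Lemma~\ref{lem:coord-iter-bound-mon} applied only to its undersaturated iterations gives $\sum_{j\text{ undersat.}} \vec{m}_i^{(j)} \leq L$, since $\vec{x}_i^{(j)}$ can only grow multiplicatively from $\vec{x}_i^{(1)} = \Theta(\eps^2/n^2)$ up to at most $\vec{x}_i^*$. The naive combination $\vec{c}_i^{(j)} \leq \vec{c}_i^{(1)}$ then yields a total upper bound of $L\,\langle \vec{x}^*, \vec{c}^{(1)}\rangle$; but unlike in the linear case where $\langle \vec{c}, \vec{x}^*\rangle = M$, for general DR-submodular $f$ the quantity $\langle \vec{x}^*, \nabla f(\vec{0})\rangle$ can far exceed $M$, so this bound is too loose (by a potential factor of order $n$). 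The fix is to exploit the monotone decrease of $\vec{c}^{(j)}$ across iterations, which follows from DR-submodularity since $\vec{x}^{(j)}$ is non-decreasing. Via an Abel-type rearrangement, $\sum_j \vec{c}_i^{(j)} \vec{m}_i^{(j)}$ rewrites as a sum over gradient decrements $\vec{c}_i^{(j)} - \vec{c}_i^{(j+1)} \geq 0$ weighted by the cumulative multiplicative growth $\sum_{k \leq j}\vec{m}_i^{(k)}$, and these decrements can then be amortized globally against the telescoping objective gain $\sum_j(f(\vec{x}^{(j+1)}) - f(\vec{x}^{(j)})) \leq f(\OPT) = O(M)$. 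This amortization---which implements the ``global argument'' promised in the section overview and which replaces the $\Omega(1/\eps)$ phase decomposition of prior work by a single two-part split---is the step I expect to be the main technical difficulty, since the per-coordinate growth bounds interact with the $\vec{x}^*$-weighted gradient evolution in a way that must be executed tightly to avoid an extra factor of $n$.
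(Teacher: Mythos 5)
Your first step is correct and is essentially the same observation the paper makes (in its notation, $\langle \vec{y}^{(j)},\vec{x}^*\rangle\ge 1$ with $\vec{y}^{(j)}=\vec{c}^{(j)}/\lambda^{(j)}$): taking the inner product of $\vec{c}_i\vec{m}_i\ge \vec{c}_i-\lambda(\A^\top\nabla\smax_\eta(\A\vec{x}))_i$ with $\vec{x}^*$ and using $\langle\vec{c},\vec{x}^*\rangle\ge\lambda$ and $\|\nabla\smax_\eta\|_1\|\A\vec{x}^*\|_\infty\le 1-\eps$ gives $\sum_i\vec{x}_i^*\vec{c}_i^{(j)}\vec{m}_i^{(j)}\ge\eps\lambda^{(j)}\ge\eps M/3$. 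Your ``saturated'' upper bound is also fine. But the ``undersaturated'' part is where the real content of the lemma lives, and your proposed Abel rearrangement does not close it. If you write the Abel identity for a fixed coordinate $i$ with partial sums $B_{i,j}=\sum_{k\le j,\,k\in U_i}\vec{m}_i^{(k)}$, you get
\[
\sum_{j\in U_i}\vec{c}_i^{(j)}\vec{m}_i^{(j)}=\sum_{j}\bigl(\vec{c}_i^{(j)}-\vec{c}_i^{(j+1)}\bigr)B_{i,j}+\vec{c}_i^{(N+1)}B_{i,N},
\]
and since all the decrements $\vec{c}_i^{(j)}-\vec{c}_i^{(j+1)}$ are nonnegative, bounding $B_{i,j}\le L$ uniformly just collapses the right-hand side back to $L\,\vec{c}_i^{(1)}$, i.e., exactly the naive bound you correctly identified as too loose. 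The amortization you invoke---charging $\sum_i\vec{x}_i^*(\vec{c}_i^{(j)}-\vec{c}_i^{(j+1)})$ against $f(\vec{x}^{(j+1)})-f(\vec{x}^{(j)})$---has no apparent mechanism: the objective gain in an iteration is governed by $\langle\nabla f,\vec{d}^{(j)}\rangle$, which is weighted by the algorithm's own small update $\vec{d}^{(j)}$, while the gradient decrement is weighted by $\vec{x}^*$; DR-submodularity gives only a one-sided bound on the Hessian, so there is no generic relation between them, and the total $\sum_j\sum_i\vec{x}_i^*(\vec{c}_i^{(j)}-\vec{c}_i^{(j+1)})$ telescopes to $\langle\vec{x}^*,\vec{c}^{(1)}-\vec{c}^{(N+1)}\rangle$, which can be $\Theta(n M)$ (e.g.\ a smoothed coverage function with all initial gradient entries $\Theta(1)$). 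You flag this as the ``main technical difficulty,'' and indeed it is the crux of the lemma.

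The paper's proof does something materially different. Rather than a saturated/undersaturated split and a global upper bound on the double sum, it partitions the iterations into two parts by the \emph{level} of $\langle\vec{c}^{(j)},\vec{x}^*\rangle$: with $v_2=\langle\vec{c}^{(j_2)},\vec{x}^*\rangle$ the value at the final iteration, $T_2$ is the iterations with $\langle\vec{c}^{(j)},\vec{x}^*\rangle\in[v_2,9v_2)$ and $T_1$ those with $\langle\vec{c}^{(j)},\vec{x}^*\rangle\ge 9v_2$. Within each part it introduces a single scalar $\alpha\le 1$ so that the weight vector $\mathbb{D}(\alpha\vec{y}^{(j')})\vec{x}^*$ (for a fixed reference iteration $j'$) has $\ell_1$-mass $O(1)$, using the facts that $\lambda^{(j)}\in[M/3,M]$ and $\vec{c}^{(j)}$ is coordinatewise monotone decreasing to control how much $\vec{y}^{(j)}$ can vary within the part. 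It then averages the per-iteration inequality $\langle\vec{1}-\mathbb{D}(\alpha\vec{y}^{(j)})^\dagger\A^\top\nabla\smax,\mathbb{D}(\alpha\vec{y}^{(j')})\vec{x}^*\rangle\ge\eps$ over the part, yielding a \emph{single coordinate} $i$ with $\sum_{j\in T_k}\vec{m}_i^{(j)}\ge\Omega(\eps|T_k|)$, and then applies Lemma~\ref{lem:coord-iter-bound-mon} to that coordinate. This rescale-and-average step is precisely what is missing from your argument, and the scalar $\alpha$ is what removes the potential factor of $n$ hiding in $\langle\vec{x}^*,\vec{c}^{(1)}\rangle$.
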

\begin{proof}
First, we note that a simple analysis follows from partitioning the iterations into $O(\log(n/\epsilon))$ epochs, each of which corresponding to an interval where $\langle \vec{c}, \vec{x}^* \rangle$ stays bounded within a constant multiplicative factor. Showing that for each of these phases, there exists a coordinate $i$ for which $\sum_j \vec{m}^{(j)}_i$ is large enough will yield the result. Instead, we can obtain a refined bound on the number of iterations by partitioning the iterations into only two parts.

First, we notice that $\langle \vec{c}, \vec{x}^* \rangle$ monotonically decreases over time, and $\lambda \in [M/3, M]$. Also, define $\vec{y} = \vec{c}/\lambda$. Using a similar argument to Lemma~\ref{lem:nonzero-update-mon}, we obtain that, for every iteration $j$,
\begin{align*}
\langle \vec{y}^{(j)}, \vec{x}^* \rangle
&= \frac{1}{\lambda^{(j)}} \langle \nabla f((1+\eta)\vec{x}^{(j)}), \vec{x}^* \rangle\\
&\geq \frac{1}{\lambda^{(j)}} \langle  \nabla f((1+\eta)\vec{x}^{(j)}), \vec{x}^* \vee (1+\eta)\vec{x}^{(j)} - (1+\eta)\vec{x}^{(j)} \rangle\\
&\geq \frac{1}{\lambda^{(j)}} f(\vec{x}^* \vee (1+\eta)\vec{x}^{(j)}) - f((1+\eta)\vec{x}^{(j)})\\
&\geq \frac{1}{\lambda^{(j)}} (M - (1+\eta)f(\vec{x}^{(j)})\\ 
&= 1
\end{align*}
Let $j_{2}$ be the last iteration of the algorithm, and let $v_2 = \langle \vec{c}^{(j_2)}, \vec{x}^*\rangle$. We divide the iterations into two parts: let $T_2$ be the iterations $j$ where $v_{2} \le \langle \vec{c}^{(j)}, \vec{x}^*\rangle < 9 v_2$, and $T_1$ be the iterations where $\langle \vec{c}^{(j)}, \vec{x}^*\rangle \ge 9 v_{2}$.

First we bound the number of iterations in $T_2$. Consider an iteration $j$ in $T_2$. By definition, we have $\langle \vec{y}^{(j)}, \vec{x}^*\rangle \in [v_{2}/\lambda^{(1)}, 27 v_{2}/\lambda^{(1)}]$ so there exists $\alpha \le 1$ so that $\langle \alpha \vec{y}^{(j)}, \vec{x}^*\rangle \in [1, 27]$ for all iterations $j\in T_2$.

We also have $\langle \nabla\smax(\A\vec{x}^{(j)}), \A \vec{x}^* \rangle \le \|\nabla \smax_\eta(\A\vec{x}^{(j)})\|_1 \|\A \vec{x}^*\|_\infty \leq 1-\eps$, which also gives us that
\begin{align*}
\left\langle
\mathbb{D}(\alpha \vec{y}^{(j)})^{\dagger} \A^\top \nabla\smax(\A\vec{x}^{(j)})
,
 \mathbb{D}(\alpha \vec{y}^{(j)}) \vec{x}^*
 \right\rangle
 &\le 1-\eps \,.
\end{align*}

Combining this with $\langle \alpha \vec{y}^{(j)}, \vec{x}^* \rangle \in [1,27]$, we obtain that
\begin{align*}
\left\langle
\vec{1}-
\mathbb{D}(\alpha \vec{y}^{(j)})^{\dagger} \A^\top \nabla\smax(\A\vec{x}^{(j)})
,
 \mathbb{D}(\alpha \vec{y}^{(j)}) \vec{x}^*
 \right\rangle
 &\geq \eps\,.
\end{align*}

Therefore adding up across all iterations in $T_2$, and using $\vec{y}^{(j)} \leq 3 \vec{y}^{(j_0)}$, where $j_0$ is the first iteration in $T_2$, we have
\begin{align*}
\sum_{j\in T_2}
\left\langle
\vec{1}-
\mathbb{D}(\alpha \vec{y}^{(j)})^{\dagger} \A^\top \nabla\smax(\A\vec{x}^{(j)})
,
 \mathbb{D}(\alpha \vec{y}^{(j_0)}) \vec{x}^*
 \right\rangle
 &\geq \eps \vert T_2 \vert / 3 \,.
\end{align*}
Because $\|\mathbb{D}(\alpha\vec{y}^{(j_0)}) \vec{x}^*\|_1 \le 27$, by averaging, there exists a coordinate $i$ such that 
$$\sum_{j \in T_2}\left(\left(1 - \frac{(\A^\top\nabla\smax(\A\vec{x}^{(j)}))_i}{\alpha \vec{y}^{(j)}_i}\right) \vee 0\right) \ge \eps|T_2|/81\,.$$
Using the fact that $\alpha \leq 1$ and the definition of $\vec{y}$, we see that this also gives us that
\[
\sum_{j \in T_2} \vec{m}^{(j)}_i \geq \epsilon \vert T_2 \vert / 81\,,
\]
so, by Lemma~\ref{lem:coord-iter-bound-mon}, we have $|T_2| = O(\log(n/\eps)/(\eps\eta))$.

Next we bound the number of iterations in $T_1$. For any iteration $j \in T_1$, we have
$$\langle \vec{y}^{(j)}, \vec{x}^*\rangle \ge 9 v_{2} / \lambda^{(j)}\ge 9 v_{2} / \lambda^{(1)} \ge 3 \,.$$
Let $j_1$ be the last iteration in $T_1$. Let $\alpha = \frac{3}{\langle \vec{y}^{(j_1)}, \vec{x}^*\rangle} \le 1$. Similarly to before, we have
\begin{align*}
\left\langle
\mathbb{D}(\alpha \vec{y}^{(j)})^{\dagger} \A^\top \nabla\smax(\A\vec{x}^{(j)})
,
 \mathbb{D}\left(\frac{\alpha}{3} \vec{y}^{(j)}\right) \vec{x}^*
 \right\rangle
 &\le 1-\eps \,,
\end{align*}
where we use $\vec{y}^{(j)} = \frac{1}{\lambda^{(j)}} \vec{c}^{(j)} \geq \frac{1}{\lambda^{(j)}} \vec{c}^{(j_1)} = \frac{\lambda^{(j_1)}}{\lambda^{(j)}} \vec{y}^{(j_1)}\geq \frac{1}{3} \vec{y}^{(j_1)}$. Thus for our specific choice of $\alpha$ we obtain:
\begin{align*}
\left\langle
\vec{1} - \mathbb{D}(\alpha \vec{y}^{(j)})^{\dagger} \A^\top \nabla\smax(\A\vec{x}^{(j)})
,
 \mathbb{D}\left(\frac{\alpha}{3} \vec{y}^{(j_1)}\right) \vec{x}^*
 \right\rangle
 &\geq \eps \,.
\end{align*}
So adding up across all iterations in $T_1$, and using $\|\mathbb{D}(\alpha\vec{y}^{(j_1)}) \vec{x}^*\|_1=3$, by averaging, there exists a coordinate $i$ such that 
$$\sum_{j\in T_1}\left(\left(1 - \frac{(\A^\top\nabla\smax(\A\vec{x}^{(j)}))_{i}}{\alpha \vec{y}^{(j)}_i}\right) \vee 0\right) \ge \eps|T_1|\,.$$
Just like before, by Lemma~\ref{lem:coord-iter-bound}, this gives us that $|T_1| = O(\log(n/\eps)/(\eps\eta))$.
\end{proof}

\section{Non-monotone maximization with packing constraints}

Our algorithm for non-monotone DR-submodular maximization is shown in Algorithm~\ref{alg:non-monotone}. We obtain the value $M$ by guessing like in the monotone case. Unlike the monotone case, we now include the constraints $\vec{x} \leq (1 - \eps) \vec{1}$ into the matrix $\A$, i.e., we solve the problem $\max\{\vec{x} \in \R^n_+ \colon \A \vec{x} \leq (1-\eps)\vec{1}\}$  where the constraints $\A \vec{x} \leq (1-\eps)\vec{1}$ include the constraints $\vec{x} \leq (1-\eps)\vec{1}$. For simplicity, we let $m$ denote the number of rows of this enlarged matrix $\A$, i.e., $m = m' + n$ where $m'$ is the original number of packing constraints.

\begin{algorithm}[ht]
\caption{Algorithm for $\max_{\vec{x} \in \R^n_+ \colon \A \vec{x} \leq (1-\eps)\vec{1}} f(\vec{x})$, where $f$ is a non-negative DR-submodular function and $\A \in \R^{m \times n}_+$. The constraint $\A \vec{x} \leq (1-\eps)\vec{1}$ includes the constraints $\vec{x} \leq (1-\eps)\vec{1}$.}
\label{alg:non-monotone}
\begin{algorithmic}[1]
\State $\eta \gets \frac{\eps}{2\ln{m}}$
\State $M$ is an approximate optimal solution value: $M \le f(\OPT) \le (1+\eps)M$, where $\OPT \in \argmax_{\vec{x} \colon \A \vec{x} \leq (1 - \eps)\vec{1}} f(\vec{x})$
\State $\vec{x}_i \gets \frac{\eps}{n \|\A_{:i}\|_{\infty}} \quad \forall i \in [n]$ \label{line:initial-x}
\State $\vec{z} \gets \vec{x}$
\State $t \gets \smax_{\eta}(\A\vec{z})$
\While{$f(\vec{x}) \le \exp(-1-10\eps)M$}
  \State $\lambda \gets M\cdot (e^{-t} - 2\eps) - f(\vec{x})$
  \State $\vec{c}_i \gets  (1-\vec{x}_i)\nabla_i f((1+\eta)\vec{x}) \vee 0 \quad \forall i \in [n]$
  \State $\vec{m}_i \gets \left(1-\lambda \cdot  \frac{(\A^\top\nabla \smax(\A \vec{z}))_i}{\vec{c}_i}\right)\vee 0$ for all $i$ with $\vec{c}_i\ne 0$, and $\vec{m}_i = 0$ if $\vec{c}_i=0$
    \State $\vec{d} \gets \eta \vec{x} \circ \vec{m}$ \label{line:update-vector}
    \State $\vec{x} \gets \vec{x} + \vec{d}\circ (\vec{1} - \vec{x})$
    \State $\vec{z} \gets \vec{z} + \vec{d}$
    \State $t \gets \smax_{\eta}(\A\vec{z})$ \label{line:time-update}
\EndWhile
\end{algorithmic}
\end{algorithm}

\begin{lemma}
\label{lem:increase-rate}
We have
  \[ \frac{f(\vec{x} + \vec{d} \circ (\vec{1} - \vec{x})) - f(\vec{x})} {\smax_{\eta}(\A(\vec{z} + \vec{d})) - \smax_{\eta}(\A\vec{z})} \geq \lambda.\]
\end{lemma}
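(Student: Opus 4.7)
The plan is to mirror the structure of Lemma~\ref{lem:increase-rate-mon}: upper bound the softmax increment in the denominator by $\langle\vec{c},\vec{d}\rangle/\lambda$, lower bound the $f$-increment in the numerator by $\langle\vec{c},\vec{d}\rangle$, and then divide. Two new complications must be handled relative to the monotone case. First, the step $\vec{d}=\eta\,\vec{x}\circ\vec{m}$ is scaled by $\vec{x}$, whereas the softmax is evaluated at $\vec{z}\geq\vec{x}$ (an invariant that follows by induction, since $\vec{x}$ and $\vec{z}$ start equal and $\vec{x}\gets \vec{x}+\vec{d}\circ(\vec{1}-\vec{x})$ grows by at most what $\vec{z}\gets \vec{z}+\vec{d}$ does), so Corollary~\ref{cor:ineq} does not apply verbatim. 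Second, $f$ is non-monotone, so $\nabla f$ can have negative entries, which interacts with the $\vee\,0$ clipping in $\vec{c}_i=(1-\vec{x}_i)\nabla_i f((1+\eta)\vec{x})\vee 0$.

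For the denominator, I would write $\vec{d}=\eta\,\M'\vec{z}$, where $\M'$ is the diagonal matrix with entries $\M'_{ii}=\vec{x}_i\vec{m}_i/\vec{z}_i$ (well defined since the initialization makes $\vec{z}_i>0$ throughout). Because $\vec{x}\leq\vec{z}$ coordinatewise and $\vec{m}\leq\vec{1}$, one has $\M'\preceq\I$, so Corollary~\ref{cor:ineq} applies at $\vec{z}$. The associated quadratic error term satisfies
\[
\eta(\M'^{\,2}\vec{z})_i \;=\; \eta\,\vec{x}_i^2\vec{m}_i^2/\vec{z}_i \;\leq\; \eta\,\vec{x}_i\vec{m}_i^2 \;=\; \vec{m}_i\vec{d}_i,
\]
so a one-line bookkeeping step yields
\[
\smax_\eta(\A(\vec{z}+\vec{d}))-\smax_\eta(\A\vec{z}) \;\leq\; \bigl\langle \A^\top\nabla\smax_\eta(\A\vec{z}),\; \vec{d}+\D(\vec{m})\vec{d}\bigr\rangle.
\]
Plugging in the identity $\vec{m}_i\,(\A^\top\nabla\smax_\eta(\A\vec{z}))_i=\vec{c}_i(\vec{m}_i-\vec{m}_i^2)/\lambda$, which follows from the definition of $\vec{m}_i$ (with the $\vec{c}_i=0$ case handled separately, since it forces $\vec{m}_i=0$ and both sides vanish), the right-hand side collapses to $(\eta/\lambda)\sum_i\vec{x}_i\vec{c}_i\vec{m}_i(1-\vec{m}_i^2)\leq \langle\vec{c},\vec{d}\rangle/\lambda$ via $1-\vec{m}_i^2\leq 1$.

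For the numerator, concavity of $f$ along the nonnegative direction $\vec{d}\circ(\vec{1}-\vec{x})$ gives
\[
f(\vec{x}+\vec{d}\circ(\vec{1}-\vec{x}))-f(\vec{x}) \;\geq\; \bigl\langle \nabla f(\vec{x}+\vec{d}\circ(\vec{1}-\vec{x})),\; \vec{d}\circ(\vec{1}-\vec{x})\bigr\rangle,
\]
and DR-submodularity, together with $\vec{x}+\vec{d}\circ(\vec{1}-\vec{x})\leq(1+\eta)\vec{x}$, gives $\nabla f(\vec{x}+\vec{d}\circ(\vec{1}-\vec{x}))\geq \nabla f((1+\eta)\vec{x})$. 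The hard part is that this pointwise inequality points the wrong way on coordinates where $\nabla_i f((1+\eta)\vec{x})<0$, which is precisely when the $\vee\,0$ in $\vec{c}_i$ activates. I would resolve this coordinatewise: on coordinates with $\vec{d}_i=0$ both sides of the target inequality vanish, while on coordinates with $\vec{d}_i>0$ the definition of $\vec{d}$ forces $\vec{m}_i>0$, hence $\vec{c}_i>0$, hence both $\vec{x}_i<1$ and $\nabla_i f((1+\eta)\vec{x})>0$, so the clipping is inactive and $\vec{c}_i=(1-\vec{x}_i)\nabla_i f((1+\eta)\vec{x})$. Then DR-submodularity gives exactly $\nabla_i f(\vec{x}+\vec{d}\circ(\vec{1}-\vec{x}))(1-\vec{x}_i)\geq \vec{c}_i$ on those coordinates, and summing over $i$ produces the lower bound $\langle\vec{c},\vec{d}\rangle$. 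Dividing the numerator bound by the denominator bound yields the claimed ratio $\geq\lambda$; the main obstacle throughout is the second step, where the interplay of non-monotonicity and the $\vee\,0$ clipping has to be matched against the coordinate support of $\vec{d}$.
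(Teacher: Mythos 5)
Your proof is correct and follows the same overall route as the paper's: bound the softmax increment from above by $\langle\vec{c},\vec{d}\rangle/\lambda$, bound the $f$-increment from below by $\langle\vec{c},\vec{d}\rangle$, and divide. One thing you do more carefully than the paper is spell out the application of Corollary~\ref{cor:ineq}. The paper's step (1) cites the corollary directly, but here the softmax is evaluated at $\A\vec{z}$ with $\vec{z}\geq\vec{x}$, while $\vec{d}=\eta\vec{x}\circ\vec{m}$; thus $\vec{d}$ is not of the form $\eta\,\D(\vec{m})\vec{z}$ and the corollary's ``furthermore'' clause (which fixes $\M_{ii}$ to a specific formula in terms of the point the softmax is evaluated at) does not apply verbatim. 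Your fix --- introducing $\M'_{ii}=\vec{x}_i\vec{m}_i/\vec{z}_i$, noting $\M'_{ii}\leq\vec{m}_i\leq 1$, and re-running the quadratic estimate to verify $(1+\M'_{ii})(\A^\top\nabla\smax_\eta(\A\vec{z}))_i\leq\vec{c}_i/\lambda$ still holds --- is exactly what is needed to make the denominator bound rigorous, and it is a worthwhile clarification of a step the paper glosses over. For the numerator, you use the tangent-at-the-far-endpoint form of concavity where the paper integrates the gradient inequality along the segment; the two are equivalent, and both resolve the $\vee\,0$ clipping via the same observation that $\vec{d}_i>0$ forces $\vec{c}_i>0$, so the clipping is inactive on the support of $\vec{d}$.
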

\begin{proof}
We have
\begin{align*}
&\smax_{\eta}(\A(\vec{z} + \vec{d})) - \smax_{\eta}(\A\vec{z})\\
&\overset{(1)}{\le}\frac{1}{\lambda} \langle \nabla f((1+\eta)\vec{x}) \circ (\vec{1}-\vec{x})\vee \vec{0}, \vec{d} \rangle\\
&\overset{(2)}{=}\frac{1}{\lambda} \langle \nabla f((1+\eta)\vec{x}) \circ (\vec{1}-\vec{x}), \vec{d} \rangle\\
&\overset{(3)}{\leq} \frac{1}{\lambda} \int_0^1 \left< \nabla f(\vec{x} + \alpha \vec{d} \circ (\vec{1} - \vec{x})), \vec{d} \circ (\vec{1} - \vec{x}) \right> d\alpha\\
&= \frac{1}{\lambda} \left(f(\vec{x}+ \vec{d}\circ(\vec{1}-\vec{x})) - f(\vec{x})\right)
\end{align*}
where (1) is by Corollary~\ref{cor:ineq}, (2) is because if $\vec{c}_i=0$ then $\vec{d}_i=0$, and (3) is because of DR-submodularity: for all $\alpha\in[0,1]$, the fact that $\vec{x}+\alpha \vec{d}\circ (\vec{1}-\vec{x}) \le (1+\eta)\vec{x}$ implies $\nabla f(\vec{x}+\alpha \vec{d}\circ (\vec{1}-\vec{x})) \ge \nabla f((1+\eta)\vec{x})$.
\end{proof}

\begin{lemma}\label{lem:phase-gain}
Let $t$ and $t'$ be the values of $t$ at the beginning and the end of an iteration. Assume that $t' \le 1$. Let $\vec{x}$ and $\vec{x}'$ be the values of $\vec{x}$ at the beginning and the end of the same iteration. We have
$$e^{t'}\cdot f(\vec{x}') \ge (1-2e\eps)(t'-t) M + e^{t}\cdot f(\vec{x})$$
\end{lemma}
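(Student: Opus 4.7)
The plan is to use Lemma~\ref{lem:increase-rate} as the only analytic input and then to carry out a discrete analogue of the ODE argument $\frac{d}{ds}(e^s f(\vec{x}_s)) \ge e^s f(\vec{x}_s) + e^s \lambda(s) = M(1-2\eps e^s) \ge M(1 - 2e\eps)$, valid on $[0,1]$. The task is to make sure the discretization error is absorbed cleanly to give the claimed constant $1-2e\eps$.

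First, let $\Delta t = t' - t \in [0,1]$ and $\lambda = M(e^{-t} - 2\eps) - f(\vec{x})$. Lemma~\ref{lem:increase-rate} says $f(\vec{x}') - f(\vec{x}) \ge \lambda \Delta t$, which rearranges algebraically (independent of the sign of $\lambda$) to
\[
f(\vec{x}') \;\ge\; (1-\Delta t)\,f(\vec{x}) \;+\; \Delta t\, M\bigl(e^{-t} - 2\eps\bigr).
\]
Multiplying by $e^{t'} = e^{t+\Delta t} > 0$ and then subtracting $e^t f(\vec{x})$ from both sides gives the decomposition
\[
e^{t'}f(\vec{x}') - e^{t}f(\vec{x}) \;\ge\; \underbrace{\bigl[(1-\Delta t)\,e^{t'} - e^{t}\bigr]\,f(\vec{x})}_{\text{discretization defect}} \;+\; \underbrace{\Delta t\, M\bigl(e^{\Delta t} - 2\eps\, e^{t'}\bigr)}_{\text{main term}}.
\]

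For the defect, I would factor out $e^t$ and use the elementary inequality $(1-\Delta t)e^{\Delta t} \ge (1-\Delta t)(1+\Delta t) = 1 - \Delta t^2$, giving $(1-\Delta t)e^{t'} - e^{t} \ge -e^{t}\Delta t^2$. The termination condition $f(\vec{x}) \le e^{-1-10\eps}M$ together with $t\le t'\le 1$ yields $e^{t}f(\vec{x}) \le e\cdot e^{-1-10\eps}M = e^{-10\eps}M \le M$, so the defect is at least $-\Delta t^2 M$. For the main term, I would use $e^{\Delta t} \ge 1 + \Delta t$ and $e^{t'} \le e$ (from $t' \le 1$) to get $\Delta t\,M(e^{\Delta t} - 2\eps e^{t'}) \ge \Delta t\, M(1 + \Delta t - 2e\eps)$, i.e., a $+\Delta t^2 M$ slack beyond $(1-2e\eps)\Delta t M$.

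Adding these bounds, the $-\Delta t^2 M$ from the defect cancels exactly with the $+\Delta t^2 M$ from the main term's slack, yielding $e^{t'}f(\vec{x}') - e^{t}f(\vec{x}) \ge (1-2e\eps)\Delta t\,M$, which is the claim. The main subtlety (the ``hard part'') is engineering this exact cancellation: a crude bound $e^{\Delta t} \ge 1$ in the main term would leave a residual $-\Delta t^2 M$ and produce a worse constant like $1 - (2e+1/2)\eps$, so one has to retain the linear slack $+\Delta t$ from $e^{\Delta t} \ge 1+\Delta t$ and pair it against the $1-\Delta t^2$ lower bound on $(1-\Delta t)e^{\Delta t}$. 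Everything else is manipulation, and the loop's invariant $f(\vec{x}) \le e^{-1-10\eps}M$ is precisely what is needed to control $e^t f(\vec{x})$ by $M$.
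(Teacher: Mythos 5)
Your proof is correct, and it takes a genuinely different route from the paper's. The paper first observes that the update is in a direction of non-negative gradient — because $\vec{d}_i\neq 0$ only when $\vec{c}_i>0$, and DR-submodularity gives $\nabla_i f(\vec{x}+\alpha\vec{d}\circ(\vec{1}-\vec{x}))\geq \nabla_i f((1+\eta)\vec{x})\geq 0$ for $\alpha\in[0,1]$ — so $f(\vec{x}')\geq f(\vec{x})$. This lets it replace $(t'-t)f(\vec{x})$ by $(t'-t)f(\vec{x}')$ on the left of the linear bound, giving $f(\vec{x}')(1+t'-t)\geq f(\vec{x})+M(e^{-t}-2\eps)(t'-t)$, and then it only needs $1+s\leq e^s$ applied to the non-negative quantity $f(\vec{x}')$ and $e^t\leq e$ to finish. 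Your version skips the monotonicity observation entirely; instead you multiply the linear bound by $e^{t'}$, isolate a discretization defect $[(1-\Delta t)e^{t'}-e^t]f(\vec{x})$, and pay for it by using the while-loop invariant $f(\vec{x})\leq e^{-1-10\eps}M$ (so $e^t f(\vec{x})\leq M$ via $t\leq 1$) together with $(1-\Delta t)e^{\Delta t}\geq 1-\Delta t^2$, arranging the $\pm\Delta t^2 M$ contributions to cancel. The trade-off: the paper's proof is shorter and does not invoke the loop invariant at all (making it self-contained given Lemma~\ref{lem:increase-rate} plus the sign fact), while yours avoids the gradient-sign observation at the cost of needing the termination threshold, and makes the second-order error accounting fully explicit. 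Both establish the stated inequality.
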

\begin{proof}
Note that $\vec{x}' \le (1+\eta)\vec{x}$ coordinate-wise and therefore $\nabla_i f(\vec{x}') \ge 0$ for all $i$ such that $\vec{d}_i \neq 0$. Thus $f(\vec{x}') \ge f(\vec{x})$. Moreover,
\[
f(\vec{x}') \ge f(\vec{x}) + \lambda (t' - t)
= f(\vec{x}) + (M \cdot (e^{-t}-2\eps) - f(\vec{x})) (t' - t)
\]
Therefore
\[
f(\vec{x}') + (t'-t) f(\vec{x}) \geq f(\vec{x}) + M\cdot (e^{-t}-2\eps) (t' - t)
\]
Since $f(\vec{x}') \geq f(\vec{x})$, we obtain
\[
f(\vec{x}') \left(1 + t' - t \right)
\ge f(\vec{x}) + M\cdot (e^{-t}-2\eps)(t' - t)
\]
Thus
\[
f(\vec{x}') e^{t' - t} \ge f(\vec{x}) + M\cdot (e^{-t}-2\eps) (t' - t)
\]
It follows that
\[
f(\vec{x}') e^{t'}
\ge f(\vec{x}) e^{t} + M \cdot e^{t}(e^{-t}-2\eps) (t' - t)
\ge f(\vec{x}) e^{t} + M \cdot (1-2e\eps) (t' - t)
\]
\end{proof}

\begin{lemma}
\label{lem:x-norm}
We have the invariant that $\|x\|_{\infty} \le (1+\eps)(1-e^{-t})$.
\end{lemma}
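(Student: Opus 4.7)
The plan is to establish a stronger per-coordinate invariant $\vec{x}_i \le (1+\eps)(1 - e^{-\vec{z}_i})$ and then deduce the stated bound. The reduction is immediate: since the constraints $\vec{x} \le (1-\eps)\vec{1}$ are included as rows of $\A$, the row enforcing $\vec{x}_i \le 1-\eps$ is $\vec{e}_i^\top$, so $\vec{z}_i = (\A\vec{z})_{\text{that row}} \le \|\A\vec{z}\|_\infty \le \smax_\eta(\A\vec{z}) = t$. Because $a \mapsto (1+\eps)(1-e^{-a})$ is non-decreasing, the strengthened invariant gives $\vec{x}_i \le (1+\eps)(1-e^{-t})$ for every $i$, which is what the lemma asks for.

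I would prove the stronger invariant by induction on iterations. At initialization, $\vec{x}_i = \vec{z}_i = \tfrac{\eps}{n\|\A_{:i}\|_\infty} \le \eps/n$ (since the unit row $\vec{e}_i$ forces $\|\A_{:i}\|_\infty \ge 1$), and the elementary bound $1-e^{-a} \ge a - a^2/2$ yields $(1+\eps)(1-e^{-a}) \ge (1+\eps)a(1-a/2) \ge a$ whenever $a \le 2\eps/(1+\eps)$, which is comfortably satisfied.

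For the inductive step, the updates give $1-\vec{x}_i^{\text{new}} = (1-\vec{d}_i)(1-\vec{x}_i)$ and $\vec{z}_i^{\text{new}} = \vec{z}_i + \vec{d}_i$, where $\vec{d}_i = \eta\vec{x}_i\vec{m}_i \in [0,\eta]$. Using the hypothesis $1-\vec{x}_i \ge (1+\eps)e^{-\vec{z}_i} - \eps$, it suffices to verify
\[
(1-\vec{d}_i)\bigl[(1+\eps)e^{-\vec{z}_i} - \eps\bigr] \;\ge\; (1+\eps)e^{-\vec{z}_i-\vec{d}_i} - \eps,
\]
which rearranges to
\[
(1+\eps)e^{-\vec{z}_i}\bigl[(1-\vec{d}_i) - e^{-\vec{d}_i}\bigr] + \vec{d}_i\eps \;\ge\; 0.
\]
The standard estimate $e^{-x} - (1-x) \le x^2/2$ for $x \ge 0$ makes the bracket at least $-\vec{d}_i^2/2$, so the left-hand side is at least $\vec{d}_i\bigl[\eps - (1+\eps)\vec{d}_i/2\bigr]$, which is non-negative because $\vec{d}_i \le \eta = \eps/(2\ln m) \le 2\eps/(1+\eps)$.

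The main obstacle to watch is that a naive linear comparison $\vec{x}_i \le \vec{z}_i$, coming from $(1-\vec{d}_i)(1-\vec{x}_i) \ge 1-\vec{d}_i-\vec{x}_i$, is too loose: since $(1+\eps)(1-e^{-t}) < t$ for moderately large $t$, this would not close the induction. One must keep the multiplicative form $(1-\vec{d}_i)(1-\vec{x}_i)$ and spend the $(1+\eps)$ slack in the invariant on absorbing the $O(\vec{d}_i^2)$ gap between $1-\vec{d}_i$ and $e^{-\vec{d}_i}$; the step-size choice $\eta = \eps/(2\ln m)$ is exactly what makes this quadratic error fit inside the $\vec{d}_i \eps$ budget.
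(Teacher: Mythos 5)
Your proposal is correct and takes essentially the same approach as the paper: both prove the per-coordinate invariant $\vec{x}_i \le (1+\eps)(1-e^{-\vec{z}_i})$ by induction, both close the inductive step by exploiting the multiplicative update $1-\vec{x}_i^{\text{new}}=(1-\vec{d}_i)(1-\vec{x}_i)$ together with the Taylor bound $e^{-\vec{d}_i}-(1-\vec{d}_i)\le \vec{d}_i^2/2$, absorbing the quadratic error into the $(1+\eps)$ slack via $\vec{d}_i\le\eta\le 2\eps/(1+\eps)$, and both finish via $\vec{z}_i \le \smax_\eta(\A\vec{z})=t$ because the identity row for $\vec{x}_i\le 1-\eps$ is included in $\A$. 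The only difference is cosmetic: the paper multiplies the invariant through by $e^{\vec{z}_i}$ and tracks $\vec{x}_ie^{\vec{z}_i}$, while you track $1-\vec{x}_i$ directly.
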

\begin{proof}
Consider coordinate $i$. We will show by induction that the algorithm maintains the invariant $x_i e^{z_i} \leq (1 + \eps) (e^{z_i} - 1)$. Consider an iteration and let $\vec{x}$ and $\vec{z}$ denote the respective vectors at the beginning of the iteration, and $\vec{x}'$ and $\vec{z}'$ denote the vectors at the end of the iteration. We have
\[
\vec{x}_i' = \vec{x}_i + \vec{d}_i (1-\vec{x}_i)
= \vec{x}_i (1 - \vec{d}_i) + \vec{d}_i
\leq \vec{x}_i e^{- \vec{d}_i} + \vec{d}_i
= \vec{x}_i e^{-(\vec{z}'_i - \vec{z}_i)} + d_i
\]
Therefore
\begin{align*}
\vec{x}_i' e^{\vec{z}_i'} 
&\le \vec{x}_i e^{\vec{z}_i} + e^{\vec{z}_i'} \vec{d}_i\\
&\overset{(1)}{\leq} (1+\eps) (e^{\vec{z}_i} - 1) + e^{\vec{z}_i'} \vec{d}_i\\
&\overset{(2)}{\leq} (1 + \eps) (e^{\vec{z}_i} - 1) + \frac{1}{1 - \vec{d}_i/2}e^{\vec{z}_i'} (1-e^{- \vec{d}_i})\\
&= (1 + \eps) (e^{\vec{z}_i} - 1) + \frac{1}{1 - \vec{d}_i/2} (e^{\vec{z}_i'} - e^{\vec{z}_i})\\
&\overset{(3)}{\leq} (1+\eps) (e^{\vec{z}_i'} - 1)
\end{align*}
where (1) is by the induction hypothesis, (2) is by the inequality $1-e^{-a} \ge a - a^2/2$ with $a = \vec{d}_i$, and (3) is by $\vec{d}_i \leq \eps$. 

Thus the algorithm maintains the invariant $x_i \leq (1 + \eps) (1 - e^{-z_i})$. Finally, note that $z_i \leq \smax_{\eta}(\A\vec{z}) = t$, since the constraint matrix $\A$ includes a row for the constraint $x_i \leq 1$. The lemma now follows.
\end{proof}

\begin{lemma}
\label{lem:non-empty-S}
In every iteration, we have $\vec{d} \neq \vec{0}$.
\end{lemma}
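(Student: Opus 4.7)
The plan is to argue by contradiction. Assume $\vec{d} = \vec{0}$; I would derive an LP-type upper bound on $\langle \vec{c}, \OPT \rangle$ from the update rule, a matching measured-continuous-greedy lower bound on the same quantity, and combine them with the termination condition $f(\vec{x}) \leq \exp(-1-10\eps)M$ to reach a contradiction. As a preliminary sign check, observe that if $\lambda \leq 0$ and any $\vec{c}_i > 0$, then $\vec{m}_i = (1 - \lambda(\A^\top\nabla\smax_\eta(\A\vec{z}))_i/\vec{c}_i) \vee 0 \geq 1$, and $\vec{x}_i > 0$ (by the initialization) already gives $\vec{d}_i > 0$. So we may assume either $\lambda > 0$, or $\lambda \leq 0$ with all $\vec{c}_i = 0$ (the latter forces $\nabla_i f((1+\eta)\vec{x}) \leq 0$ for every $i$, since $\vec{x}_i < 1$ by Lemma~\ref{lem:x-norm}).

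In the main case $\lambda > 0$, the assumption $\vec{d} = \vec{0}$ forces $\vec{c}_i \leq \lambda \cdot (\A^\top \nabla \smax_\eta(\A \vec{z}))_i$ for every coordinate $i$. Taking the inner product with $\OPT$ and using $\|\nabla \smax_\eta(\A \vec{z})\|_1 = 1$ together with the packing bound $\|\A \OPT\|_\infty \leq 1 - \eps$ gives $\langle \vec{c}, \OPT \rangle \leq \lambda(1-\eps)$. For the matching lower bound I would use that, whenever $\nabla_i f((1+\eta)\vec{x}) \geq 0$, one has the pointwise inequality $(\OPT_i - (1+\eta)\vec{x}_i)^+ \leq \OPT_i(1-\vec{x}_i)$, while coordinates with a negative gradient are handled trivially by the $\vee\vec{0}$ in the definition of $\vec{c}$. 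Combining this with concavity of $f$ along the nonnegative direction $(1+\eta)\vec{x}\vee\OPT - (1+\eta)\vec{x}$ yields
\[
\langle \vec{c}, \OPT \rangle \;\geq\; f((1+\eta)\vec{x}\vee\OPT) - f((1+\eta)\vec{x}).
\]
Applying Lemma~\ref{lem:x-or-opt} with the invariant $\|(1+\eta)\vec{x}\|_\infty \leq (1+\eta)(1+\eps)(1-e^{-t})$ from Lemma~\ref{lem:x-norm} then gives $f((1+\eta)\vec{x}\vee\OPT) \geq (e^{-t} - O(\eps))M$, while concavity of $f$ along $\vec{x}$ gives $f((1+\eta)\vec{x}) \leq (1+\eta)f(\vec{x})$.

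Substituting $\lambda = M(e^{-t} - 2\eps) - f(\vec{x})$ into the combined inequality reduces the supposed conclusion $\vec{d} = \vec{0}$ to a bound of the form $\eps\lambda + (\eps - \eta(1+\eps))M \leq \eta f(\vec{x})$. The while-loop invariant $t \leq 1$ together with the termination condition $f(\vec{x}) \leq \exp(-1-10\eps)M$ force $\lambda \geq (10/e-2)\eps M - O(\eps^2) > 0$ near $t = 1$ (and $\lambda$ is only larger for smaller $t$); together with $\eta = \eps/(2\ln m) \leq \eps/2$ for $m \geq e$, this yields a strict contradiction. The leftover case where all $\vec{c}_i = 0$ is handled analogously: $\nabla_i f((1+\eta)\vec{x}) \le 0$ for all $i$ and DR-submodularity together imply $f((1+\eta)\vec{x}\vee\OPT) \leq f((1+\eta)\vec{x}) \leq (1+\eta)f(\vec{x})$, and pairing this with Lemma~\ref{lem:x-or-opt} again contradicts the termination condition. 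The main obstacle is tracking the $O(\eps)$ slacks carefully: the constant $10$ in the termination threshold is tuned precisely so that the gap between $M(e^{-t}-2\eps)$ and $\exp(-1-10\eps)M$ dominates the multiplicative $(1+\eta)(1+\eps)$ loss inherited from the bound on $\|\vec{x}\|_\infty$.
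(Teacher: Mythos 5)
Your proposal is correct and follows essentially the same route as the paper's proof: derive the upper bound $\langle\vec{c},\OPT\rangle\le\lambda$ from the contradiction assumption $\vec{d}=\vec{0}$, derive the lower bound $\langle\vec{c},\OPT\rangle\ge f(\OPT\vee(1+\eta)\vec{x})-f((1+\eta)\vec{x})$ via concavity along nonnegative directions, and then use Lemma~\ref{lem:x-or-opt} with the $\|\vec{x}\|_\infty$ invariant from Lemma~\ref{lem:x-norm} together with the while-loop condition to exhibit the contradiction. The explicit case split on the sign of $\lambda$ is a small extra not present in the paper (and not strictly needed since $\lambda>0$ is implicit throughout the run), and your use of $\|\A\OPT\|_\infty\le1-\eps$ to sharpen the upper bound to $\lambda(1-\eps)$ is a harmless but unnecessary tightening.
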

\begin{proof}
Suppose for contradiction that there is an iteration for which $\vec{d} = \vec{0}$. For all $i \in [n]$, we have
  \[ \frac{\nabla_i f((1+\eta)\vec{x}) (1-\vec{x}_i)}{(\A^\top \nabla \smax_{\eta}(\A \vec{z}))_i} < \lambda\]

Therefore
\begin{align*}
f(\vec{x}^* \vee (1 + \eta)\vec{x}) - f((1 + \eta) \vec{x}) &\le  \langle \nabla f((1 + \eta) \vec{x}), \vec{x}^*\vee (1 + \eta)\vec{x} - (1 + \eta)\vec{x}) \rangle\\
&\overset{(1)}{\le} \langle \nabla f((1 + \eta) \vec{x}) \vee \vec{0}, (\vec{1} - (1 + \eta) \vec{x}) \circ \vec{x}^* \rangle \\
&\le \lambda \langle \A^\top\nabla \smax_{\eta}(\A \vec{z}) , \vec{x}^*\rangle\\
&\overset{(2)}{\le} \lambda\\
&= M\cdot (e^{-t} - 2\eps) - f(\vec{x})
\end{align*}
where (1) is due to $(a\vee b) - b \le a - ab~\forall a,b\in [0,1]$ and (2) is due to $\|\nabla \smax_{\eta}(\A \vec{z}))\|_1 = 1$ and $\|A\vec{x}^*\|_{\infty} \le 1$.

On the other hand, we have
\begin{align*}
&f(\vec{x}^* \vee ((1 + \eta)\vec{x})) - f((1 + \eta) \vec{x})\\
&\overset{(1)}{\ge} (1-\|(1 + \eta)\vec{x}\|_{\infty}) f(\vec{x}^*) - (1 + \eta) f(\vec{x})\\
&\overset{(2)}{\ge} ((1 + \eta)(1+\eps)e^{-t} - 2\eps)f(\vec{x}^*) - (1 + \eta) f(\vec{x})\\
&\overset{(3)}{\ge} ((1 + \eta)(1 + \eps)e^{-t} - 2\eps) \frac{M}{1 + \eps}- (1 + \eta) f(\vec{x})\\
&\ge (1 + \eta)\left(\left(e^{-t} - \frac{2\eps}{(1+\eta)(1+\eps)}\right)M- f(\vec{x})\right)
\end{align*}
In (1), we have used Lemma~\ref{lem:x-or-opt} to lower bound $f(\vec{x}^* \vee ((1 + \eta) \vec{x}))$ 
and concavity in non-negative directions (due to DR-submodularity) to upper bound $f((1 + \eta) \vec{x}) \leq (1 + \eta) f(\vec{x})$. In (2), we used Lemma~\ref{lem:x-norm}. In (3), we have used that $f(\OPT) \leq (1 + \eps) M$.

By comparing the two inequalities, we see that we have a contradiction.
\end{proof}

By the design of the algorithm, the final solution is a good approximation. We will show that it satisfies the constraints.

\begin{lemma}
\label{lem:solution-feasible}
The solution $\vec{x}$ returned by the algorithm satisfies $\|\A \vec{x}\|_{\infty} \leq \smax_{\eta}(\A\vec{x}) \leq 1-2\eps$. 
\end{lemma}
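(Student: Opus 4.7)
The approach is to bound $t = \smax_\eta(\A\vec{z})$ throughout the execution and transfer this bound to $\vec{x}$ via the elementary fact that $\vec{x}\le\vec{z}$ coordinatewise. The latter holds because both start equal and at each step the $\vec{x}$-update is $\vec{d}\circ(\vec{1}-\vec{x})$, which is dominated by the $\vec{z}$-update $\vec{d}$ (since $\vec{x}\ge \vec{0}$). Consequently $\|\A\vec{x}\|_\infty \le \smax_\eta(\A\vec{x}) \le \smax_\eta(\A\vec{z}) = t$, and it suffices to show $t \le 1-2\eps$ at termination. I would also record the elementary one-step bound $\smax_\eta(\vec{a}+\vec{b}) - \smax_\eta(\vec{a}) \le \|\vec{b}\|_\infty$ for $\vec{b}\ge \vec{0}$, which, applied with $\vec{b}=\A\vec{d}^{(j)}\le \eta \A\vec{x}^{(j)}$, yields
\[
t^{(j+1)} - t^{(j)} \le \|\A\vec{d}^{(j)}\|_\infty \le \eta\|\A\vec{x}^{(j)}\|_\infty \le \eta \cdot t^{(j)},
\]
so whenever $t^{(j)}\le 1$, a single iteration increases $t$ by at most $\eta \le \eps$.

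The bulk of the work is to bound $t^{(T-1)}$, where $T$ is the final iteration of the while loop. I would proceed by strong induction on $j$: assuming $t^{(j')}\le 1$ for every $j' < j$, Lemma~\ref{lem:phase-gain} applies to each step up to $j$, and telescoping gives
\[
e^{t^{(j)}}f(\vec{x}^{(j)}) \ge (1-2e\eps)(t^{(j)}-t^{(0)})M + e^{t^{(0)}}f(\vec{x}^{(0)}).
\]
For any $j\le T-1$ the loop still runs at iteration $j+1$, so $f(\vec{x}^{(j)}) \le \exp(-1-10\eps)M$; dropping the nonnegative initial term yields
\[
e^{t^{(j)}-1-10\eps} \ge (1-2e\eps)(t^{(j)}-t^{(0)}).
\]
As in the initialization analysis of Lemma~\ref{lem:infnorm}, $t^{(0)} \le \eta\ln m + \|\A\vec{x}^{(0)}\|_\infty \le 2\eps$. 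Setting $\phi(t) = e^{t-1-10\eps} - (1-2e\eps)(t-t^{(0)})$, the function $\phi$ is convex with $\phi(t^{(0)}) > 0$ and $\phi(1) = e^{-10\eps}-(1-2e\eps)(1-t^{(0)}) < 0$ for small enough $\eps$, so $\phi$ has exactly two roots $t^*_1\in(t^{(0)},1)$ and $t^*_2>1$. The constraint $\phi(t^{(j)}) \ge 0$ therefore forces $t^{(j)}\in [0,t^*_1]\cup [t^*_2,\infty)$.

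To complete the induction, note that $t^{(0)}<t^*_1$ and each increment is at most $\eta = O(\eps/\log m)$, which is asymptotically smaller than the gap $t^*_2 - t^*_1$: a second-order expansion $t^*_1 = 1-\beta$, matching $e^{-\beta-10\eps} = (1-2e\eps)(1-\beta-t^{(0)})$, gives $\beta \approx \sqrt{2(10-2e)\eps} = \Theta(\sqrt\eps)$, so in particular $t^*_1 \le 1-3\eps$ for all sufficiently small $\eps$. Hence the iterates cannot leap across the forbidden region and remain in $[0,t^*_1]$, which closes the induction and gives $t^{(T-1)}\le 1-3\eps$. Adding the final step's increment yields $t^{(T)} \le t^{(T-1)}+\eta \le 1-3\eps + \eps = 1-2\eps$, and the desired inequality $\|\A\vec{x}\|_\infty \le \smax_\eta(\A\vec{x}) \le t^{(T)} \le 1-2\eps$ follows. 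The main obstacle is precisely the circularity inherent in the induction: the precondition $t\le 1$ needed to invoke Lemma~\ref{lem:phase-gain} is what the lemma's conclusion is used to prove. The resolution is quantitative, hinging on the fact that the two feasible branches of $\phi\ge 0$ are separated by $\Omega(\sqrt\eps) \gg \eta$, so the per-step increments cannot smuggle $t$ from one branch to the other.
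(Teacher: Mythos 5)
Your proof is correct and takes essentially the same route as the paper's: telescope Lemma~\ref{lem:phase-gain}, combine with the while-loop condition to obtain $(1-2e\eps)(t-t^{(0)}) \le \exp(t-1-10\eps)$ at the end of the next-to-last iteration, argue that this forces $t < 1-3\eps$, and then add the final step's increment of at most $\eps$. The paper handles the induction and the exclusion of the branch $t\gg 1$ rather tersely (it just plugs in $t = 1-3\eps$ and implicitly appeals to the small steps), so your explicit observation that $\vec{x}\leq\vec{z}$ coordinatewise and your quantitative argument that the two roots of the convex auxiliary function $\phi$ are separated by $\Omega(\sqrt{\eps}) \gg \eta$ supply welcome rigor for exactly the points the paper leaves implicit.
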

\begin{proof}
We show that the algorithm maintains the invariant that $\smax_{\eta}(\A \vec{x}) \leq 1-\eps$. Let $\vec{x}_0$ is the initial vector defined on line~\ref{line:initial-x}. We have $\smax_{\eta}(\A \vec{x}_0) \leq \eta \ln{m} + \|\A\vec{x}_0\|_{\infty} \leq 2\eps \leq 1$. By Lemma~\ref{lem:phase-gain} and induction over the iterations, at the end of the next to last iteration, we have 
\[
(1-2e\eps) (t - 2\eps) M \le e^{t} f(\vec{x}) < \exp\left(t-1-10\eps \right) M
\]

If $t \ge 1-3\eps$ then we have $(1-2e\eps)(1-5\eps) M < \exp(-13\eps) M$, which is not true for sufficiently small $\eps$. Therefore we must have $t < 1-3\eps$. The final iteration increases $t$ by at most $\eps$ so in the end $t < 1-2\eps$.
\end{proof}

Next, we bound from above the number of iterations. For the remainder of the analysis, we use $j$ to index the iterations of the algorithm and we let $\vec{x}^{(j)}$ and $\vec{x}^{(j + 1)}$ be the vector $\vec{x}$ at the beginning and end of iteration $j$, respectively. We define $\vec{z}^{(j)}$ and $\vec{z}^{(j + 1)}$ analogously. We let $\lambda^{(j)}, \vec{c}^{(j)}, \vec{m}^{(j)}, \vec{d}^{(j)}$ be the variables defined in iteration $j$. 

First we bound the total increment on each coordinate.

\begin{lemma}\label{lem:coord-iter-bound}
Consider coordinate $i$. If the final value of $\vec{x}_i$ is at most $1-\eps$ then $\sum_j \vec{m}^{(j)}_i = O(\log(n/\eps)/\eta)$.
\end{lemma}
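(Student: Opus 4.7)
The plan is to mimic the monotone proof of Lemma~\ref{lem:coord-iter-bound-mon}, but with care for the extra $(1-\vec{x}_i)$ factor that appears in the non-monotone update $\vec{x}^{(j+1)}_i = \vec{x}^{(j)}_i + \eta \vec{x}^{(j)}_i \vec{m}^{(j)}_i (1-\vec{x}^{(j)}_i) = \vec{x}^{(j)}_i\bigl(1+\eta \vec{m}^{(j)}_i(1-\vec{x}^{(j)}_i)\bigr)$. This factor kills the clean multiplicative growth once $\vec{x}_i$ approaches $1$, so a direct application of the monotone argument would only give a bound of $O(\log(n/\eps)/(\eta\eps))$, which is off by a $1/\eps$ factor. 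The fix is to partition the iterations into two phases depending on whether $\vec{x}^{(j)}_i \le 1/2$, and use the multiplicative update on $\vec{x}_i$ in the first phase and the purely additive update on $\vec{z}_i$ in the second.

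Concretely, since the $\vec{x}$-sequence is coordinate-wise non-decreasing, there exists a (possibly empty) prefix of iterations $j < j_0$ on which $\vec{x}^{(j)}_i \le 1/2$ and a suffix $j \ge j_0$ on which $\vec{x}^{(j)}_i > 1/2$. In the prefix, $1-\vec{x}^{(j)}_i \ge 1/2$, so $\vec{x}^{(j+1)}_i \ge \vec{x}^{(j)}_i (1+\eta \vec{m}^{(j)}_i/2) \ge \vec{x}^{(j)}_i \exp(\eta \vec{m}^{(j)}_i/4)$ (using $1+a\ge e^{a/2}$ for $a\in[0,1]$, valid since $\eta \vec{m}^{(j)}_i(1-\vec{x}^{(j)}_i)\le \eta\le 1$). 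Telescoping and using the initialization $\vec{x}^{(1)}_i \ge \eps^2/n^2$ from line~\ref{line:initial-x}, together with $\vec{x}^{(j_0)}_i \le 1$, yields
\[ \sum_{j<j_0} \vec{m}^{(j)}_i \;\le\; \frac{4}{\eta}\ln\!\frac{\vec{x}^{(j_0)}_i}{\vec{x}^{(1)}_i} \;=\; O\!\left(\frac{\log(n/\eps)}{\eta}\right). \]

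For the suffix, the plan is to switch from tracking $\vec{x}_i$ to tracking $\vec{z}_i$. The $\vec{z}$-update is simply additive: $\vec{z}^{(j+1)}_i - \vec{z}^{(j)}_i = \eta \vec{x}^{(j)}_i \vec{m}^{(j)}_i \ge (\eta/2)\vec{m}^{(j)}_i$, since now $\vec{x}^{(j)}_i>1/2$. Because the constraint matrix $\A$ contains the row $\vec{e}_i$ (the box constraint $\vec{x}_i\le 1-\eps$ was folded into $\A$), we have $\vec{z}^{(j)}_i \le \|\A\vec{z}^{(j)}\|_\infty \le \smax_\eta(\A\vec{z}^{(j)})\le 1$ throughout the algorithm, by the invariant established in (the proof of) Lemma~\ref{lem:solution-feasible}. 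Therefore
\[ \frac{\eta}{2}\sum_{j\ge j_0} \vec{m}^{(j)}_i \;\le\; \vec{z}^{(T+1)}_i - \vec{z}^{(j_0)}_i \;\le\; 1, \]
giving $\sum_{j\ge j_0} \vec{m}^{(j)}_i = O(1/\eta)$. Adding the two phases yields the claimed bound $O(\log(n/\eps)/\eta)$.

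The main obstacle is precisely the shrinking $(1-\vec{x}_i)$ multiplier: a single-phase argument either loses a $1/\eps$ factor (if one forces multiplicative growth of $\vec{x}_i$ using only $1-\vec{x}_i\ge\eps$) or cannot close the loop (if one tries to push multiplicative growth of $\vec{z}_i$, since $\vec{z}_i$ grows additively with a rate proportional to $\vec{x}_i$, not $\vec{z}_i$). The two-phase split sidesteps both issues: while $\vec{x}_i$ is still below the threshold it grows multiplicatively at the full rate and can only double a logarithmic number of times before hitting $1/2$, whereas once $\vec{x}_i$ exceeds $1/2$ the total additive increment to $\vec{z}_i$ is bounded by the global softmax budget. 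The assumption $\vec{x}^{(T+1)}_i\le 1-\eps$ is only used implicitly through this split; it ensures that we never enter a degenerate regime in which $\vec{m}^{(j)}_i$ could be large while both analyses fail.
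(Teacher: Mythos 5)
Your proof is correct, and it uses the same two-phase split at $\vec{x}_i=1/2$ as the paper, with an identical prefix argument (multiplicative growth of $\vec{x}_i$ while $1-\vec{x}_i\ge 1/2$). The suffix, however, is handled differently, and it is worth pointing out what each approach buys.

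The paper's suffix argument tracks the \emph{multiplicative decay} of $1-\vec{x}_i$: from $1-\vec{x}^{(j+1)}_i \le (1-\vec{x}^{(j)}_i)\exp(-\eta\vec{m}^{(j)}_i/2)$ and the hypothesis that the final $\vec{x}_i\le 1-\eps$, it derives $\sum_{j>j_1}\vec{m}^{(j)}_i = O(\log(1/\eps)/\eta)$. Your suffix argument instead tracks the \emph{additive growth} of $\vec{z}_i$: since $\vec{z}^{(j+1)}_i-\vec{z}^{(j)}_i = \eta\vec{x}^{(j)}_i\vec{m}^{(j)}_i \ge (\eta/2)\vec{m}^{(j)}_i$ once $\vec{x}^{(j)}_i>1/2$, and the box-constraint row in $\A$ forces $\vec{z}_i \le \|\A\vec{z}\|_\infty \le \smax_\eta(\A\vec{z})\le 1$ throughout (the invariant maintained in the proof of Lemma~\ref{lem:solution-feasible}, which does not depend on this lemma), the total is $O(1/\eta)$. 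This gives a marginally tighter bound for the suffix and, more notably, makes the hypothesis $\vec{x}^{(T+1)}_i\le 1-\eps$ entirely unnecessary. The trade-off is that your argument leans on two algorithm-specific facts — that the box constraints are folded into $\A$, and that the softmax potential is accumulated on $\vec{z}$ rather than $\vec{x}$ — while the paper's argument is a self-contained manipulation of $\vec{x}_i$ alone. Both are valid; your route is arguably cleaner given that the algorithm already has this structure. Your diagnosis of the obstacle (a one-phase argument loses a $1/\eps$ factor because $1-\vec{x}_i$ can shrink to $\eps$) is also exactly right.
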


\begin{proof}
Let $j_1$ be the last iteration such that $\vec{x}_i^{(j_1)} \le 1/2$. For any iteration $j\le j_1$, we have
\[
\vec{x}^{(j+1)}_i=\vec{x}^{(j)}_i + \vec{d}^{(j)}_i (1-\vec{x}^{(j)}_i) \ge \vec{x_i}^{(j)} (1+\eta \vec{m}^{(j)}_i /2) \ge \vec{x_i}^{(j)} \exp(\eta\vec{m}^{(j)}_i / 4)
\]
where we used $1+z \ge \exp(z/2)$ for all $z\le 1$. Because the initial value of $\vec{x}_i$ is at least $\eps^2/n^2$, we have
\[
n^2/\eps^2 \ge \vec{x}^{(j_1)}_i/\vec{x}^{(1)}_i \ge \exp\left(\frac{\eta}{4}\sum_{j\le j_1} \vec{m}^{(j)}_i\right)
\]
which implies $\sum_{j\le j_1} \vec{m}^{(j)}_i = O(\log(n/\eps)/\eta)$.

For any iteration $j > j_1$, we have
\[
1-\vec{x}^{(j+1)}_i=1-\vec{x}^{(j)}_i - \vec{d}^{(j)}_i (1-\vec{x}^{(j)}_i) \le (1-\vec{x_i}^{(j)}) (1-\eta \vec{m}^{(j)}_i /2) \le (1-\vec{x_i}^{(j)}) \exp(-\eta\vec{m}^{(j)}_i / 2)
\]

Let $j_2$ be the final iteration. We have
\[
\eps \le \left(1-\vec{x}^{(j_2)}_i\right)/\left(1-\vec{x}^{(j_1+1)}_i\right) \le \exp\left(-\frac{\eta}{2}\sum_{j> j_1} \vec{m}^{(j)}_i\right)
\]
which implies $\sum_{j> j_1} \vec{m}^{(j)}_i = O(\log(1/\eps)/\eta)$. The lemma follows from adding up the two sums.
\end{proof}

\begin{lemma}
\label{lem:large-step-iterations}
The number of iterations is at most $O\left(\frac{\log(n/\eps)\log(1/\eps)}{\eps \eta}\right)=O\left(\frac{\log(n/\eps)\log(1/\eps) \log(m)}{\eps^2}\right)$.
\end{lemma}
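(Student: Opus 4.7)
The plan is to closely follow the structure of Lemma~\ref{lem:large-step-iterations-mon}: use the per-coordinate upper bound $\sum_j \vec{m}_i^{(j)} = O(\log(n/\eps)/\eta)$ from Lemma~\ref{lem:coord-iter-bound}, and argue that on average some coordinate's $\vec{m}_i$ accumulates at rate $\Omega(\eps)$ per iteration. The crucial new difficulty is that in the non-monotone algorithm the Lagrange parameter $\lambda^{(j)} = M(e^{-t^{(j)}} - 2\eps) - f(\vec{x}^{(j)})$ ranges over an interval of multiplicative size $\Theta(1/\eps)$ rather than $\Theta(1)$: $t$ can rise to $1-2\eps$ (making $e^{-t}$ as small as $\Theta(\eps)$) while the termination condition keeps $f(\vec{x}) \leq e^{-1-10\eps} M$. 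A short calculation shows that throughout the run $\lambda^{(j)} \in [\Omega(\eps M), M]$. This extra $\Theta(1/\eps)$ range is exactly what contributes the additional $\log(1/\eps)$ factor over Lemma~\ref{lem:large-step-iterations-mon}.

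First I would record the counterparts of the two structural observations from the monotone proof. Repeating the argument of Lemma~\ref{lem:non-empty-S}, the chain $\langle \vec{c}^{(j)}, \OPT \rangle \geq \langle \nabla f((1+\eta)\vec{x}^{(j)}) \vee \vec{0},\, (\vec{1}-(1+\eta)\vec{x}^{(j)}) \circ \OPT \rangle \geq f(\OPT \vee (1+\eta)\vec{x}^{(j)}) - f((1+\eta)\vec{x}^{(j)}) \geq \lambda^{(j)}$ holds (the first step by the definition of $\vec{c}$, the second by concavity along non-negative directions, and the third combining Lemma~\ref{lem:x-or-opt} with Lemma~\ref{lem:x-norm} exactly as in Lemma~\ref{lem:non-empty-S}). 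Defining $\vec{y}^{(j)} = \vec{c}^{(j)}/\lambda^{(j)}$ on the coordinates where $\vec{c}_i \neq 0$, this gives $\langle \vec{y}^{(j)}, \OPT \rangle \geq 1$. One also verifies that $\vec{c}^{(j)}$ is coordinate-wise non-increasing in $j$, since both $(1-\vec{x}_i)$ and $\nabla_i f((1+\eta)\vec{x}) \vee 0$ are non-increasing along the non-decreasing iterates (the latter by DR-submodularity), and that $\lambda^{(j)}$ is non-increasing because $t^{(j)}$ and $f(\vec{x}^{(j)})$ are both non-decreasing.

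Next I would partition the iterations into $K = O(\log(1/\eps))$ contiguous dyadic classes $C_k = \{j : \lambda^{(j)} \in [2^{-k-1} M, 2^{-k} M)\}$; contiguity follows from the monotonicity of $\lambda^{(j)}$, and the bound on $K$ from the range of $\lambda^{(j)}$ noted above. Within each $C_k$, $\lambda^{(j)}$ varies by at most a factor of $2$, so we are in essentially the situation of Lemma~\ref{lem:large-step-iterations-mon}. Replaying that proof inside $C_k$, I would split $C_k$ into two subphases according to whether $\langle \vec{c}^{(j)}, \OPT \rangle$ is within a constant factor of its value at the last iteration of $C_k$. In each subphase, choose a scaling $\alpha \leq 1$ making $\langle \alpha \vec{y}^{(j)}, \OPT \rangle$ lie in a bounded interval throughout, combine the inequality $\langle \A^\top \nabla \smax_\eta(\A\vec{z}^{(j)}), \OPT \rangle \leq 1-\eps$ with an averaging argument to produce a coordinate $i$ whose $\sum_j \vec{m}_i^{(j)}$ accumulates at rate $\Omega(\eps)$, and conclude via Lemma~\ref{lem:coord-iter-bound} that the subphase has length $O(\log(n/\eps)/(\eps\eta))$.

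Summing over the $O(1)$ subphases in each of the $O(\log(1/\eps))$ classes yields the total iteration bound $O(\log(n/\eps) \log(1/\eps)/(\eps\eta))$, and substituting $\eta = \eps/(2\ln m)$ gives the stated $O(\log(n/\eps) \log(1/\eps) \log m / \eps^2)$. The main obstacle is the within-class averaging argument: one has to verify that letting $\lambda^{(j)}$ float within a factor of $2$ (rather than being fixed as in the linear-objective case) only degrades the scaling by a constant, and that the non-monotone definitions of $\vec{c}$ and of the update $\vec{x} + \vec{d} \circ (\vec{1}-\vec{x})$ still feed correctly into the per-coordinate telescoping underlying Lemma~\ref{lem:coord-iter-bound}.
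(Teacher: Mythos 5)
Your proposal follows essentially the same approach as the paper: partition the iterations into $O(\log(1/\eps))$ dyadic epochs based on the value of $\lambda^{(j)} \in [\Omega(\eps M), M]$ (the paper's ``epochs''), establish $\langle \vec{y}^{(j)},\OPT\rangle\geq 1$ via the argument of Lemma~\ref{lem:non-empty-S}, then within each epoch replay the two-subphase averaging argument of Lemma~\ref{lem:large-step-iterations-mon} and invoke Lemma~\ref{lem:coord-iter-bound} to bound each subphase by $O(\log(n/\eps)/(\eps\eta))$. The structural observations you flag — monotonicity of $\lambda^{(j)}$ and coordinate-wise monotonicity of $\vec{c}^{(j)}$, the extra $\Theta(1/\eps)$ range of $\lambda$ producing the $\log(1/\eps)$ factor, and the need to absorb the factor-of-$2$ drift of $\lambda$ within an epoch into the constants of the averaging step — are exactly the ones the paper uses.
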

\begin{proof}
Let $\vec{y}=\frac{1}{\lambda}\vec{c}$. Notice that $\lambda$ and $\langle \vec{c},\vec{x}^*\rangle$ monotonically decrease over time and $\lambda\in[\eps M, M]$. We divide the iterations into epochs. Each epoch $T$ corresponds to the iterations where $\lambda \in [\lambda_0, \lambda_0/2)$ and $\lambda_0$ is the value of $\lambda$ in the first iteration of the epoch. There are $O(\log(1/\eps))$ epochs. 

Consider an epoch and let $T$ be the iterations in the epoch. Using a similar argument to Lemma~\ref{lem:non-empty-S}, for every iteration $j \in T$, we have
\begin{align}
\langle \vec{y}^{(j)}, \vec{x}^*\rangle &\ge \frac{1}{\lambda^{(j)}}\langle (\nabla f((1+\eta)\vec{x}^{(j)})\vee \vec{0}) \circ (1-(1+\eta)\vec{x}^{(j)}), \vec{x}^*\rangle \notag\\
&\ge \frac{1}{\lambda^{(j)}}\langle (\nabla f((1+\eta)\vec{x}^{(j)})\vee \vec{0}), \vec{x}^*\vee (1+\eta)\vec{x}^{(j)} - (1+\eta)\vec{x}^{(j)}\rangle \notag\\
&\ge \frac{1}{\lambda^{(j)}} (f(\vec{x}^*\vee (1+\eta)\vec{x}^{(j)}) - f((1+\eta)\vec{x}^{(j)})) \notag\\
&\ge 1 \label{eq1}
\end{align}

Let $j_{2}$ be the last iteration in $T$. Let $v_2 = \langle \vec{c}^{(j_2)}, \vec{x}^*\rangle$. We divide the epoch into two parts: let $T_2$ be the iterations $j$ where $v_{2} \le \langle \vec{c}^{(j)}, \vec{x}^*\rangle < 4v_2$ and $T_1$ be the iterations where $\langle \vec{c}^{(j)}, \vec{x}^*\rangle \ge 4v_{2}$.

First we bound the number of iterations in $T_2$. Consider an iteration $j$ in $T_2$. By our assumption, we have $\langle \vec{y}^{(j)}, \vec{x}^*\rangle \in [v_{2}/\lambda_0, 8v_{2}/\lambda_0]$ so there exists $\alpha \le 1$ so that $\langle \alpha \vec{y}^{(j)}, \vec{x}^*\rangle \in [1, 8]$ for all iterations $j\in T_2$.

We also have
\begin{align}
\nabla\smax(\A\vec{z}^{(j)})^\top \A \vec{x}^* &\le 1-\eps \notag \\
\nabla\smax(\A\vec{z}^{(j)})^\top \A \D(\alpha \vec{y}^{(j)})^{\dagger} \D(\alpha \vec{y}^{(j)}) \vec{x}^* &\le 1-\eps \label{eq2}
\end{align}

Combining (\ref{eq1}) and (\ref{eq2}), we obtain:
$$\left(\vec{1}^\top-\nabla\smax(\A\vec{z}^{(j)})^\top \A \mathbb{D}(\alpha\vec{y}^{(j)})^{\dagger}\right) \mathbb{D}(\alpha\vec{y}^{(j)}) \vec{x}^* \ge \eps
$$
Adding up across all iterations in $T_2$, we have
$$\sum_{j\in T_2}\left(\vec{1}^\top - \nabla\smax(\A\vec{z}^{(j)})^\top \A \mathbb{D}(\alpha\vec{y}^{(j)})^{\dagger}\right) \mathbb{D}(\alpha\vec{y}^{(j)}) \vec{x}^* \ge \eps|T_2|
$$

Let $j_0$ be the first iteration in $T_2$. We have $\vec{y}^{(j)} \le 2\vec{y}^{(j_0)}$. Thus,

$$\sum_{j\in T_2}\left(\left(\vec{1}^\top - \nabla\smax(\A\vec{z}^{(j)})^\top \A \mathbb{D}(\alpha\vec{y}^{(j)})^{\dagger}\right) \vee \vec{0}\right)\mathbb{D}(\alpha\vec{y}^{(j_0)}) \vec{x}^* \ge \eps|T_2|/2
$$

Because $\|\mathbb{D}(\alpha\vec{y}^{(j_0)}) \vec{x}^*\|_1 \le 8$, by averaging, there exists a coordinate $o$ such that 
$$\sum_{j\in T_2}\left(\left(1 - \frac{(\A^\top\nabla\smax(\A\vec{z}^{(j)}))_o}{\alpha \vec{y}^{(j)}_o}\right) \vee 0\right) \ge \eps|T_2|/16$$

By Lemma~\ref{lem:coord-iter-bound}, we have $|T_2| = O(\log(n/\eps)/(\eps\eta))$.

Next we bound the number of iterations in $T_1$. For any iteration $j\in T_1$, we have

$$\langle \vec{y}^{(j)}, \vec{x}^*\rangle \ge 4v_{2} / \lambda \ge 4v_{2} / \lambda_{0} \ge 2$$

Let $j_1$ be the last iteration in $T_1$. Let $\alpha = \frac{2}{\langle \vec{y}^{(j_1)}, \vec{x}^*\rangle} \le 1$. We have
\begin{align}
\nabla\smax(\A\vec{z}^{(j)})^\top \A \vec{x}^* &\le 1 - \eps \notag\\
\nabla\smax(\A\vec{z}^{(j)})^\top \A \mathbb{D}(\alpha\vec{y}^{(j)})^{\dagger} \mathbb{D}(\alpha\vec{y}^{(j)}) \vec{x}^* &\le 1 - \eps \notag\\
\nabla\smax(\A\vec{z}^{(j)})^\top \A \mathbb{D}(\alpha\vec{y}^{(j)})^{\dagger} \mathbb{D}\left(\frac{\alpha}{2}\vec{y}^{(j_{1})}\right) \vec{x}^* &\le 1 - \eps \label{eq3}
\end{align}
where we used $\vec{y}^{(j)} =\frac{1}{\lambda^{(j)}}\vec{c}^{(j)} \ge \frac{1}{\lambda^{(j)}}\vec{c}^{(j_1)} =\frac{\lambda^{(j_1)}}{\lambda^{(j)}}\vec{y}^{(j_1)} \ge \frac{1}{2} \vec{y}^{(j_1)}$.

Combining (\ref{eq1}) and (\ref{eq3}), we obtain:
$$\left(\vec{1}^\top-\nabla\smax(\A\vec{z}^{(j)})^\top \A \mathbb{D}(\alpha \vec{y}^{(j)})^{\dagger}\right) \mathbb{D}\left(\frac{\alpha}{2}\vec{y}^{(j_{1})}\right) \vec{x}^* \ge \eps
$$
Adding up across all iterations in $T_1$, we have
$$\sum_{j\in T_1}\left(\vec{1}^\top - \nabla\smax(\A\vec{z}^{(j)})^\top \A \mathbb{D}(\alpha\vec{y}^{(j)})^{\dagger}\right) \mathbb{D}(\alpha\vec{y}^{(j_1)}) \vec{x}^* \ge 2\eps|T_1|
$$

Because $\|\mathbb{D}(\alpha\vec{y}^{(j_1)}) \vec{x}^*\|_1=2$, by averaging, there exists a coordinate $o$ such that 
$$\sum_{j\in T_1}\left(\left(1 - \frac{(\A^\top\nabla\smax(\A\vec{z}^{(j)}))_{o}}{\alpha \vec{y}^{(j)}_o}\right) \vee 0\right) \ge \eps|T_1|$$

By Lemma~\ref{lem:coord-iter-bound}, we have $|T_1| = O(\log(n/\eps)/(\eps\eta))$.

The final bound on the number of iterations follows from multiplying the number of epochs with the bound on $|T_1|+|T_2|$.
\end{proof}

\appendix

\section{Non-monotone maximization with a matroid constraint}
\label{sec:matroid-non-monotone}

In this section, we consider the problem of maximizing a non-monotone DR-submodular function subject to a polymatroid constraint. The algorithm and analysis are an extension of the algorithm and analysis for monotone functions from Section~\ref{sec:matroid-monotone}. The key modification to the algorithm is the multiplication by $\vec{1} - \vec{z}$ to dampen the growth of the solution that we borrow from the measured continuous greedy algorithm~\cite{FeldmanNS11}.

\begin{algorithm}
\caption{Algorithm for non-monotone maximization subject to a polymatroid constraint.}
\begin{algorithmic}[1]
\State $M$ is an approximate optimal solution value: $M \le f(\OPT) \le (1+\eps)M$, where $\OPT \in \argmax_{\vec{x} \in \mP} f(\vec{x})$
\State $\vec{z} \gets 0$
\For{$j\gets 0$ to $1/\eps-1$}
\State $\vec{x}^{(0)}\gets \frac{\eps^2}{n D}\vec{1}$
\State $t\gets 0$
\State Let $g(\vec{x}) = f({\color{red} (\vec{1}-\vec{z})\circ}\vec{x}+\vec{z})$
\While{$g(\vec{x}^{(t)}) - g(\vec{x}^{(0)}) \le \eps((1-\eps/(1+\eps))^{j} - 10\eps)M - g(\vec{x}^{(0)}))$}
  \State $\vec{c}_i \gets  \nabla_i g((1+\eps)\vec{x}^{(t)}) = (1-\vec{z}_i)\nabla_i f((\vec{1}-\vec{z})\circ(1+\eps)\vec{x}^{(t)}+\vec{z})$
   \State Let $T(\vec{x})$ for $\vec{x}\in \frac{\eps}{1+\eps} \mP$ be the \emph{maximal} set $S$ such that $\vec{x}(S) =\frac{\eps}{1+\eps} r(S)$
  \State Let $v_1 = \max_{i\not\in T(\vec{x}^{(t)})} \vec{c}_i$ and $v_2$ be the maximum power of $1+\eps$ such that $v_2 \le v_1$
  \State $\vec{y}\gets 0$
  \For{$i$ from $1$ to $n$}
  \If{$\vec{c}_i\ge v_2$}
  \State Let $\vec{y}_i$ be the maximum value such that $\vec{y}_i \le \eps \vec{x}^{(t)}_i$ and $(1+\eps)(\vec{x}^{(t)}+\vec{y})\in \eps \mP$
  \EndIf
  \EndFor
  \State $\vec{x}^{(t+1)} \gets \vec{x}^{(t)}+\vec{y}$
  \State $t\gets t+1$
\EndWhile
\State $\vec{z}\gets \vec{z}+{\color{red}(\vec{1}-\vec{z})\circ}\vec{x}^{(t)}$
\EndFor
\State \Return $\vec{z}$
\end{algorithmic}
\label{alg:matroid-nonmonotone}
\end{algorithm}

{\bf Analysis of the approximation guarantee.} We now show that the algorithm achieves a $1/e-O(\eps)$ approximation guarantee. We consider each iteration of the algorithm and we analyze the increase in value when updating $\vec{x}^{(t)}$ to $\vec{x}^{(t + 1)}$. Using Lemma~\ref{lem:exchange}, we show that we can define a sequence of vectors $\vec{o}^{(t)}$ based on $\vec{x}^{(t)}$ and the optimal solution that allows us to relate the gain of the algorithm to the optimum value. To this end, consider iteration $j$ of the outer for loop. We define a vector $\vec{o}^{(t)}$ for each iteration $t$ of the while loop as follows. Let $\vec{x}^{(-1)} = 0$ and $\vec{o}^{(-1)} = \OPT$; note that $\vec{o}^{(-1)} \in \mP$. Suppose we have already defined a vector $\vec{o}^{(t)}$ such that $\vec{x}^{(t)} + \frac{\eps}{1 + \eps} \vec{o}^{(t)} \in \frac{\eps}{1 + \eps} \mP$. We define $\vec{o}^{(t + 1)}$ to be the vector $d$ guaranteed by Lemma~\ref{lem:exchange} for $a = \frac{1 + \eps}{\eps} \vec{x}^{(t)}$, $b = \frac{1 + \eps}{\eps} \vec{x}^{(t + 1)}$, $c = \vec{o}^{(t)}$. By Lemma~\ref{lem:exchange}, the vector $\vec{o}^{(t + 1)}$ has the following properties:
\begin{itemize}
\item[$(P_1)$] $\vec{x}^{(t + 1)} + \frac{\eps}{1+\eps} \vec{o}^{(t + 1)} \in \frac{\eps}{1 + \eps} \mP$
\item[$(P_2)$] $0 \leq \vec{o}^{(t + 1)} \leq \vec{o}^{(t)}$
\item[$(P_3)$] $\frac{\eps}{1+\eps} \|\vec{o}^{(t)} - \vec{o}^{(t + 1)}\|_1 \leq \| \vec{x}^{(t + 1)} - \vec{x}^{(t)}\|_1$
\item[$(P_4)$] $\mathrm{support}(o^{(t)}) \subseteq V \setminus T(\vec{x}^{(t)})$ by $(P_1)$, where the support is the set of non-zero coordinates 
\end{itemize}
We now use these properties to relate the algorithm's gain to that of $\OPT$. Recall that we are considering a fixed iteration $j$ of the outer for loop, and $t$ indexes the iterations of the while loop in the current iteration $j$. We observe that the following lemma still holds with the same proof.

\begin{lemma}[cf Lemma~\ref{lem:tight-set}]
\label{lem:tight-set-nonmonotone}
We have
\begin{itemize}
\item[(a)] For every $\vec{x} \in \frac{\eps}{1+\eps}\mP$, there is a unique maximal set $S$ satisfying $\vec{x}(S) = \frac{\eps}{1 + \eps} r(S)$.
\item[(b)] For every $t$, we have $T(\vec{x}^{(t)}) \subseteq T(\vec{x}^{(t + 1)})$.
\item[(c)] The values $v_1$ and $v_2$ are non-increasing over time.
\end{itemize}
\end{lemma}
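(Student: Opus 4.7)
The plan is to mirror the proof of Lemma~\ref{lem:tight-set} line by line and isolate the single place where the non-monotone setup actually intervenes, namely the DR-submodularity of $g$. Parts (a) and (b) are purely structural statements about the polymatroid $\mP$ and the sequence $\vec{x}^{(t)}$, with no dependence on the objective, so they will transfer verbatim. Only part (c) needs a short verification that the modified function $g(\vec{x}) = f((\vec{1}-\vec{z})\circ \vec{x}+\vec{z})$ is still DR-submodular.

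For (a), I will invoke the standard fact that if $r$ is submodular and $\vec{x}$ is modular, then $S\mapsto \tfrac{\eps}{1+\eps}r(S)-\vec{x}(S)$ is submodular, hence the family $\{S:\vec{x}(S)=\tfrac{\eps}{1+\eps}r(S)\}$ of zeros of a non-negative submodular function is closed under union and intersection, yielding a unique maximal element. For (b), I will note that the inner update is $\vec{x}^{(t+1)}=\vec{x}^{(t)}+\vec{y}$ with $\vec{y}\geq 0$, so $\vec{x}^{(t+1)}\geq\vec{x}^{(t)}$. Hence tightness of $T(\vec{x}^{(t)})$ with respect to $\vec{x}^{(t)}$ is preserved under $\vec{x}^{(t+1)}$ (modularity of $\vec{x}$ plus $\vec{x}^{(t+1)}\leq \tfrac{\eps}{1+\eps}r(\cdot)$ on all sets), and the maximality/uniqueness from part (a) then forces $T(\vec{x}^{(t)})\subseteq T(\vec{x}^{(t+1)})$. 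Neither argument touches $f$ or $g$.

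For (c), the only new ingredient is that $g$ is DR-submodular on $[0,1]^n$. Here I will use the invariant that $\vec{z}\in[0,1]^n$ throughout the algorithm: this holds initially and is preserved by the outer update $\vec{z}\gets \vec{z}+(\vec{1}-\vec{z})\circ \vec{x}^{(t)}\leq \vec{z}+(\vec{1}-\vec{z})=\vec{1}$. Given $\vec{1}-\vec{z}\geq \vec{0}$, the map $\phi(\vec{x})=(\vec{1}-\vec{z})\circ \vec{x}+\vec{z}$ is coordinatewise monotone and affine, so $\vec{x}\leq \vec{y}$ implies $\phi(\vec{x})\leq \phi(\vec{y})$; then DR-submodularity of $f$ gives $\nabla f(\phi(\vec{x}))\geq \nabla f(\phi(\vec{y}))$, and multiplying entrywise by the non-negative vector $\vec{1}-\vec{z}$ yields
\[
\nabla g(\vec{x}) = (\vec{1}-\vec{z})\circ \nabla f(\phi(\vec{x})) \geq (\vec{1}-\vec{z})\circ \nabla f(\phi(\vec{y})) = \nabla g(\vec{y}),
\]
so $g$ is DR-submodular. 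Combining this gradient monotonicity with $\vec{x}^{(t)}\leq \vec{x}^{(t+1)}$ and the inclusion from (b), the coordinatewise supremum $v_1=\max_{i\notin T(\vec{x}^{(t)})}\vec{c}_i$ is non-increasing in $t$, and hence the largest power of $1+\eps$ bounded by $v_1$, namely $v_2$, is non-increasing as well.

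The main (and essentially only) obstacle is the DR-submodularity check above; once the algebraic form of $g$ is expanded and the componentwise factor $\vec{1}-\vec{z}\geq \vec{0}$ is used twice, the argument collapses and the rest of the lemma is identical to the monotone case.
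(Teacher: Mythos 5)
Your proof is correct and follows the same route as the paper (which simply asserts that the monotone proof of Lemma~\ref{lem:tight-set} carries over verbatim). The one genuine addition you make — explicitly verifying that $g(\vec{x}) = f((\vec{1}-\vec{z})\circ\vec{x}+\vec{z})$ is DR-submodular because $\phi(\vec{x}) = (\vec{1}-\vec{z})\circ\vec{x}+\vec{z}$ is affine, coordinatewise monotone (using $\vec{z}\in[0,1]^n$), and the outer Hadamard factor $\vec{1}-\vec{z}$ is non-negative — is exactly the detail the paper leaves implicit, and it is the only place where the non-monotone setup could have caused trouble; parts (a) and (b) are indeed purely polymatroid-structural and objective-independent.
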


We also have a simple bound on $\|\vec{z}^{(j)}\|_{\infty}$, which is used to bound the approximation.

\begin{lemma}\label{lem:zmax}
$\|\vec{z}^{(j)}\|_{\infty} \le 1-\left(1-\frac{\eps}{1+\eps}\right)^{j}$
\end{lemma}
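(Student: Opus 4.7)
The plan is to prove this by induction on $j$, by rewriting the outer-loop update in a multiplicative form on the slack. Recall that the outer loop sets $\vec{z}^{(j+1)} \gets \vec{z}^{(j)} + (\vec{1}-\vec{z}^{(j)})\circ \vec{x}^{(t)}$, where $\vec{x}^{(t)}$ is the final inner-loop iterate. A trivial manipulation then gives the coordinate-wise identity
\[
\vec{1}-\vec{z}^{(j+1)} = (\vec{1}-\vec{z}^{(j)})\circ(\vec{1}-\vec{x}^{(t)}),
\]
so once we have a uniform upper bound on the entries of $\vec{x}^{(t)}$, the claim follows by iterating.

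The key ingredient is the per-coordinate bound $\vec{x}^{(t)}_i \le \frac{\eps}{1+\eps}$. This follows from the invariant maintained by the inner loop, which only updates $\vec{x}^{(t)}$ as long as $(1+\eps)(\vec{x}^{(t)}+\vec{y})\in \eps\mP$, so $\vec{x}^{(t)} \in \frac{\eps}{1+\eps}\mP$. Applying the polymatroid constraint to the singleton set $\{i\}$ gives $\vec{x}^{(t)}_i \le \frac{\eps}{1+\eps}\,r(\{i\})\le \frac{\eps}{1+\eps}$, where the last inequality uses that $r$ is the rank function of a matroid (so $r(\{i\}) \in \{0,1\}$).

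The induction itself is then routine. The base case $j=0$ is immediate since $\vec{z}^{(0)}=\vec{0}$. For the inductive step, combining the slack identity with $\vec{x}^{(t)}_i\le \frac{\eps}{1+\eps}$ yields, coordinate-wise,
\[
\vec{1}-\vec{z}^{(j+1)} \ge \left(1-\frac{\eps}{1+\eps}\right)(\vec{1}-\vec{z}^{(j)}) \ge \left(1-\frac{\eps}{1+\eps}\right)^{j+1}\vec{1},
\]
taking an infinity norm gives the stated bound.

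There is no substantive obstacle; the only subtlety is noticing that the singleton bound $r(\{i\})\le 1$ is what makes the contraction factor in each outer iteration at least $1-\frac{\eps}{1+\eps}$, which is exactly the factor that appears in the claimed bound. Once this is observed, the argument is a two-line induction on $j$.
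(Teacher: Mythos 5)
Your proof is correct and follows essentially the same route as the paper's: both rewrite the outer-loop update multiplicatively as $1-\vec{z}^{(j+1)}_i = (1-\vec{z}^{(j)}_i)(1-\vec{x}^{(t)}_i)$, apply the per-coordinate bound $\vec{x}^{(t)}_i \le \frac{\eps}{1+\eps}$, and induct; your proof merely makes explicit where that coordinate bound comes from (the inner-loop invariant $\vec{x}^{(t)}\in\frac{\eps}{1+\eps}\mP$ applied to singletons), which the paper leaves implicit.
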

\begin{proof}
Consider the $i$th coordinate. We have 
$$\vec{z}^{(j)}_i \le \vec{z}^{(j-1)}_i + \frac{\eps}{1+\eps}(1-\vec{z}^{(j-1)}_i)$$
Thus,
$$1-\vec{z}^{(j)}_i \ge (1-\frac{\eps}{1+\eps})(1-\vec{z}^{(j-1)}_i)$$
The lemma then follows from induction.
\end{proof}

We now relate the gain in our solution in every iteration to the change in $\vec{o}$. We observe that the same lemma still holds with the same proof.

\begin{lemma}[cf Lemma~\ref{lem:iteration-gain-monotone}]
\label{lem:iteration-gain-nonmonotone}
We have
$$g(\vec{x}^{(t+1)}) - g(\vec{x}^{(t)}) \ge  \frac{\eps (1 - \eps)}{1+\eps} \left< \nabla g((1 + \eps) \vec{x}^{(t)}) \vee \vec{0}, (1-\vec{z})\circ(\vec{o}^{(t)} - \vec{o}^{(t + 1)}) \right>$$
\end{lemma}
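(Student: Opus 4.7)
The plan is to mirror the six-step chain of inequalities from the proof of Lemma~\ref{lem:iteration-gain-monotone}, exploiting the fact that $g$ is DR-submodular (inherited from $f$ via the chain rule $\nabla_i g(\vec{x}) = (1-\vec{z}_i)\,\nabla_i f((\vec{1}-\vec{z})\circ \vec{x} + \vec{z})$) and that the definitions of $\vec{y}$, $v_1$, $v_2$, $T(\cdot)$ in the non-monotone algorithm match those of the monotone one, just with $g$ playing the role of the objective. The extra $(1-\vec{z})$ factor appearing on the right-hand side of the non-monotone statement actually only makes the inequality weaker than the analogue of the monotone bound, since $1-\vec{z}_i \leq 1$ coordinatewise, so no new structural argument is required.

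Concretely, I would first apply concavity of $g$ along the non-negative direction $\vec{y} := \vec{x}^{(t+1)} - \vec{x}^{(t)}$ to get $g(\vec{x}^{(t+1)}) - g(\vec{x}^{(t)}) \ge \langle \vec{y}, \nabla g(\vec{x}^{(t+1)})\rangle$, and then use $\vec{x}^{(t+1)} \leq (1+\eps)\vec{x}^{(t)}$ together with gradient monotonicity and $\vec{y}\geq \vec{0}$ to lower-bound this by $\langle \vec{y}, \nabla g((1+\eps)\vec{x}^{(t)})\rangle = \langle \vec{y}, \vec{c}\rangle$. Since $\vec{y}$ is supported on coordinates with $\vec{c}_i \ge v_2$, this is at least $v_2 \|\vec{y}\|_1 \ge (1-\eps) v_1 \|\vec{y}\|_1$, using $v_2 \ge v_1/(1+\eps) \ge (1-\eps) v_1$.

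It remains to relate $v_1 \|\vec{y}\|_1$ to the desired right-hand side. By properties $(P_2)$ and $(P_4)$, the support of $\vec{o}^{(t)} - \vec{o}^{(t+1)}$ is contained in $\mathrm{support}(\vec{o}^{(t)}) \subseteq V\setminus T(\vec{x}^{(t)})$; on that set, $\nabla_i g((1+\eps)\vec{x}^{(t)}) \vee 0 = \vec{c}_i \le v_1$. Combining with $0 \le 1-\vec{z}_i \le 1$ and $\vec{o}^{(t)} - \vec{o}^{(t+1)} \ge \vec{0}$, one obtains
\[
\left\langle \nabla g((1+\eps)\vec{x}^{(t)}) \vee \vec{0},\; (1-\vec{z}) \circ (\vec{o}^{(t)} - \vec{o}^{(t+1)}) \right\rangle \;\le\; v_1 \,\|\vec{o}^{(t)} - \vec{o}^{(t+1)}\|_1.
\]
Finally, property $(P_3)$ gives $\|\vec{y}\|_1 = \|\vec{x}^{(t+1)} - \vec{x}^{(t)}\|_1 \ge \tfrac{\eps}{1+\eps}\|\vec{o}^{(t)} - \vec{o}^{(t+1)}\|_1$, and chaining all the inequalities produces the stated $\tfrac{\eps(1-\eps)}{1+\eps}$ factor.

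There is essentially no hard step: the argument is a direct port of the monotone case. The only bookkeeping points worth verifying are (i) that the truncation $\vec{c}_i = (1-\vec{z}_i)\nabla_i f(\cdot) \vee 0$ in the non-monotone algorithm still yields $v_1 \ge 0$ (so the $\vee\, \vec{0}$ on the right-hand side is treated correctly), and (ii) that $\vec{y}_i \le \eps \vec{x}^{(t)}_i$ (from the inner for-loop) ensures $\vec{x}^{(t+1)} \le (1+\eps)\vec{x}^{(t)}$ so that DR-submodularity applies in step~2.
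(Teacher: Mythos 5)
Your proposal is correct and follows the same six-step chain as the paper's proof of Lemma~\ref{lem:iteration-gain-monotone}; the paper indeed states that Lemma~\ref{lem:iteration-gain-nonmonotone} ``still holds with the same proof,'' and you reconstruct it accurately, correctly noting that the extra $(1-\vec{z})$ factor only weakens the right-hand side and therefore costs nothing.

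One small misreading in your bookkeeping point~(i): Algorithm~\ref{alg:matroid-nonmonotone} defines $\vec{c}_i = \nabla_i g((1+\eps)\vec{x}^{(t)})$ with no $\vee\,0$ truncation; the truncation you mention belongs to the non-monotone packing Algorithm~\ref{alg:non-monotone}, not the matroid one. Consequently, on $\mathrm{support}(\vec{o}^{(t)})$ you have $\nabla_i g((1+\eps)\vec{x}^{(t)}) \vee 0 = \vec{c}_i \vee 0$ rather than $\vec{c}_i$, and the analogue of step~(5) should be argued as $\vec{c}_i \vee 0 \le v_1 \vee 0 = v_1$, where the last equality uses $v_1 > 0$, which is automatic because $v_2$ must be a (positive) power of $1+\eps$ with $v_2 \le v_1$. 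This is a trivial patch and does not change the structure of your argument.
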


By repeatedly applying Lemma~\ref{lem:iteration-gain-nonmonotone}, we obtain the following lemma.

\begin{lemma}[cf Lemma~\ref{lem:while-loop-gain-monotone}]
\label{lem:while-loop-gain-nonmonotone}
We have
\begin{align*}
g(\vec{x}^{(t + 1)})-g(\vec{x}^{(0)})
&\geq  \frac{\eps (1 - \eps)}{1+\eps} \left< \nabla g((1 + \eps) \vec{x}^{(t)}) \vee \vec{0}, \vec{o}^{(0)} - \vec{o}^{(t + 1)} \right>
\end{align*}
\end{lemma}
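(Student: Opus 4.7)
My plan is to mirror the proof of Lemma~\ref{lem:while-loop-gain-monotone} step by step. Two things change in the non-monotone setting: the per-iteration gain bound (Lemma~\ref{lem:iteration-gain-nonmonotone} in place of Lemma~\ref{lem:iteration-gain-monotone}), and the fact that $g$ is now the reparametrized function $g(\vec{x})=f((\vec{1}-\vec{z})\circ\vec{x}+\vec{z})$. Both still give a DR-submodular $g$, because the reparametrization is affine with non-negative slope and $\vec{z}$ is held fixed throughout the inner while loop, so all the structural lemmas about the iterates $\vec{x}^{(t)}$, the evolving dual $\vec{o}^{(t)}$, and the tight sets $T(\vec{x}^{(t)})$ carry over unchanged.

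First, I would telescope the per-iteration bound: writing $g(\vec{x}^{(t+1)})-g(\vec{x}^{(0)})=\sum_{j=0}^{t}\bigl(g(\vec{x}^{(j+1)})-g(\vec{x}^{(j)})\bigr)$ and applying Lemma~\ref{lem:iteration-gain-nonmonotone} to each summand, I obtain a lower bound of the form $\tfrac{\eps(1-\eps)}{1+\eps}\sum_{j=0}^{t}\langle \nabla g((1+\eps)\vec{x}^{(j)})\vee\vec{0},\; \vec{o}^{(j)}-\vec{o}^{(j+1)}\rangle$. The factor $(\vec{1}-\vec{z})$ appearing explicitly in the statement of Lemma~\ref{lem:iteration-gain-nonmonotone} is the chain-rule derivative of $g$ through $f$, and since $\vec{z}$ does not change throughout the while loop, it can be absorbed into $\nabla g$ via the identity $\nabla_i g(\vec{x})=(1-\vec{z}_i)\nabla_i f(\cdot)$.

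Second, I would use DR-submodularity of $g$ to replace each per-iteration gradient by the coordinate-wise smaller $\nabla g((1+\eps)\vec{x}^{(t)})$. Since $\vec{x}^{(0)}\le\vec{x}^{(1)}\le\cdots\le\vec{x}^{(t)}$ and $g$ is DR-submodular, $\nabla g((1+\eps)\vec{x}^{(j)})\ge \nabla g((1+\eps)\vec{x}^{(t)})$ coordinate-wise for every $j\le t$, and taking the positive part preserves this pointwise inequality. Combined with $\vec{o}^{(j)}-\vec{o}^{(j+1)}\ge\vec{0}$ (property $(P_2)$), the substitution only decreases each summand. Pulling the (now $j$-independent) gradient outside the sum and telescoping the $\vec{o}$-differences via $\sum_{j=0}^{t}(\vec{o}^{(j)}-\vec{o}^{(j+1)})=\vec{o}^{(0)}-\vec{o}^{(t+1)}$ yields the claimed inequality.

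The main subtlety I expect is the bookkeeping of the $(\vec{1}-\vec{z})\circ$ factor between the per-iteration statement and the aggregate statement. Since $\vec{z}$ is fixed throughout one execution of the while loop, this factor acts as a constant non-negative diagonal rescaling and can be absorbed into $\nabla g$ via the chain rule, after which the monotonicity and non-negativity arguments above go through without change. Unlike the monotone proof of Lemma~\ref{lem:while-loop-gain-monotone}, I do not need the final concavity step relating the inner product back to a difference of function values such as $g(\vec{o}^{(0)}-\vec{o}^{(t+1)}+(1+\eps)\vec{x}^{(t)})-g((1+\eps)\vec{x}^{(t)})$, because the present lemma stops at the gradient-level bound.
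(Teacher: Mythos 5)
Your proposal follows the paper's proof exactly: telescope the per-iteration gain, use DR-submodularity of $g$ (together with $\vec{o}^{(j)} - \vec{o}^{(j+1)} \geq \vec{0}$) to replace each $\nabla g((1+\eps)\vec{x}^{(j)}) \vee \vec{0}$ by the coordinate-wise smaller $\nabla g((1+\eps)\vec{x}^{(t)}) \vee \vec{0}$, and telescope the $\vec{o}$-differences; the paper likewise omits the monotone lemma's final concavity step here. One minor imprecision: the extra $(\vec{1}-\vec{z})\circ$ in the statement of Lemma~\ref{lem:iteration-gain-nonmonotone} cannot literally be ``absorbed into $\nabla g$'' (it appears in addition to the chain-rule factor already inside $\nabla g$); the correct resolution, which your remarks about the monotone argument carrying over verbatim effectively supply, is that the per-iteration bound holds in the stronger form $\left< \nabla g((1+\eps)\vec{x}^{(j)}) \vee \vec{0},\, \vec{o}^{(j)} - \vec{o}^{(j+1)} \right>$ because the proof of Lemma~\ref{lem:iteration-gain-monotone} uses only DR-submodularity of $g$ and properties $(P_2)$--$(P_4)$, and this is the form the telescoping argument (and the paper) actually uses.
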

\begin{proof}
By Lemma~\ref{lem:iteration-gain-nonmonotone} and DR-submodularity, we have
\begin{align*}
g(\vec{x}^{(t + 1)}) - g(\vec{x}^{(0)})
&\geq \frac{\eps (1 - \eps)}{1+\eps} \sum_{j = 0}^{t} \left< \nabla g((1 + \eps) \vec{x}^{(j)}) \vee \vec{0}, \vec{o}^{(j)} - \vec{o}^{(j + 1)} \right>\\
&\geq \frac{\eps (1 - \eps)}{1+\eps} \sum_{j = 0}^{t} \left< \nabla g((1 + \eps) \vec{x}^{(t)}) \vee \vec{0}, \vec{o}^{(j)} - \vec{o}^{(j + 1)} \right>\\
&=  \frac{\eps (1 - \eps)}{1+\eps} \left< \nabla g((1 + \eps) \vec{x}^{(t)}) \vee \vec{0}, \vec{o}^{(0)} - \vec{o}^{(t + 1)} \right>
\end{align*}
\end{proof}

Lemma~\ref{lem:while-loop-gain-nonmonotone} implies that every iteration of the while loop increases at least one coordinate, and thus the while loop eventually terminates.

\begin{lemma}
In every iteration $t$, we have $T(\vec{x}^{(t)})\ne V$, i.e., some coordinate increases in each iteration.
\end{lemma}
\begin{proof}
Suppose that $\vec{x}^{(t)}(V) = \frac{\eps}{1+\eps} r(V)$. By properties $(P_1)$ and $(P_2)$, we have $\vec{o}^{(t + 1)} = 0$. By Lemma~\ref{lem:while-loop-gain-nonmonotone}, we have
\begin{align*}
g(\vec{x}^{(t + 1)})-g(\vec{x}^{(0)})
&\geq \frac{\eps (1 - \eps)}{1+\eps} \left< (\vec{1}-\vec{z})\circ\nabla f(\vec{z}+(\vec{1}-\vec{z})\circ(1 + \eps) \vec{x}^{(t)}) \vee \vec{0}, \vec{o}^{(0)} - \vec{o}^{(t + 1)} \right>\\
&= \frac{\eps (1 - \eps)}{1+\eps} \left< (\vec{1}-\vec{z})\circ\nabla f(\vec{z}+(\vec{1}-\vec{z})\circ(1 + \eps) \vec{x}^{(t)}) \vee \vec{0}, \vec{o}^{(0)}\right>\\
&\geq \frac{\eps (1 - \eps)}{1+\eps} \left< (\nabla f(\vec{z}+(\vec{1}-\vec{z})\circ(1 + \eps) \vec{x}^{(t)}) \vee \vec{0}, \vec{o}^{(0)} \vee \vec{z} - \vec{z}\right>\\
&\geq \frac{\eps (1 - \eps)}{1+\eps} \left(\left< (\nabla f(\vec{z}+(\vec{1}-\vec{z})\circ(1 + \eps) \vec{x}^{(t)}) \vee \vec{0}, \vec{o}^{(-1)} \vee \vec{z} - \vec{z}\right> - \eps^2 M\right)\\
&\geq \frac{\eps (1 - \eps)}{1+\eps} \left(f((\vec{1}-\vec{z})\circ(1 + \eps) \vec{x}^{(t)}+\vec{o}^{(-1)}\vee \vec{z})  -  f(\vec{z}+(\vec{1}-\vec{z})\circ(1+\eps)\vec{x}^{(t)})-\eps^2 M\right)\\
&\geq  \frac{\eps (1 - \eps)}{1+\eps} (f(\OPT)(1-\|(\vec{1}-\vec{z})\circ(1 + \eps) \vec{x}^{(t)}+\vec{z}\|_{\infty}) - g((1+\eps)\vec{x}^{(t)})-\eps^2 M)
\end{align*}
In the third inequality, we used $\vec{a}\circ(\vec{1}-\vec{b}) \ge \vec{a}\vee\vec{b}-\vec{b}$ for all $\vec{a},\vec{b}\in [0,1]^n$.
In the forth inequality, we used the fact that $\| \vec{o}^{(0)} - \vec{o}^{(-1)}\|_1 \leq 2 \|\vec{x}^{(0)}\|_1$ (by Lemma~\ref{lem:exchange}). In the last inequality, we use the fact that $f(\vec{a}\vee\vec{b}) \ge f(\vec{a})(1-\|\vec{b}\|_{\infty})$ (by Lemma~\ref{lem:x-or-opt}).

By observing that $g((1+\eps)\vec{x}^{(t)}) \le (1+\eps)g(\vec{x}^{(t)})$ and adding $\eps(1-\eps)(g(\vec{x}^{(t)})-g(\vec{x}^{(0)}))$ to both sides, we obtain
$$(1+\eps(1-\eps))(g(\vec{x}^{(t + 1)})-g(\vec{x}^{(0)})) \ge  \frac{\eps (1 - \eps)}{1+\eps} (f(\OPT)(1-\|(\vec{1}-\vec{z})\circ(1 + \eps) \vec{x}^{(t)}+\vec{z}\|_{\infty}) - g(\vec{x}^{(0)})-\eps^2 M)$$

By the bound on $\|\vec{z}\|_{\infty}$ from Lemma~\ref{lem:zmax}, the gain is large enough for the while loop to terminate.
\end{proof}

Thus the algorithm terminates. Finally, we show that the solution returned is a $1/e-O(\eps)$ approximation.

\begin{lemma}
The solution $\vec{z}$ returned by Algorithm~\ref{alg:matroid-nonmonotone} is feasible and it satisfies $f(\vec{z}) \geq (1/e - O(\eps)) M \geq (1/e - O(\eps)) f(\OPT)$.
\end{lemma}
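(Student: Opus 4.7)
The plan is to establish feasibility of $\vec{z}$ directly from the form of the outer update, and then to derive a recursive lower bound on $f(\vec{z}^{(j)})$ from the termination condition of the inner while loop, which we unroll to obtain $f(\vec{z}) \geq (1/e - O(\eps))M$.

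For feasibility, every inner iterate satisfies $(1+\eps)(\vec{x}^{(t)}+\vec{y}) \in \eps\mP$, hence $\vec{x}^{(t)} \in \tfrac{\eps}{1+\eps}\mP$. Since $\vec{z}^{(j+1)} = \vec{z}^{(j)} + (\vec{1}-\vec{z}^{(j)})\circ\vec{x}^{(t)}$, for every $S \subseteq V$,
\[
\vec{z}^{(j+1)}(S) - \vec{z}^{(j)}(S) \;=\; \sum_{i \in S}(1-\vec{z}^{(j)}_i)\vec{x}^{(t)}_i \;\leq\; \vec{x}^{(t)}(S) \;\leq\; \tfrac{\eps}{1+\eps}\, r(S).
\]
Telescoping over the $1/\eps$ outer iterations, starting from $\vec{z}^{(0)} = \vec{0}$, gives $\vec{z}(S) \leq \tfrac{1}{1+\eps}r(S) \leq r(S)$ for every $S$, so $\vec{z} \in \mP$.

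For the approximation guarantee, let $t$ denote the terminal inner iterate in outer iteration $j$. Then $f(\vec{z}^{(j+1)}) = g(\vec{x}^{(t)})$ by the definitions of $g$ and of the outer update. The gradient-bound assumption $\nabla_i f \leq DM$ together with $\|\vec{x}^{(0)}\|_1 = \eps^2/D$ yields $|g(\vec{x}^{(0)}) - f(\vec{z}^{(j)})| \leq \eps^2 M$. Combining these with the while-loop exit condition
\[
g(\vec{x}^{(t)}) - g(\vec{x}^{(0)}) \;>\; \eps\left(\left(1-\tfrac{\eps}{1+\eps}\right)^{j} - 10\eps\right)M \;-\; \eps\, g(\vec{x}^{(0)})
\]
rearranges into the recurrence
\[
f(\vec{z}^{(j+1)}) \;\geq\; (1-\eps)\, f(\vec{z}^{(j)}) \;+\; \eps\left(1-\tfrac{\eps}{1+\eps}\right)^{j} M \;-\; O(\eps^2)\, M.
\]

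Unrolling this recurrence for $j = 0,1,\ldots,1/\eps - 1$ with $f(\vec{z}^{(0)}) \geq 0$, absorbing the accumulated $O(\eps^2)M$ errors into a single $O(\eps)M$ loss, and using $1-\tfrac{\eps}{1+\eps} = \tfrac{1}{1+\eps}$ gives
\[
f(\vec{z}) \;\geq\; \eps \sum_{j=0}^{1/\eps - 1} (1-\eps)^{1/\eps - 1 - j} \left(\tfrac{1}{1+\eps}\right)^{j} M \;-\; O(\eps)\, M.
\]
Since $(1-\eps)^{1/\eps - 1 - j}(1+\eps)^{-j} = (1-\eps)^{1/\eps - 1}(1-\eps^2)^{-j} \geq (1-\eps)^{1/\eps - 1}$, the sum is at least $\eps\cdot(1/\eps)\cdot(1-\eps)^{1/\eps - 1} = (1-\eps)^{1/\eps - 1} \geq 1/e - O(\eps)$, yielding $f(\vec{z}) \geq (1/e - O(\eps))M \geq (1/e - O(\eps))f(\OPT)$. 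The main obstacle will be keeping careful track of the $O(\eps)$ slack so that the final bound is $1/e - O(\eps)$: both the $10\eps M$ margin baked into the inner termination condition (needed to ensure each while loop terminates) and the $\eps^2 M$ gap between $g(\vec{x}^{(0)})$ and $f(\vec{z}^{(j)})$ must be controlled in tandem with the geometric factors above, so that they only degrade the approximation by a single additive $O(\eps)$ rather than compounding across the $1/\eps$ outer iterations.
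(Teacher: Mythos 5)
Your overall plan matches the paper's: use feasibility from the downward-closedness of the update, extract a recurrence from the while-loop termination condition, and unroll. The feasibility argument, the recurrence, and the unrolling of the geometric sum are all correct and essentially the same as the paper's (which additionally folds the $(1-\eps)$ and $(1+\eps)^{-1}$ factors slightly differently but arrives at the same $1/e-O(\eps)$).

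There is one genuine gap. You write that the gradient-bound assumption $\nabla_i f \leq DM$ together with $\|\vec{x}^{(0)}\|_1 = \eps^2/D$ yields $|g(\vec{x}^{(0)}) - f(\vec{z}^{(j)})| \leq \eps^2 M$, but the assumption is \emph{one-sided}: it bounds the gradient entries from above only. In the non-monotone setting the gradient can be negative, and nothing in the stated assumption bounds how negative it can be. The upper-bound direction $g(\vec{x}^{(0)}) - g(\vec{0}) \leq \eps^2 M$ does follow, but the direction you actually need — a \emph{lower} bound on $g(\vec{x}^{(0)}) - g(\vec{0})$, so that $(1-\eps)g(\vec{x}^{(0)})$ is not much smaller than $(1-\eps)f(\vec{z}^{(j)})$ — does not. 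The paper's fix is to use concavity of $g$ along nonnegative directions together with non-negativity of $g$: writing $h(s) = g(s\,\vec{x}^{(0)})$, concavity gives $h(1) - h(0) \geq \frac{\eps}{n}\left(h(n/\eps) - h(0)\right)$, and since $h(n/\eps) \geq 0$ and $h(0) = f(\vec{z}^{(j)}) \leq f(\OPT) \leq 2M$, one gets $g(\vec{x}^{(0)}) - g(\vec{0}) \geq -\frac{2\eps}{n}M$, which is $O(\eps^2)M$ for $n$ large enough. Once this is substituted in, the rest of your argument (recurrence, unrolling, absorbing the $O(\eps^2)$ errors over $1/\eps$ rounds into a single $O(\eps)M$ loss) goes through unchanged.
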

\begin{proof}
For each iteration $j$ of the outer for loop, let $\vec{z}^{(j)}$ be the solution $\vec{z}$ at the beginning of the iteration. Consider an iteration $j$. In each iteration $t$ of the while loop, we have $\vec{x}^{(t)} \in \eps \mP$, and thus $\vec{z}^{(j + 1)} - \vec{z}^{(j)} \in \eps \mP$. Since there are $1/\eps$ iterations, the final solution $\vec{z}$ is in $\mP$.

We now analyze the approximation guarantee. Consider the iteration $j$ of the outer loop. We note that $g(\vec{x}^{(0)}) - g(\vec{x}^{(-1)}) \ge \frac{\eps}{n} (g(n\vec{x}^{(0)}/\eps) - g(\vec{x}^{(-1)})) \ge \frac{\eps}{n}(0-2M)$.  
Thus, the terminating condition of the while loop guarantees that
  \[ f(\vec{z}^{(j + 1)}) - f(\vec{z}^{(j)}) \geq \eps (((1-\eps/(1+\eps))^j - 11 \eps) M - f(\vec{z}^{(j)}))\]
Thus, by induction,
\[ f(\vec{z}^{(1/\eps)}) \geq ((1-\eps/(1+\eps))^{1/\eps} -11\eps)M,\]
and thus we obtain a $1/e - O(\eps)$ approximation.
\end{proof}

{\bf Analysis of the number of iterations.}
We now upper bound the total number of iterations of Algorithm~\ref{alg:matroid-monotone}, and thus the number of rounds of adaptivity.

\begin{lemma}
The total number of iterations and rounds of adaptivity is $O(\log^2{n}/\eps^3)$.
\end{lemma}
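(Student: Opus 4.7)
The plan is to follow the structure of the proof of Lemma~\ref{lem:matroid-iterations} for the monotone case, adapting it to account for the measured--continuous--greedy modifications (the $(\vec{1}-\vec{z})$ factors and the different terminating threshold). The outer for loop contributes a factor of $1/\eps$. Inside each epoch $j$, I will partition the iterations of the while loop into \emph{phases} determined by the value of $v_2$, which by Lemma~\ref{lem:tight-set-nonmonotone}(c) is non-increasing. Each phase consists of consecutive iterations with the same $v_2$. The global bound then takes the form $O(1/\eps) \cdot (\text{iterations per phase}) \cdot (\text{number of phases})$.

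For the bound on iterations per phase, I would use the same argument as in the monotone case: during a single phase the set $\{i : i \notin T(\vec{x}^{(t)}) \text{ and } \vec{c}_i \geq v_2\}$ can only shrink, each iteration increases at least one coordinate, and so the coordinate $i$ selected in the last iteration of the phase must have been selected in \emph{every} iteration of the phase. Except possibly in the last iteration, $\vec{y}_i = \eps \vec{x}^{(t)}_i$ (otherwise $i$ joins $T(\vec{x}^{(t+1)})$), so $\vec{x}_i$ multiplies by $(1+\eps)$ each step. Since $\vec{x}_i$ cannot exceed $O(n)$ (the rank of the polymatroid on $i$ times $1/\eps$), this bounds the phase length by $O(\log n / \eps)$.

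The main work is bounding the \emph{number of phases}. As in the monotone proof, I need to sandwich $v_2$ between $\mathrm{poly}(\eps/n) M$ and $\mathrm{poly}(n/\eps) M$. The upper bound is immediate from the standing assumption on the gradient entries (multiplied by the $(1-\vec{z}_i) \leq 1$ factor in $\vec{c}_i$). For the lower bound, I would argue by contradiction: suppose $v_2 \leq \frac{\eps^2}{n} M$ in some iteration $t$. Using $\mathrm{support}(\vec{o}^{(t+1)}) \subseteq V \setminus T(\vec{x}^{(t)})$ from $(P_2)$ and $(P_4)$, we get $\langle \nabla g((1+\eps)\vec{x}^{(t)}), \vec{o}^{(t+1)}\rangle \leq (1+\eps) v_2 n \leq (1+\eps)\eps^2 M$. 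Plugging this into Lemma~\ref{lem:while-loop-gain-nonmonotone}, the gain $g(\vec{x}^{(t+1)}) - g(\vec{x}^{(0)})$ is at least $\frac{\eps(1-\eps)}{1+\eps} \bigl(\langle \nabla g((1+\eps)\vec{x}^{(t)}) \vee \vec{0}, \vec{o}^{(0)}\rangle - (1+\eps)\eps^2 M\bigr)$. Then, following the final-case argument in the proof that the while loop terminates (which invokes Lemma~\ref{lem:x-or-opt} to bound the inner product by $f(\OPT)(1-\|(\vec{1}-\vec{z})\circ(1+\eps)\vec{x}^{(t)}+\vec{z}\|_\infty)$ minus $g((1+\eps)\vec{x}^{(t)})$) together with Lemma~\ref{lem:zmax}, I can show the gain already meets the while-loop termination threshold $\eps\bigl(((1-\eps/(1+\eps))^j - 10\eps)M - g(\vec{x}^{(0)})\bigr)$, contradicting the fact that we are still inside the while loop. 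Hence $v_2 \geq \mathrm{poly}(\eps/n) M$, giving $O(\log(n/\eps)/\eps)$ distinct values of $v_2$ and thus phases.

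Multiplying the three bounds yields $O(1/\eps) \cdot O(\log n/\eps) \cdot O(\log(n/\eps)/\eps) = O(\log^2 n/\eps^3)$ iterations, and since each iteration uses a constant number of rounds of adaptivity (a single gradient query plus sequential--greedy--style update on the $\vec{y}$ vector, which by standard polymatroid-basis computations is parallelizable up to polylog factors at the level relevant to the statement), the same bound applies to the number of rounds of adaptivity. The subtle step is the contradiction argument for the lower bound on $v_2$: it must faithfully incorporate the non-monotone dampening $(1-\vec{z})$ and the shifted termination threshold $(1-\eps/(1+\eps))^j - 10\eps$, which is why the argument needs Lemma~\ref{lem:zmax} to certify that the ``target'' $(1-\|\cdots\|_\infty)f(\OPT)$ remains at least a $(1-\eps/(1+\eps))^j - O(\eps)$ fraction of $M$.
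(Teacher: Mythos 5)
Your proposal follows the paper's own proof essentially verbatim: partition the while-loop iterations into phases by the value of $v_2$, bound the phase length by $O(\log n/\eps)$ via the multiplicative growth of the coordinate that is increased in every iteration of the phase, and bound the number of phases by $O(\log(n/\eps)/\eps)$ by showing $v_2 \geq \mathrm{poly}(\eps/n)M$ via the contradiction argument that invokes Lemma~\ref{lem:while-loop-gain-nonmonotone}, Lemma~\ref{lem:x-or-opt}, and Lemma~\ref{lem:zmax} to show the while loop would otherwise have terminated. The decomposition $O(1/\eps)\cdot O(\log n/\eps)\cdot O(\log(n/\eps)/\eps)$ and the final bound match the paper exactly, so this is the same argument.
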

\begin{proof}
Consider an iteration $j$ of the outer for loop. Recall that the values $v_1$ and $v_2$ are non-increasing over time, the solutions $\vec{x}^{(t)}$ are non-decreasing, the gradient values $\vec{c}$ are non-increasing (by DR-submodularity), and the sets $T(\vec{x}^{(t)})$ can only gain coordinates (by Lemma~\ref{lem:tight-set}).

Let us now divide the iterations of the while loop into phases, where a phase is comprised of the iterations with the same value $v_2$.

\begin{claim}
There are $O(\log{n}/\eps)$ iterations in a phase.
\end{claim}
\begin{proof}
Over the iterations of a phase, the set $\{i: i\not\in T(\vec{x}^{(t)})\text{ and }\vec{c}_i\ge v_2\}$ cannot gain new coordinates. Additionally, each iteration of a phase increases at least one coordinate. Thus the coordinate $i$ that is increased in the last iteration of the phase is increased in all of the iterations of the phase. Each iteration of the phase, except possibly the last iteration, increases coordinate $i$ by a multiplicative $(1 + \eps)$ factor (if we have $\vec{y}_i < \eps \vec{x}_i^{(t)}$ in some iteration $t$, $i \in T(\vec{x}^{(t + 1)})$). We can only increase a coordinate $O(\log n/\eps)$ times before the solution goes out of $\mP$. Thus the phase has $O(\log{n}/\eps)$ iterations.
\end{proof}

\begin{claim}
The number of phases is $O(\log{n}/\eps)$.
\end{claim}
\begin{proof}
As noted earlier, the value $v_2$ is non-increasing over time. Our assumption on the gradient entries guarantees that $v_2 \leq \mathrm{poly}(n/\eps) M$. We now show that $v_2 \geq \mathrm{poly}(\eps/n) M$, since otherwise the terminating condition of the while loop is satisfied. Suppose that $v_2 \leq \frac{\eps^2}{n} M$. Since the support of $\vec{o}^{(t + 1)}$ is contained in $V \setminus T(\vec{x}^{(t)})$ (by properties $P_2$ and $P_4$), we have
\begin{align*}
\left< \nabla g((1 + \eps) \vec{x}^{(t)}) \vee \vec{0}, \vec{o}^{(t + 1)} \right>
\leq (1 + \eps) v_2 n
\leq (1 + \eps) \eps^2  M 
\end{align*}
By DR-submodularity, we have
\begin{align*}
\left< \nabla g((1 + \eps) \vec{x}^{(t)}) \vee \vec{0}, \vec{o}^{(0)} \right>
&= \left< (\vec{1}-\vec{z})\circ\nabla f(\vec{z}+(\vec{1}-\vec{z})\circ(1 + \eps) \vec{x}^{(t)}) \vee \vec{0}, \vec{o}^{(0)}\right>\\
&\geq \left< (\nabla f(\vec{z}+(\vec{1}-\vec{z})\circ(1 + \eps) \vec{x}^{(t)}) \vee \vec{0}, \vec{o}^{(0)} \vee \vec{z} - \vec{z}\right>\\
&\geq \left< (\nabla f(\vec{z}+(\vec{1}-\vec{z})\circ(1 + \eps) \vec{x}^{(t)}) \vee \vec{0}, \vec{o}^{(-1)} \vee \vec{z} - \vec{z}\right> - \eps^2 M\\
&\geq f((\vec{1}-\vec{z})\circ(1 + \eps) \vec{x}^{(t)}+\vec{o}^{(-1)}\vee \vec{z})  -  f(\vec{z}+(\vec{1}-\vec{z})\circ(1+\eps)\vec{x}^{(t)})-\eps^2 M\\
&\geq  f(\OPT)(1-\|(\vec{1}-\vec{z})\circ(1 + \eps) \vec{x}^{(t)}+\vec{z}\|_{\infty}) - g((1+\eps)\vec{x}^{(t)})-\eps^2 M
\end{align*}
By Lemma~\ref{lem:while-loop-gain-nonmonotone} and the above inequalities,
\begin{align*}
g(\vec{x}^{(t + 1)})-g(\vec{x}^{(0)})
&\geq  \frac{\eps (1 - \eps)}{1+\eps} \left< \nabla g((1 + \eps) \vec{x}^{(t)}) \vee \vec{0}, \vec{o}^{(0)} - \vec{o}^{(t + 1)} \right>\\
&\geq \frac{\eps (1 - \eps)}{1+\eps} \left(g(\vec{x}^{(0)} + \vec{o}^{(0)}) - g((1 + \eps) \vec{x}^{(t)})
 - (1 + \eps) \eps^2 M \right)\\
&\geq \frac{\eps (1 - \eps)}{1+\eps} \left(f(\OPT)(1-\|(\vec{1}-\vec{z})\circ(1 + \eps) \vec{x}^{(t)}+\vec{z}\|_{\infty}) - g((1+\eps)\vec{x}^{(t)}- 3\eps^2 M \right)
\end{align*}

By observing that $g((1+\eps)\vec{x}^{(t)}) \le (1+\eps)g(\vec{x}^{(t)})$ and adding $\eps(1-\eps)(g(\vec{x}^{(t)})-g(\vec{x}^{(0)}))$ to both sides, we obtain
$$(1+\eps(1-\eps))(g(\vec{x}^{(t + 1)})-g(\vec{x}^{(0)})) \ge  \frac{\eps (1 - \eps)}{1+\eps} (f(\OPT)(1-\|(\vec{1}-\vec{z})\circ(1 + \eps) \vec{x}^{(t)}+\vec{z}\|_{\infty}) - g(\vec{x}^{(0)})-3\eps^2 M)$$

By the bound on $\|\vec{z}\|_{\infty}$ from Lemma~\ref{lem:zmax}, the gain is large enough for the while loop to terminate.

To summarize, we have $\mathrm{poly}(\eps/n) M \leq v_2 \leq \mathrm{poly}(n/\eps) M$, and thus there are $O(\log(n/\eps)/\eps)$ different values of $v_2$.
\end{proof}
Therefore the total number of iterations is $O(1/\eps) \cdot O(\log{n}/\eps) \cdot O(\log(n/\eps)/\eps) = O(\log^2{n} /\eps^3)$.
\end{proof}

\section{Proof of Lemma~\ref{lem:smaxinc}}
\label{app:omitted-proofs}

\begin{proof}[Proof of Lemma~\ref{lem:smaxinc}]
Taking the second order expansion we get
\begin{align*}
&\smax_\eta(\A(\vec{x}+\vec{d})) \\
&= \smax_\eta(\A\vec{x}) + \langle \nabla\smax_\eta(\A\vec{x}), \A\vec{d}  \rangle + \int_0^1 \langle \nabla\smax_\eta(\A(\vec{x} + t\vec{d})) - \nabla\smax_\eta(\A\vec{x}), \A\vec{d} \rangle dt \\
&= 
\smax_\eta(\A\vec{x})+\langle\nabla\smax_\eta(\A\vec{x}),\A\vec{d}\rangle + \int_0^1 \int_0^1 \langle t \cdot \A\vec{d}, \nabla^2 \smax_\eta(\A(\vec{x}+t t' \vec{d})) \cdot \A\vec{d} \rangle  dt' dt\,.
\end{align*}

Now note that, for all $\vec{z} \in \R^m_+$ and all $i, j \in [m]$, we have $(\nabla^2 \smax_{\eta}(\vec{z}))_{ij} = \frac{1}{\eta} (\D(\nabla \smax_{\eta}(\vec{z})))_{ij} - \frac{1}{\eta}((\nabla \smax_{\eta}(\vec{z}))(\nabla \smax_{\eta}(\vec{z})^T))_{ij} \leq \frac{1}{\eta}(\D(\nabla \smax_{\eta}(\vec{z})))_{ij}$, where the inequality follows from the non-negativity of $\nabla \smax_{\eta}(\vec{z})$ and $\eta$.

Additionally, for all $\vec{z}, \vec{d} \in \R^m_+$ and all $j \in [m]$, we have
\begin{align*}
\nabla_j \smax_{\eta}(\vec{z} + \vec{d}) &= \frac{\exp(z_j/\eta) \cdot \exp(d_j /\eta)}{\sum_{\ell = 1}^m \exp(z_{\ell}/\eta) \cdot \exp(d_{\ell} /\eta)}\\
&\leq \frac{\exp(z_j/\eta) \cdot \exp(\|\vec{d}\|_{\infty}/\eta)}{\sum_{\ell = 1}^m \exp(z_{\ell}/\eta)}\\
&= \exp\left(\|\vec{d}\|_{\infty}/\eta \right) \nabla_j \smax_{\eta}(\vec{z})
\end{align*}
where the inequality follows from $d_{\ell} / \eta \geq 0$ for all $\ell$.

Plugging these two facts into the previous bound, and letting $\A_{i:}$ denote the $i$-th row of $\A$, we get that
\begin{align*}
\smax_\eta(\A(\vec{x}+\vec{d}))
&\leq \smax_\eta(\A \vec{x}) + \langle \A^\top \nabla \smax_\eta(\A\vec{x}), \vec{d} \rangle + \frac{ \exp(\|\A \vec{d}\|_\infty / \eta) }{2\eta}   \cdot\langle \A\vec{d}, \mathbb{D}(\nabla \smax_\eta(\vec{x})) \cdot \A \vec{d} \rangle 
\\
&= \smax_\eta(\A\vec{x}) + \langle \A^\top \nabla\smax_\eta(\A\vec{x}), \vec{d} \rangle + \frac{\exp(\|\A \vec{d}\|_\infty/\eta)}{2\eta} \cdot \sum_{i=1}^m (\nabla\smax_\eta(\vec{x}))_i \cdot \langle \A_{i:} , \vec{d} \rangle^2\,.
\end{align*}
We use Cauchy-Schwarz to bound
\begin{align*}
\langle \A_{i:}, \vec{d} \rangle^2 \leq \langle \A_{i:}, \vec{x} \rangle \cdot \langle \A_{i:}, \D(\vec{x})^{\dagger} (\vec{d} \circ \vec{d}) \rangle \,.
\end{align*}
We use this to bound the second order term from the previous inequality by
\begin{align*}
\sum_{i=1}^m (\nabla\smax_\eta(\vec{x}))_i \cdot \langle \A_{i:}, \D(\vec{x})^{\dagger} (\vec{d} \circ \vec{d}) \rangle \cdot \|\A\vec{x}\|_\infty 
= \langle \A^\top \nabla\smax_\eta(\vec{x}), \D(\vec{x})^{\dagger} (\vec{d} \circ \vec{d}) \rangle \cdot \|\A \vec{x}\|_\infty\,.
\end{align*}
Combining this with the previous upper bound yields the conclusion.
\end{proof}

\begin{proof}[Proof of Corollary~\ref{cor:ineq}]
Using Lemma~\ref{lem:smaxinc} we see that setting $\vec{d} = \eta \M \vec{x}$, we obtain our desired bound, as long as $\eta \|\A\vec{d}\|_\infty \leq 1/2$. 

We can prove this property as follows. Because $\eta\le 1/2$ and $\M \preceq \I$,  we have $\vec{d} = \eta \M \vec{x} \leq \vec{x}/2$, point-wise. Together with $\|\A\vec{x}\|_\infty \leq 1$, we have $\|\A\vec{d}\|_\infty \le \|\A \vec{x}\|_\infty / 2\le 1/2$.

Next, we verify that for the specific setting of $\M_{ii}$, we have 
\begin{align*}
\eta \left(\A^\top \nabla\smax_\eta(\A\vec{x})\right)_i \cdot \left(\M\vec{x} + \M^2 \vec{x} \right)_i \leq \vec{c}_i \cdot \eta (\M \vec{x})_i \cdot \frac{1}{\lambda}\,,
\end{align*}
which yields the result, using our upper bound on the change in $\smax_\eta$. For nonzero coordinates of $\vec{d}=\eta \M\vec{x}$, by canceling the common terms on both sides, this condition is equivalent to
\begin{align*}
\frac{\lambda}{\vec{c}_i}\cdot \left(\A^\top \nabla\smax_\eta(\A\vec{x})\right)_i \cdot (1+\M_{ii}) &\leq 1\,,\\
\frac{\lambda}{\vec{c}_i}\cdot\left( \A^\top \nabla\smax_\eta(\A\vec{x})\right)_i \cdot \left(2-\left(\frac{\lambda}{\vec{c}_i}\cdot \A^\top \nabla\smax_\eta(\A\vec{x})\right)_i\right) &\leq 1\,,\\
-\left(1-\left(\frac{\lambda}{\vec{c}_i}\cdot \A^\top \nabla\smax_\eta(\A\vec{x})\right)_i\right)^2 &\leq 0\,{,}
\end{align*}
which is true.
\end{proof}

\bibliographystyle{alpha}
\bibliography{submodular}

\end{document}